\documentclass{article}

\usepackage{microtype}
\usepackage{graphicx}
\usepackage{subcaption}
\usepackage{booktabs} % for professional tables
\usepackage{subfiles}

\usepackage{tabularx}
\newcolumntype{R}{>{\raggedleft\arraybackslash}X}
\newcolumntype{C}{>{\centering\arraybackslash}X}

\usepackage{hyperref}

\usepackage[preprint]{icml2026}

\usepackage{amsmath}
\usepackage{amssymb}
\usepackage{mathtools}
\usepackage{amsthm}

\usepackage[capitalize,noabbrev]{cleveref}

\usepackage{thmtools}
\usepackage{thm-restate}

\theoremstyle{plain}
\newtheorem{theorem}{Theorem}[section]
\newtheorem{proposition}[theorem]{Proposition}
\newtheorem{lemma}[theorem]{Lemma}

\theoremstyle{definition}

\theoremstyle{remark}
\newtheorem{remark}[theorem]{Remark}

\usepackage{commands}
\newcommand{\emphasize}[1]{\textit{#1}}

\usepackage{xcolor}
\usepackage[textsize=tiny]{todonotes}

\icmltitlerunning{Gromov-Wasserstein at Scale, Beyond Squared Norms}

\begin{document}

% \nocite{*} %TMP

\twocolumn[
%\icmltitle{Scalable Gromov-Wasserstein Solvers beyond Squared Norms}

\icmltitle{Gromov-Wasserstein at Scale, Beyond Squared Norms}

  % It is OKAY to include author information, even for blind submissions: the
  % style file will automatically remove it for you unless you've provided
  % the [accepted] option to the icml2026 package.

  % List of affiliations: The first argument should be a (short) identifier you
  % will use later to specify author affiliations Academic affiliations
  % should list Department, University, City, Region, Country Industry
  % affiliations should list Company, City, Region, Country

  % You can specify symbols, otherwise they are numbered in order. Ideally, you
  % should not use this facility. Affiliations will be numbered in order of
  % appearance and this is the preferred way.
  \icmlsetsymbol{equal}{*}

  \begin{icmlauthorlist}
    \icmlauthor{Guillaume Houry}{heka}
    \icmlauthor{Jean Feydy}{heka}
    \icmlauthor{François-Xavier Vialard}{ligm}
  \end{icmlauthorlist}

  \icmlaffiliation{heka}{Inria, Université Paris Cité, Inserm, HeKA team, F-75015 Paris, France}
  \icmlaffiliation{ligm}{LIGM, Université Gustave Eiffel, Marne-la-Vallée, France}

  \icmlcorrespondingauthor{Guillaume Houry}{guillaume.houry@inria.fr}

  % You may provide any keywords that you find helpful for describing your
  % paper; these are used to populate the "keywords" metadata in the PDF but
  % will not be shown in the document
  \icmlkeywords{Point Set Registration, Optimal Transport, Gromov-Wasserstein}

  \vskip 0.3in
]

% this must go after the closing bracket ] following \twocolumn[ ...

% This command actually creates the footnote in the first column listing the
% affiliations and the copyright notice. The command takes one argument, which
% is text to display at the start of the footnote. The \icmlEqualContribution
% command is standard text for equal contribution. Remove it (just {}) if you
% do not need this facility.

% Use ONE of the following lines. DO NOT remove the command.
% If you have no special notice, KEEP empty braces:
\printAffiliationsAndNotice{}  % no special notice (required even if empty)
% Or, if applicable, use the standard equal contribution text:
% \printAffiliationsAndNotice{\icmlEqualContribution}

\begin{abstract}
A fundamental challenge in data science is to match disparate point sets with each other. While optimal transport efficiently minimizes point \emph{displacements} under a bijectivity constraint, it is inherently sensitive to rotations. Conversely, minimizing \emph{distortions} via the Gromov-Wasserstein (GW) framework addresses this limitation but introduces a non-convex, computationally demanding optimization problem. In this work, we identify a broad class of distortion penalties that reduce to a simple alignment problem within a lifted feature space. Leveraging this insight, we introduce an iterative GW solver with a linear memory footprint and quadratic (rather than cubic) time complexity. Our method is differentiable, comes with strong theoretical guarantees, and scales to hundreds of thousands of points in minutes. This efficiency unlocks a wide range of geometric applications and enables the exploration of the GW energy landscape, whose local minima encode the symmetries of the matching problem.

\end{abstract}

\section{Introduction}

\paragraph{Point Set Registration.}
From 3D point clouds to imaging datasets,
many objects are best described as collections of
samples in a vector space. 
Establishing correspondences between these sets is an important task that facilitates critical downstream operations such as cross-domain information transfer, domain adaptation, and the formulation of geometric loss functions for shape registration and generative modeling \cite{flamary_domain_adaptation, genevay2018learning}.

Taking as input a source and a target distribution, defined up to permutations of the samples, we look for a coupling that matches corresponding points and structures with each other.
When both distributions are embedded in the same vector space, endowed with a domain-specific metric, a first idea is to match points from the source distribution with their closest neighbors in the target set.
Accordingly, correspondences between 3D point clouds may be computed using the standard Euclidean metric on plain $xyz$ coordinates
or custom features that leverage local shape contexts \cite{ICP, FPFH}.

\begin{figure*}[t]
\centering
\includegraphics[width=\linewidth]{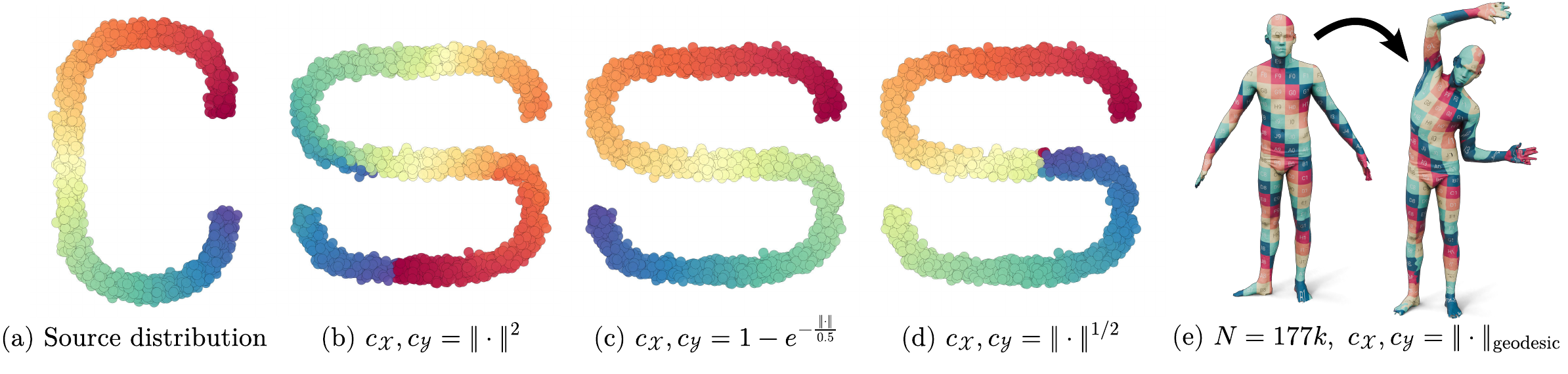}
\caption{(a) We optimize the GW objective of Eq.~\eqref{eq:GW_loss} to match a source distribution of points in the unit square (``C'') with a target (``S''). Colors let us visualize the destination of every source point.
(b)~The preservation of squared distances between points
has been studied extensively, but prioritizes the alignment
of principal axes over smoothness.
We propose a scalable GW solver for a broad class of penalties
that may promote the preservation of topology~(c) 
or find a balance between local and global structure~(d).
(e)~This opens the door to applications on high-resolution data, such as those silhouettes sampled with 177k points each.
}
\label{fig:introduction}
\end{figure*}

\paragraph{Optimal Transport (OT).}
To avoid degenerate couplings that match all source points with a single target,
a common approach is to look for a \emph{bijective} assignment.
Under this constraint, minimizing the average distance between corresponding points leads to a linear optimization problem
known as the Earth Mover, Monge-Kantorovitch, Linear Assignment or Optimal Transport (OT) problem in different communities.
As detailed in \cite{peyre2019computational}, measure theory generalizes this framework to clouds that contain different numbers of points, weighted samples or continuous distributions. Bijectivity constraints can then be enforced exactly, or relaxed to account for outliers and variations of the point sampling density \cite{unbalanced_ot}.

While exact solvers for the OT problem struggle to scale beyond a few thousand points in difficult cases, iterative methods have been designed to handle large distributions at a small cost in the accuracy of the coupling.
We may cite the auction algorithm which remains ideally suited to sparse problems in operations research \cite{auctions} or semi-discrete solvers which have become invaluable for physics simulations \cite{levy_semidiscrete, merigot_semidiscrete}.
In machine learning research, solvers based on entropic regularization have emerged as the standard approach for computing OT couplings between large distributions sampled in vector spaces.
These methods are based on variations of the Sinkhorn algorithm,
which streams well on massively parallel hardware and provides smooth gradients
\cite{sinkhorn1967diagonal,kosowsky1994invisible,cuturi2013sinkhorn}.
Modern implementations leverage annealing and multiscale heuristics \cite{gerber2017multiscale,schmitzer2019stabilized,feydy2020analyse},
scaling to millions of points to
unlock a wide range of applications \cite{schiebinger2019optimal, shen2021accurate, qu2023power}. %, buze2024anisotropic}.

\paragraph{Gromov-Wasserstein (GW).}
In spite of these achievements, OT still suffers from two fundamental limitations:
it requires a common embedding or at least a metric between the source and target samples,
which is problematic for multimodal applications;
and it is not robust to large non-uniform shifts such as global rotations.

To address these issues, an appealing strategy is to look for alignments that minimize the \emph{distortion} induced by the coupling instead of the \emph{displacement} of the samples.
When framed in the language of measure theory, this perspective leads to the Gromov-Wasserstein (GW) optimization problem: a versatile framework capable of matching diverse data types, from graphs to continuous probability distributions \cite{memoli2011gromov}.
By focusing on intrinsic structure preservation instead of extrinsic feature engineering, GW has rapidly emerged as a convenient tool for graph machine learning \cite{chowdhury2019gromov,xu2019scalable,brogat2022learning}, image processing \cite{thual2022aligning,takeda2025unsupervised,beier2025tangential,salmona2023gromov,zhang2021deepacg} and geometric data analysis \cite{chowdhury2021generalized,clark2025generalized}.

%\paragraph{Entropic Regularization (EGW).}
Despite this growing popularity, GW remains hard to solve.
It is equivalent to the quadratic assignment problem,
which implies that finding a global optimum is NP-hard in the general case
while standard local minimization methods are prohibitively expensive in time and memory \cite{titouan2019optimal}.
As in the linear OT case, this context motivates the development of solvers based on the Sinkhorn iterations for an Entropy-regularized Gromov-Wasserstein (EGW) problem \cite{peyre2016gromov,solomon2016entropic,xu2019gromov}.
This technique is central to all state-of-the-art GW solvers \cite{kerdoncuff2021sampled,scetbon2022linear,li2023efficient}
but comes at a cost:
the regularization creates geometric biases that degrade both the quality of the optimal matching and the convergence speed of EGW solvers.

%\paragraph{Distortion Penalties.}
Meanwhile, the properties of an optimal GW coupling heavily depend on
the objective used to penalize distortions.
%making CNT methods compatible with euclidean approximations of geodesic distances commonly used in the computer graphics community 
%\cite{nadler2005diffusion,panozzo2013weighted,lipman2010biharmonic}.
%
Recent works have largely focused on the preservation of squared Euclidean distances between pairs of points, thanks to algebraic identities that simplify the optimization landscape \cite{vayer2020contribution,delon2022gromov,wang2023neural,zhang2024gromov,dumont2025existence};
this was recently exploited by \cite{rioux2024entropic} to derive a robust dual solver for EGW in this setting.
Squared-norm EGW can also be seen as a concave-penalized OT problem, an algorithmic framework that has been thoroughly investigated in \cite{sebbouh2024structured}.
However, as illustrated in Fig.~\ref{fig:introduction}b, squared norm penalties are biased towards long-range interactions and are thus sensitive to outliers, which limits the applicability of this line of work to real-world problems.

\paragraph{Contributions.} As a middle ground between general distortion penalties (that lead to complex optimization landscapes) and the preservation of squared distances between points (which is too brittle for most applications) we study the preservation of pairwise quantities which are \emph{Conditionally of Negative Type} (CNT) \cite{maron2018probably,sejourne2022generalized}.
As illustrated in Fig.~\ref{fig:introduction}, this class of GW problems is large enough to encode a broad range of behaviors, with attention paid to the preservation of long- or short-range structures depending on the target application.

We present three main contributions. First, we demonstrate that the optimal assignment in this setting decomposes into a linear map between feature spaces, followed by a standard OT coupling under the squared Euclidean distance. This structural insight reveals that the non-convex GW landscape is parameterized by a transformation matrix with dimensions determined solely by the preserved quantities. Second, we prove that the regularization biases typical of EGW solvers are either absent or easily mitigated within the CNT framework. Finally, we leverage these findings to develop a versatile, robust, and differentiable EGW solver that consistently outperforms state-of-the-art methods.
Our code is publicly available at \url{github.com/guillaumeHoury/egw-solvers}.

\section{Background and Notations}

\paragraph{Optimal Transport Plans.} 

We now introduce our main definitions
and refer to \cite{peyre2019computational}
for additional background or intuitions.
Let the source $\alpha \in \prob{\X}$ and the target $\beta \in \prob{\Y}$ be probability measures with compact support over two topological spaces $\X$ and $\Y$.
Let $\coupling{\alpha}{\beta} \subset \prob{\X \times \Y}$ be the set of \emphasize{couplings} (or \emphasize{transport plans}) between $\alpha$ and $\beta$, i.e. the set of probability measures $\pi$ over $\X \times \Y$ that satisfy the marginal constraints:
\begin{equation*}
     \int \diff\pi(\cdot,y) = \alpha \quad \text{ and } \quad \int\diff\pi(x,\cdot) = \beta~.
\end{equation*} 

Given a measurable cost function $\func{c}{\X \times \Y}{\R}$, the \emphasize{optimal transport} (OT) problem between $\alpha$ and $\beta$ reads:
\begin{equation*}
   \OT(\alpha, \beta) := \min_{\pi \in \coupling{\alpha}{\beta}} \int c(x,y)\, \diff \pi(x,y)~.
\end{equation*}
When the measures $\alpha$ and $\beta$ are supported on at most $\NN$ points, this linear program can be solved in $\mathcal{O}(\NN^3)$ time. 

\paragraph{The Sinkhorn Algorithm.} 
A common approach to accelerate optimal transport solvers is to add an entropic penalty to the objective.
For any positive temperature $\varepsilon > 0$, the \emphasize{entropic optimal transport} (EOT) problem reads:
\begin{multline*}
\OT_{\varepsilon}(\alpha, \beta) := \min_{\pi \in \coupling{\alpha}{\beta}} \int c \cdot \diff \pi + \varepsilon \KL{\pi}{\alpha \otimes \beta} \\
\text{where }\quad\KL{\pi}{\alpha \otimes \beta} = \int \log \left(\frac{\diff \pi}{\diff \alpha \diff \beta} \right) \diff \pi~.
\end{multline*}
The \emphasize{Sinkhorn algorithm} reduces this problem to a maximization over the set of pairs of real-valued, continuous functions $(f,g) \in \Cont{\X} \times \Cont{\Y}$.
%It alternatively updates:
%\begin{equation*}
%    f_{s+1} = \Sink{\varepsilon,\beta}(g_s) \quad \text{ and } \quad g_{s+1} = \Sink{\varepsilon,\alpha}(f_{s+1})~,
%\end{equation*}
%where $\Sink{\varepsilon,\beta}$ and $\Sink{\varepsilon,\alpha}$ are the Sinkhorn operators.
We provide details in \cref{appendix:sinkhorn},
and note that the optimal EOT coupling $\pi^*$ can then be reconstructed from the optimal pair $(f^*, g^*)$ with:
\begin{equation}
    \frac{\diff\pi^*}{\diff \alpha \diff \beta}(x, y) = \exp \left( \frac{f^*(x) + g^*(y) - c(x,y)}{\varepsilon} \right)~.
    \label{eq:sinkhorn_pi}
\end{equation}

\paragraph{Gromov-Wasserstein (GW).}
As discussed in the introduction, penalizing distortions instead of displacements is desirable in many applications.
Given two \emphasize{base costs} $\func{c_\X}{\X \times \X}{\R}$ and $\func{c_\Y}{\Y \times \Y}{\R}$, the \emphasize{Gromov-Wasserstein} (GW) problem thus reads: 
\begin{gather}%\begin{multline*}
\GW(\alpha, \beta) := \min_{\pi \in \coupling{\alpha}{\beta}} \Loss(\pi)~, \quad \text{where} \label{eq:GW_loss} \\
\Loss(\pi) :=\!\int \left( c_\X(x,x')-c_\Y(y,y') \right)^2 \diff\pi(x,y) \diff\pi(x',y')~. \nonumber
\end{gather}%\end{multline*}

Under mild assumptions, $\GW$ quantifies how far distributions are from being isometric to each other:
\begin{restatable}{proposition}{gwisometry}
\label{prop:gwisometry}
Let $c_\X$ and $c_\Y$ be two symmetric functions with non-negative values such that: 
\begin{equation*}
    c_\X(x,x')=0 \Leftrightarrow x = x' \quad \!\text{and}\! \quad c_\X(y,y')=0 \Leftrightarrow y = y'~.
\end{equation*}
Then, $GW(\alpha, \beta) = 0$ if and only if $\alpha$ and $\beta$ are isometric, i.e. 
there exists an application $\func{I}{\X}{\Y}$ that pushes
$\alpha$ onto $\beta$ such that for all $x$ and $x'$ in the support of $\alpha$:
\begin{equation*}
    c_\X(x,x') = c_\Y(I(x),I(x'))~.
\end{equation*}
\end{restatable}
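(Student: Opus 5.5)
The plan is to prove the two implications separately; the substance lies in the forward one. I will assume that a minimizer $\pi^*$ of $\Loss$ over $\coupling{\alpha}{\beta}$ exists. This is standard: $\coupling{\alpha}{\beta}$ is weakly compact because the supports are compact, and $\Loss$ is lower semicontinuous when the base costs are continuous.

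For the easy direction, suppose $I$ pushes $\alpha$ onto $\beta$ and preserves costs on the support of $\alpha$. I will take the deterministic coupling $\pi_I := (\mathrm{id}, I)_\#\alpha$. Its first marginal is $\alpha$, and its second marginal is $I_\#\alpha = \beta$, so $\pi_I \in \coupling{\alpha}{\beta}$. Then
\[
\Loss(\pi_I) = \int \big(c_\X(x,x') - c_\Y(I(x),I(x'))\big)^2 \,\diff\alpha(x)\,\diff\alpha(x') = 0 ,
\]
and since $\Loss \geq 0$, this gives $\GW(\alpha,\beta)=0$.

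For the converse, let $\pi^*$ be an optimal coupling with $\Loss(\pi^*)=0$. The integrand is continuous and non-negative, so it vanishes on the whole support of $\pi^*\otimes\pi^*$. Since $\operatorname{supp}(\pi^*\otimes\pi^*) = \operatorname{supp}\pi^* \times \operatorname{supp}\pi^*$, we get $c_\X(x,x') = c_\Y(y,y')$ for all $(x,y),(x',y') \in \operatorname{supp}\pi^*$. The key step is to show that $\operatorname{supp}\pi^*$ is the graph of a map. Suppose $(x,y)$ and $(x,y')$ both lie in the support. Then $c_\Y(y,y') = c_\X(x,x) = 0$, and the separation hypothesis on $c_\Y$ forces $y=y'$. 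Hence for each $x$ in the projection of $\operatorname{supp}\pi^*$ there is exactly one $y =: I(x)$ with $(x,y)\in\operatorname{supp}\pi^*$. The projection of the support is closed and has full $\alpha$-measure, so it contains $\operatorname{supp}\alpha$, and $I$ is defined on all of $\operatorname{supp}\alpha$ (extend it arbitrarily elsewhere).

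It remains to check that $I$ is measurable and that $\pi^* = (\mathrm{id},I)_\#\alpha$. Measurability comes from the closed-graph argument: the graph is compact, so $I$ is continuous on the compact projection. Once the support of $\pi^*$ is the graph of $I$, disintegration shows that $\pi^*$ equals $(\mathrm{id},I)_\#\alpha$. This gives $I_\#\alpha=\beta$, and cost preservation on $\operatorname{supp}\alpha$ is exactly the identity obtained above.

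I expect the main obstacle to be this passage from "zero almost everywhere" to "zero on the entire support". It is what requires continuity of the base costs, an assumption implicit in the statement that I would make explicit. If only measurability were available, I would instead work on a full-measure set and obtain $I$ $\alpha$-almost everywhere, which is enough for a push-forward statement. A symmetric argument, using the separation property of $c_\X$, shows that $I$ is injective on the support. This is not needed for the statement, but it confirms that the relation is symmetric.
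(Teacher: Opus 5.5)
Your proof is correct and follows essentially the same route as the paper, which adapts M\'emoli's argument: the integrand vanishes on $\operatorname{supp}\pi\times\operatorname{supp}\pi$, separation of $c_\Y$ forces $\operatorname{supp}\pi$ to be a graph, and $I_\#\alpha=\beta$ is read off from the second marginal. You also make explicit several points the paper leaves implicit: continuity of the costs, existence of a minimizer, existence of a $y$ for every $x\in\operatorname{supp}\alpha$ via the compact projection, and measurability of $I$. The paper additionally asserts that $I$ is a bijection, which, as you note, is not needed.
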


\paragraph{Entropic Regularization.} As in the OT case, the \emphasize{Entropic Gromov-Wasserstein (EGW)} problem corresponds to:
\begin{equation}
\label{eq:entropic_gromov_wasserstein}
\GW_{\varepsilon}(\alpha, \beta) := \min_{\pi \in \coupling{\alpha}{\beta}} \Loss(\pi) + \varepsilon \KL{\pi}{\alpha \otimes \beta}.
\end{equation}
This regularization enables the use of scalable methods based on the Sinkhorn algorithm, but also introduces numerical biases and instabilities that we discuss in \cref{subsection:appendix:solver_cvg}.

\begin{figure*}[t]
    \centering
    \includegraphics[height=4cm]{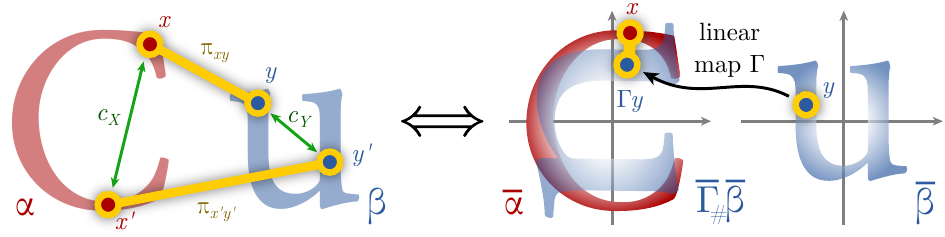}
    \caption{(left) In order to match a source distribution $\alpha$ (red ``C'') with a target distribution $\beta$ (blue ``u''), the GW objective
    of Eq.~\eqref{eq:GW_loss} penalizes distortions
    between corresponding pairs of points.
    (right) \cref{corrolary:egw_icp_pi} shows that up to known embeddings in Hilbert spaces and the optimization of a linear alignment $\Gamma$,
    we can reduce this problem to the computation of an OT 
    plan for the squared norm cost.
    }
    \label{fig:main_theorem}
\end{figure*}

\paragraph{Embeddable Costs.}

A non-negative, continuous cost $\func{c}{\X \times \X}{\R^+}$ is 
\emphasize{conditionally of negative type} (CNT) if it is symmetric, satisfies $c(x, x) = 0$ for all $x \in \X$\footnote{This condition can be relaxed by considering the cost $\tilde c(x,y) = c(x,y) - \frac{1}{2}(c(x,x) + c(y,y))$.} and 
is such that for any finite collection of points $x_1,\dots, x_n \in \X$ and coefficients $\lambda_1,\dots,\lambda_n \in \R$:
\begin{equation*}
  \sum_{i=1}^n \lambda_i = 0 ~~\Longrightarrow~~ \sum_{i=1}^n \lambda_i \lambda_j \cdot c(x_i, x_j) \leq 0~.
\end{equation*}
Moreover, we say that a cost $c$ is \emph{definite} CNT if $c(x,x')=0$
if and only if $x = x'$. 
We refer to \citep[Appendix C]{bekka2007kazhdan} for an introduction to this class of pairwise functions that includes tree distances, hyperbolic geodesic distances, spherical distances
and all $p$-powered Euclidean distances $c(x,x') = \norm{x-x'}^p$
for $0 < p \leq 2$.

CNT costs are relevant in machine learning because they are to squared Euclidean norms what reproducing kernels are to scalar products \cite{berlinet2011reproducing}.
This is made clear in the following characterizations:
\begin{theorem}[\citet{schoenberg1938metric}]
\label{theorem:cnt_embedding}
Let $c$ be a symmetric function on $\X \times \X$ satisfying $c(x,x) = 0$. The following statements are equivalent:
($i$) the cost $c$ is CNT,  ($ii$) there exists a real Hilbert space $\Hilb$ and a continuous mapping $\func{\varphi}{\X}{\Hilb}$ such that
for all  $x$ and $x'$ in $\X$,
\begin{equation}
    \label{eq:schoenberg_embedding}
     \quad c(x, x') ~=~ \norm{\varphi(x) - \varphi(x')}_{\Hilb}^2~, \text{~and}
\end{equation}
($iii$) for every $\lambda > 0$, $e^{-\lambda c(\cdot,\cdot)}$ is a positive definite kernel.
\end{theorem}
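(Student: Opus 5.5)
The plan is to prove the cycle of implications (ii)$\Rightarrow$(i)$\Rightarrow$(ii) and then (i)$\Leftrightarrow$(iii), using the classical Gram-matrix construction.

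\textbf{(ii)$\Rightarrow$(i).} This is a direct computation. Take points $x_1,\dots,x_n$ and weights with $\sum_i\lambda_i=0$, and expand
\[
\norm{\varphi(x_i)-\varphi(x_j)}^2=\norm{\varphi(x_i)}^2+\norm{\varphi(x_j)}^2-2\langle\varphi(x_i),\varphi(x_j)\rangle .
\]
In $\sum_{i,j}\lambda_i\lambda_j c(x_i,x_j)$, the two norm terms vanish because $\sum_i\lambda_i=0$. What remains is $-2\norm{\sum_i\lambda_i\varphi(x_i)}^2\le 0$. Symmetry of $c$ and $c(x,x)=0$ are immediate from the formula.

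\textbf{(i)$\Rightarrow$(ii).} Fix a base point $x_0\in\X$ and define
\[
k(x,x'):=\tfrac12\bigl(c(x,x_0)+c(x',x_0)-c(x,x')\bigr).
\]
I will show that $k$ is positive semidefinite. Given $x_1,\dots,x_n$ and arbitrary $\lambda_1,\dots,\lambda_n$, append the point $x_0$ with weight $\lambda_0=-\sum_i\lambda_i$, so the augmented weights sum to zero. Expanding the CNT inequality for this augmented family and using $c(x_0,x_0)=0$ should give exactly
\[
\sum_{i,j}\lambda_i\lambda_j k(x_i,x_j)\ge 0 .
\]
This bookkeeping is the main step to check carefully. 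Next, take $\Hilb$ to be the reproducing kernel Hilbert space of $k$ (equivalently, complete the span of the functions $k(x,\cdot)$ modulo null vectors), and set $\varphi(x)=k(x,\cdot)$. Then
\[
\norm{\varphi(x)-\varphi(x')}^2=k(x,x)+k(x',x')-2k(x,x').
\]
Since $k(x,x)=c(x,x_0)$, this equals $c(x,x')$. Continuity of $\varphi$ follows from continuity of $c$, because $\norm{\varphi(x)-\varphi(x')}^2=c(x,x')\to 0$ as $x'\to x$.

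\textbf{(ii)$\Rightarrow$(iii).} Write
\[
e^{-\lambda c(x,x')}=e^{-\lambda\norm{\varphi(x)}^2}\,e^{2\lambda\langle\varphi(x),\varphi(x')\rangle}\,e^{-\lambda\norm{\varphi(x')}^2}.
\]
The middle factor is positive definite: expand the exponential as a power series, note that Schur products of positive semidefinite kernels are positive semidefinite, and that limits of such kernels stay positive semidefinite. Multiplying by $f(x)f(x')$ with $f(x)=e^{-\lambda\norm{\varphi(x)}^2}$ preserves positive semidefiniteness.

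\textbf{(iii)$\Rightarrow$(i).} For weights with $\sum_i\lambda_i=0$, positive definiteness gives $\sum_{i,j}\lambda_i\lambda_j e^{-tc(x_i,x_j)}\ge 0$ for every $t>0$. Using $\sum_{i,j}\lambda_i\lambda_j=0$, rewrite this as
\[
\sum_{i,j}\lambda_i\lambda_j\,\frac{e^{-tc(x_i,x_j)}-1}{t}\ge 0 .
\]
Letting $t\to 0^+$ gives $-\sum_{i,j}\lambda_i\lambda_j c(x_i,x_j)\ge 0$, which is the CNT inequality.

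The main obstacle is the construction of $\Hilb$ in (i)$\Rightarrow$(ii): verifying that $k$ is positive semidefinite via the augmented family, and that the quotient-and-completion step yields a genuine Hilbert space. The remaining implications are routine.
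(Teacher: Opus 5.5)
The paper gives no proof of this theorem. It is quoted from \citet{schoenberg1938metric}, and \cref{prop:cnt_positive_kernel} right after it is likewise only cited. Your argument is the standard classical proof, and it is correct.

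The bookkeeping you flagged does close. With $\lambda_0=-\sum_{i\ge1}\lambda_i$ and $c(x_0,x_0)=0$, one gets the exact identity $\sum_{i,j=0}^n\lambda_i\lambda_j c(x_i,x_j)=-2\sum_{i,j\ge1}\lambda_i\lambda_j k(x_i,x_j)$. So your (i)$\Rightarrow$(ii) step is the ``only if'' half of \cref{prop:cnt_positive_kernel}, followed by the Moore--Aronszajn construction. The other three implications are fine as written.

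One caveat concerns the statement rather than your reasoning. The paper's definition of CNT includes continuity and non-negativity of $c$, and (ii) asks for a continuous $\varphi$. The hypothesis, however, only assumes $c$ is symmetric with zero diagonal. Non-negativity is harmless: it follows from the CNT inequality with two points and weights $(1,-1)$.

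Continuity, by contrast, cannot be extracted from (iii). Take the discrete cost $c(x,x')=1$ for $x\neq x'$ on $\mathbb{R}$. It is realized by the discontinuous map $x\mapsto e_x/\sqrt{2}$ into $\ell^2(\mathbb{R})$. Your (ii)$\Rightarrow$(iii) argument never uses continuity, so this $c$ satisfies (iii). Yet $c$ is not continuous, so it satisfies neither (i) nor (ii) in the paper's sense.

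Your proof therefore tacitly treats continuity of $c$ as a standing assumption. You use it to get continuity of $\varphi$, and your (iii)$\Rightarrow$(i) step only produces the inequality. That is clearly the intended reading, but you should state it explicitly. Without it, the theorem holds in Schoenberg's classical form, where ``CNT'' means just the inequality and $\varphi$ is an arbitrary map.
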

Note that the embedding $\varphi$ is not uniquely defined, but is necessarily injective if
$c$ is a definite CNT cost.
\begin{proposition}[\citet{schoenberg1938metric}] \label{prop:cnt_positive_kernel}
Let $x_0\!\in \!\X$. A cost $c$ is CNT if and only if it induces a positive definite kernel:
\begin{equation*}
    k(x,x') ~=~ \big( c(x, x_0) + c(x_0, x') - c(x, x') \big) \,/\, 2~.
\end{equation*}
\end{proposition}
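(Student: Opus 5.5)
We prove the two implications separately, working throughout with the kernel $k(x,x') = \tfrac12\big(c(x,x_0)+c(x_0,x')-c(x,x')\big)$. Recall that $k$ is positive definite when $\sum_{i,j}\lambda_i\lambda_j k(x_i,x_j)\ge 0$ for all finite families of points and all real coefficients. The key observation is that, for coefficients summing to zero, the terms of $k$ involving $x_0$ cancel. So the quadratic form of $k$ restricted to zero-sum coefficients is exactly $-\tfrac12$ times that of $c$. The whole proof consists in passing between general coefficients and zero-sum ones by adding a weight at the base point $x_0$.

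\textbf{($k$ positive definite $\Rightarrow$ $c$ CNT).} Take points $x_1,\dots,x_n$ and coefficients $\lambda_1,\dots,\lambda_n$ with $\sum_i\lambda_i=0$. Expanding $\sum_{i,j}\lambda_i\lambda_j k(x_i,x_j)$ gives three pieces. The piece $\tfrac12\sum_{i,j}\lambda_i\lambda_j c(x_i,x_0)$ factorizes as $\tfrac12\big(\sum_j\lambda_j\big)\sum_i\lambda_i c(x_i,x_0)$, which vanishes. The symmetric piece in $x'$ vanishes for the same reason. We are left with
\begin{equation*}
\sum_{i,j}\lambda_i\lambda_j k(x_i,x_j) = -\tfrac12\sum_{i,j}\lambda_i\lambda_j c(x_i,x_j).
\end{equation*}
Positive definiteness of $k$ makes the left side nonnegative, hence $\sum_{i,j}\lambda_i\lambda_j c(x_i,x_j)\le 0$. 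Symmetry of $c$ and $c(x,x)=0$ are part of the standing hypotheses, so $c$ is CNT.

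\textbf{($c$ CNT $\Rightarrow$ $k$ positive definite).} This is the direction that needs a construction. Given arbitrary points $x_1,\dots,x_n$ and arbitrary coefficients $\lambda_1,\dots,\lambda_n$, adjoin the point $x_0$ with coefficient $\lambda_0 := -\sum_{i\ge1}\lambda_i$. The extended family $\lambda_0,\dots,\lambda_n$ sums to zero. Since $k(x_0,\cdot)=\tfrac12\big(c(x_0,x_0)+c(x_0,\cdot)-c(x_0,\cdot)\big)=0$, we have $k(x_0,\cdot)=k(\cdot,x_0)=0$. Therefore the quadratic form of $k$ over the extended family equals the original form $\sum_{i,j\ge1}\lambda_i\lambda_j k(x_i,x_j)$. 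By the cancellation identity above, this extended form also equals $-\tfrac12\sum_{i,j\ge 0}\lambda_i\lambda_j c(x_i,x_j)$, which is nonnegative because $c$ is CNT. Symmetry of $k$ follows directly from symmetry of $c$.

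The only step requiring care is the second implication, specifically the trick of absorbing the total mass at $x_0$ using $k(x_0,\cdot)=0$. If $x_0$ happens to coincide with some $x_i$, the argument is unaffected, since repeated points are allowed in the definitions. An alternative route uses \cref{theorem:cnt_embedding}: writing $c(x,x') = \norm{\varphi(x)-\varphi(x')}^2$ and expanding the squares gives $k(x,x')=\langle\varphi(x)-\varphi(x_0),\varphi(x')-\varphi(x_0)\rangle$, which is a Gram kernel and hence positive definite. The direct computation above is more elementary and avoids relying on that theorem.
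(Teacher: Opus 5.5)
The paper does not prove this proposition; it only attributes it to Schoenberg. Your argument is the standard classical proof, and it is correct. It uses the two right ingredients: the $x_0$-terms of $k$ cancel against zero-sum coefficients, and the total mass can be absorbed at $x_0$ because $k(x_0,\cdot)=k(\cdot,x_0)=0$, which is exactly where $c(x_0,x_0)=0$ enters.

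A few of your side choices are also right:
\begin{itemize}
\item Reading ``positive definite'' as positive semi-definite is necessary, since any Gram matrix containing $x_0$ is singular.
\item Leaning on the standing symmetry and zero-diagonal hypotheses in the converse direction is necessary. Without symmetry it fails: $c(x,x')=f(x)-f(x')$ yields $k\equiv 0$. The non-negativity required by the paper's definition of CNT then follows from your inequality with two points and $\lambda=(1,-1)$.
\item Preferring the direct computation over \cref{theorem:cnt_embedding} is sound. The usual proof of that theorem builds $\varphi$ from the reproducing kernel Hilbert space of this very kernel $k$, so that route would be circular.
\end{itemize}
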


\paragraph{Hilbert-Schmidt Operators.} A linear map between Hilbert spaces $\func{\Gamma}{\Hilb_\Y}{\Hilb_\X}$ is a \emph{Hilbert-Schmidt (HS) operator} if it is the limit of a sum of rank-$1$ operators: 
\begin{equation*}
    u v^{\perp}: w \longmapsto \scal{v}{w} u \quad
    \text{for } u \in \Hilb_\X \text{ and } v \in \Hilb_\Y.
\end{equation*} 
These operators form a Hilbert space $\Hspace = \HS(\Hilb_\Y, \Hilb_\X)$ that generalizes the notion of matrices to infinite dimensions: in particular, when $\Hilb_\X$ and $\Hilb_\Y$ are Euclidean spaces, the HS norm $\norm{\cdot}_{\HS}$ is identical to the Frobenius norm.

%\paragraph{Hilbert-Schmidt Operators.} Let $\Hilb_\X$ and $\Hilb_\Y$ be two Hilbert spaces.
%A linear map $\func{\Gamma}{\Hilb_\X}{\Hilb_\Y}$ is a \emph{Hilbert-Schmidt (HS) operator} if 
%\begin{equation*}
%   \norm{\Gamma}_{\HS}^2 := \sum_i \norm{\Gamma e_i}_{\Hilb_Y}^2 < + \infty 
%\end{equation*} for any orthonormal basis $(e_i)$ of $\Hilb_\X$.
%These operators form a Hilbert space, $\HS(\Hilb_\X, \Hilb_\Y)$, whose scalar product induces the norm $\norm{\cdot}_{\HS}$.
%It is generated by rank-$1$ operators: 
%\begin{equation*}
%    \text{For } x \in \Hilb_\X \text{ and } y \in \Hilb_\Y, \quad y \otimes x^*: z \longmapsto \scal{x}{z} y.
%\end{equation*} 
%Intuitively, HS operators generalize matrices to infinite-dimension: in the finite case, $\HS(\R^\DD, \R^\EE)$ coincides with the set of matrices $\R^{\DD \times \EE}$ and the HS norm is identical to the Frobenius norm.
\label{section:background}

\section{Gromov-Wasserstein with CNT Costs}
\label{section:gw_cntcosts}

\paragraph{GW features.} We introduce a non-linear embedding that best encodes the geometry of the GW problem.

\begin{restatable}[GW-embeddings]{definition}{gwembeddings}
\label{def:gw_embeddings}
Let $\alpha$ be a probability measure on $\X$,
$\func{c_\X}{\X\times \X}{\R^+}$ be a definite CNT cost
and $\func{\varphi}{\X}{\Hilb_\X}$ be an injective embedding
provided by \cref{theorem:cnt_embedding} to represent $c_\X$.
Since Eq.~\eqref{eq:schoenberg_embedding} is invariant by translations,
we can assume that $\int \varphi(x) \,\diff \alpha(x)~=~ 0_{\Hilb_\X}$.
Then, we say that the injective map:
\begin{equation*}
    \Phi : x\in \X \mapsto 
    \big( \, \varphi(x), \tfrac{1}{2}\|\varphi(x)\|^2 \, \big) \in \Hilb_\X \times \R
\end{equation*}
is a GW-embedding of the distribution $\alpha$ with respect to $c_\X$.
Similarly, we extend any Hilbert-Schmidt operator $\Gamma \in \Hspace$ to our product feature space with:
\begin{equation*}
   \overline{\Gamma}:
    (y, s) \in \Hilb_\Y \times \R 
    \mapsto 
    (\Gamma y, s) \in \Hilb_\X \times \R ~.
\end{equation*}
\end{restatable}

\paragraph{Example.}
When $c_\X$ and $c_\Y$ are squared Euclidean norms
on the finite-dimensional spaces $\X=\R^\mathrm{D}$ and $\Y=\R^\mathrm{E}$,
the CNT embedding spaces coincide with the base spaces
as $\Hilb_\X = \R^\mathrm{D}$ and $\Hilb_\Y = \R^\mathrm{E}$.
The GW-embeddings of two distributions
$\alpha \in \prob{\X}$
and
$\beta \in \prob{\Y}$ read:
\begin{align*}
    \Phi:x\in \R^\mathrm{D} &\mapsto \big(
    x - x_\alpha, \tfrac{1}{2}\|x - x_\alpha\|^2
    \big) \in \R^\mathrm{D+1}~~\text{and} \\
    \Psi:y\in \R^\mathrm{E} &\mapsto \big(
    y - y_\beta, \,\tfrac{1}{2}\|y - y_\beta\|^2\,
    \big) \in \R^\mathrm{E+1}~,
\end{align*}
where
$x_\alpha = \int z \,\diff \alpha(z)$
and
$y_\beta = \int z \,\diff \beta(z)$
denote the centers of mass of both distributions.
We can identify any linear map
$\func{\Gamma}{\Hilb_\Y}{\Hilb_\X}$
with a $\mathrm{D}\times\mathrm{E}$
matrix and write:
\begin{equation*}
    \overline{\Gamma}~=~
    \left(
    \begin{array}{c|c}
            \Gamma & 0 \\
            \hline
            0 & 1
    \end{array}
    \right)~\text{as a $(\mathrm{D}+1)\times(\mathrm{E}+1)$
matrix.}
\end{equation*}

\paragraph{Dual Formulation.}
We can now state our main theorem:

\begin{restatable}{theorem}{entropicgwcntlinearized}
\label{prop:entropic_gw_cnt_linearized}
Let $\alpha \in \prob{\X}$
and
$\beta \in \prob{\Y}$ 
be two probability distributions with compact supports on topological spaces $\X$ and $\Y$,
endowed with definite CNT costs $c_\X$ and $c_\Y$.
Let 
$\Phi(x)=(\varphi(x), \tfrac{1}{2}\|\varphi(x)\|^2)$
and
$\Psi(y)=(\psi(y), \tfrac{1}{2}\|\psi(y)\|^2)$
denote their respective GW-embeddings, as in \cref{def:gw_embeddings}.
Then, for any temperature $\varepsilon \geq 0$,
the EGW problem of Eq.~\eqref{eq:entropic_gromov_wasserstein} is equivalent to:
\begin{equation*}
     GW_{\varepsilon}(\alpha, \beta) =  C(\alpha, \beta) +  8 
     \min_{\Gamma \in \Hspace} 
     \min_{\pi \in \coupling{\alpha}{\beta}}
     \mathcal{F}(\Gamma, \pi)~,
\end{equation*}
where $C(\alpha, \beta)$ is an additive constant and:
\begin{align*}
\mathcal{F}(\Gamma, \pi)
~:=&~
\norm{\Gamma}_{\HS}^2
~+~
(\varepsilon/8)\cdot \KL{\pi}{\alpha \otimes \beta} \\
-&~2
\int \scal{\overline{\Gamma}}{\Phi(x) \Psi(y)^{\top}}_{\HS} \diff \pi(x,y)~.
\end{align*}
\end{restatable}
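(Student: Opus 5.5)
Expanding the square in $\Loss(\pi)$ gives
\begin{equation*}
\Loss(\pi) = \int c_\X^2 \,\diff\alpha\diff\alpha + \int c_\Y^2 \,\diff\beta\diff\beta - 2\int c_\X(x,x')\,c_\Y(y,y')\,\diff\pi(x,y)\diff\pi(x',y').
\end{equation*}
The first two terms depend only on the marginals, so they can be absorbed into $C(\alpha,\beta)$. The substantive step is to rewrite the cross term $-2\iint c_\X c_\Y\, \diff\pi\diff\pi$ using the Schoenberg embeddings of \cref{theorem:cnt_embedding}. We write $c_\X(x,x') = \|\varphi(x)\|^2 + \|\varphi(x')\|^2 - 2\scal{\varphi(x)}{\varphi(x')}$, and similarly for $c_\Y$ with $\psi$. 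Multiplying these two expansions produces nine terms, which we integrate against $\pi\otimes\pi$ one at a time.

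Because $\varphi$ and $\psi$ are centered, every term that is linear in $\varphi(x')$ or $\psi(y')$ alone, with the other factor depending only on the unprimed pair, vanishes after integration. The terms $\|\varphi(x)\|^2\|\psi(y')\|^2$ factor into marginal quantities and become constants. What remains, after grouping, is
\begin{equation*}
2\int \|\varphi(x)\|^2\|\psi(y)\|^2 \,\diff\pi + 4\,\Big\|\int \psi(y)\otimes\varphi(x)\,\diff\pi\Big\|_{\HS}^2
\end{equation*}
plus constants. The cross terms of the form $\|\varphi(x)\|^2 \scal{\psi(y)}{\psi(y')}$ integrate to $\scal{\int\|\varphi\|^2\psi\, \diff\pi}{\int \psi \,\diff\beta} = 0$, again by centering. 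So $\Loss(\pi) = C - 4\int \|\varphi(x)\|^2\|\psi(y)\|^2\,\diff\pi - 8\,\|M_\pi\|_{\HS}^2$, where $M_\pi := \int \varphi(x)\psi(y)^{\top}\diff\pi \in \Hspace$. The numerical coefficients need careful bookkeeping; I expect the $8$ to match the factor in the statement.

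Next we apply the variational identity $-\|M\|_{\HS}^2 = \min_{\Gamma\in\Hspace} \big(\|\Gamma\|_{\HS}^2 - 2\scal{\Gamma}{M}_{\HS}\big)$, whose minimizer is $\Gamma = M_\pi$. We also note that
\begin{equation*}
\scal{\overline{\Gamma}}{\Phi(x)\Psi(y)^{\top}}_{\HS} = \scal{\Gamma}{\varphi(x)\psi(y)^{\top}}_{\HS} + \tfrac14\|\varphi(x)\|^2\|\psi(y)\|^2 .
\end{equation*}
Hence $-2\int \scal{\overline{\Gamma}}{\Phi\Psi^{\top}}\diff\pi = -2\scal{\Gamma}{M_\pi} - \tfrac12\int\|\varphi\|^2\|\psi\|^2\,\diff\pi$. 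Multiplying by $8$ turns the second piece into exactly $-4\int \|\varphi\|^2\|\psi\|^2\,\diff\pi$, consistent with the count above. This yields $\Loss(\pi) = C + 8\min_\Gamma\big(\|\Gamma\|^2 - 2\int\scal{\overline\Gamma}{\Phi\Psi^\top}\diff\pi\big)$. Adding $\varepsilon\,\KL{\pi}{\alpha\otimes\beta} = 8\cdot(\varepsilon/8)\,\mathrm{KL}$ and exchanging the two minimizations (always legitimate for an infimum over a product) gives the claim.

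The main obstacle is analytic rather than algebraic, since $\Hilb_\X$ and $\Hilb_\Y$ may be infinite-dimensional. We must check three things. First, $M_\pi$ is a well-defined Bochner integral in $\HS(\Hilb_\Y,\Hilb_\X)$; this follows from continuity of $\varphi,\psi$ and compact supports, which give bounded embeddings on the supports. Second, the identity $\iint \scal{\varphi(x)}{\varphi(x')}\scal{\psi(y)}{\psi(y')}\,\diff\pi\diff\pi = \|M_\pi\|_{\HS}^2$ holds, which reduces to $\scal{u v^\top}{u'v'^\top}_{\HS} = \scal{u}{u'}\scal{v}{v'}$ together with Fubini. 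Third, the minimum over $\Gamma$ is attained, which is immediate since $M_\pi\in\Hspace$. I would also confirm at the end that the $\mathrm{KL}$ term, the constants, and the factor $8$ agree exactly.
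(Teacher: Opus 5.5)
Your proposal is correct and follows essentially the paper's route. The paper first proves a lemma giving $\Loss(\pi) = C(\alpha,\beta) - 8\,\norm{\int x y^{\top}\diff\pi}_{\HS}^2 - 4\int \norm{x}^2\norm{y}^2\diff\pi$ for centered measures in Hilbert spaces, using the same term-by-term expansion and centering cancellations, and then applies the variational identity for $-\norm{M}_{\HS}^2$ to the GW-embeddings. Your explicit check that $\scal{\overline{\Gamma}}{\Phi(x)\Psi(y)^{\top}}_{\HS} = \scal{\Gamma}{\varphi(x)\psi(y)^{\top}}_{\HS} + \tfrac14\norm{\varphi(x)}^2\norm{\psi(y)}^2$ spells out the step the paper leaves implicit, and your bookkeeping of the constants, the factor $8$ and the $\varepsilon/8$ in front of $\mathrm{KL}$ is right.
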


Our proof builds upon similar dual formulations proposed recently for squared Euclidean costs in finite dimensions \cite{rioux2024entropic,zhang2024gromov,sebbouh2024structured,pal2025wasserstein}.
The objective function 
$\mathcal{F}(\Gamma, \pi)$
%$\func{\mathcal{F}(\Gamma, \pi)}{\Hspace \times \coupling{\alpha}{\beta}}{\R}$ 
is not \emph{jointly} 
convex, but has the following properties:
\begin{restatable}{theorem}{egwicpgamma}
\label{corrolary:egw_icp_gamma}
$\mathcal{F}$ is convex with respect to $\Gamma$ and:
\begin{align*}
\Gamma^\star(\pi) := \argmin_{\Gamma \in \Hspace} 
     \mathcal{F}(\Gamma, \pi)
     = \int \varphi(x) \psi(y)^{\top} \,\diff \pi(x,y)~.
\end{align*}
\end{restatable}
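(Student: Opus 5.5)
Fix the coupling $\pi \in \coupling{\alpha}{\beta}$ and isolate the dependence of $\mathcal{F}(\cdot,\pi)$ on $\Gamma$. The entropic term $(\varepsilon/8)\,\KL{\pi}{\alpha\otimes\beta}$ does not involve $\Gamma$, so it is a constant. For the linear term, I will expand the HS inner product against the block structure of $\overline{\Gamma}$. Since $\overline{\Gamma}$ acts as $\Gamma$ on the $\Hilb_\Y$ component and as the identity on the $\R$ component, and $\Phi(x)\Psi(y)^\top$ is the rank-one operator $w\mapsto \scal{\Psi(y)}{w}\Phi(x)$, the identity $\scal{A}{uv^\top}_{\HS} = \scal{u}{Av}$ gives
\begin{equation*}
\scal{\overline{\Gamma}}{\Phi(x)\Psi(y)^\top}_{\HS} = \scal{\varphi(x)}{\Gamma\psi(y)}_{\Hilb_\X} + \tfrac14 \|\varphi(x)\|^2\|\psi(y)\|^2 = \scal{\Gamma}{\varphi(x)\psi(y)^\top}_{\HS} + \tfrac14 \|\varphi(x)\|^2\|\psi(y)\|^2 .
\end{equation*}
The second summand is again independent of $\Gamma$. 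Writing $M_\pi := \int \varphi(x)\psi(y)^\top \diff\pi(x,y)$, I therefore expect
\begin{equation*}
\mathcal{F}(\Gamma,\pi) = \norm{\Gamma}_{\HS}^2 - 2\scal{\Gamma}{M_\pi}_{\HS} + \mathrm{const}(\pi) = \norm{\Gamma - M_\pi}_{\HS}^2 + \mathrm{const}'(\pi).
\end{equation*}
Here I must exchange the integral with the inner product, i.e.\ $\int \scal{\Gamma}{\varphi\psi^\top}\diff\pi = \scal{\Gamma}{M_\pi}$.

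This exchange is the only step that is not purely algebraic, and I expect it to be the main obstacle. I must check that $M_\pi$ is a well-defined element of $\Hspace$, as a Bochner integral. Each $\varphi(x)\psi(y)^\top$ is rank one with HS norm $\|\varphi(x)\|\,\|\psi(y)\|$. The embeddings $\varphi,\psi$ are continuous by \cref{theorem:cnt_embedding}, and the supports of $\alpha$ and $\beta$ are compact. Hence the map $(x,y)\mapsto \varphi(x)\psi(y)^\top$ is continuous and bounded in HS norm on the support of $\pi$, which makes it Bochner integrable. Bochner integrals commute with continuous linear functionals such as $\scal{\Gamma}{\cdot}_{\HS}$, which justifies the exchange. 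The same boundedness also guarantees that the constant term $\int \|\varphi\|^2\|\psi\|^2\diff\pi$ is finite.

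Once the completed-square form is established, both claims follow directly. Convexity in $\Gamma$ holds because $\mathcal{F}(\cdot,\pi)$ is a squared Hilbert norm plus a constant; in fact it is $2$-strongly convex. The unique minimizer is $\Gamma = M_\pi$, so $\Gamma^\star(\pi) = \int \varphi(x)\psi(y)^\top\diff\pi(x,y)$, which is the formula stated in \cref{corrolary:egw_icp_gamma}.
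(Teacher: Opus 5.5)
Your proof is correct and takes the same route as the paper. The paper likewise writes $\mathcal{F}(\cdot,\pi)$ as $\norm{\Gamma}_{\HS}^2 - 2\scal{\Gamma}{\int \varphi(x)\psi(y)^{\top}\diff\pi(x,y)}_{\HS}$ plus a $\Gamma$-independent term and reads off convexity and the minimizer; you additionally spell out the block expansion of $\overline{\Gamma}$ and the Bochner-integrability argument that the paper only sketches. As in the paper, uniqueness of the minimizer tacitly assumes $\KL{\pi}{\alpha\otimes\beta}<\infty$ (otherwise $\mathcal{F}(\cdot,\pi)\equiv+\infty$ when $\varepsilon>0$), which holds for every coupling produced by the alternate minimization scheme.
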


\begin{figure*}[t]
    \centering
    \includegraphics[width=\linewidth]{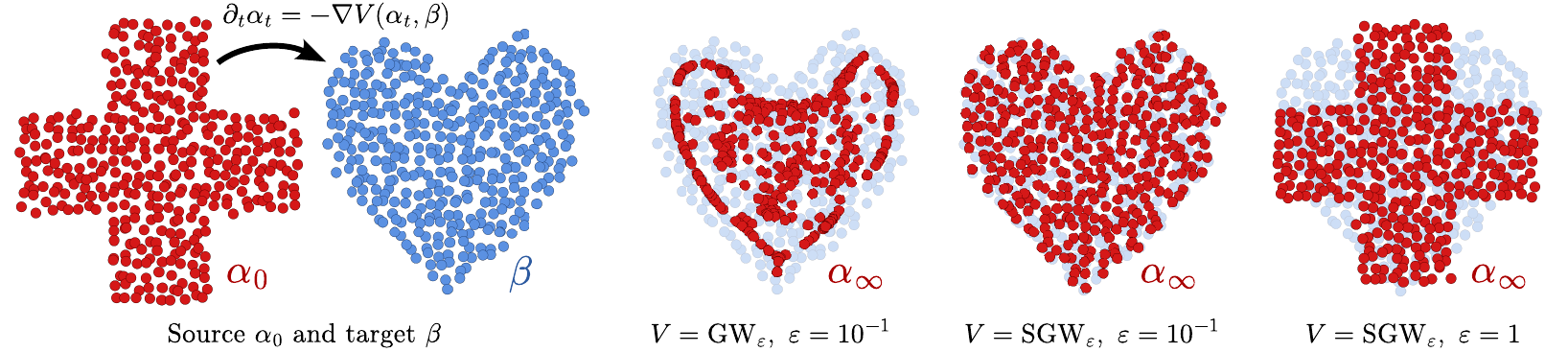}
    \caption{Wasserstein gradient flows $\alpha_{t+\delta t} = \alpha_{t} - \delta t \nabla V(\alpha_t, \beta)$ transporting a source measure $\alpha_0$ (red cross) towards a target $\beta$ (blue heart).
    Entropic bias causes the limit measure to collapse in small clusters
    with $\GW_\varepsilon$.
    The debiased
    Sinkhorn GW divergence $\SGW_\varepsilon$ fixes this issue, as the flow converges to the target.
    At large temperature $\varepsilon$, convergence fails and the flow remains cross-shaped.}
    \label{fig:gradientflow_entropicbias}
\end{figure*}

\begin{restatable}{theorem}{egwicppi}
\label{corrolary:egw_icp_pi}
Let
$\overline{\alpha} = \Phi_\#\alpha$
and 
$\overline{\beta} = \Psi_\#\beta$
denote the pushforward images of the distributions
$\alpha$ and $\beta$ by their respective GW-embeddings.
Since $\Phi$ and $\Psi$ are injective, we can identify the supports
of $\alpha$ and $\beta$ with those of
$\overline{\alpha}$ and $\overline{\beta}$
and remark that
$\mathcal{F}$ is convex with respect to $\pi$, with:
\begin{align*}
\pi^\star(\Gamma) := \argmin_{\pi \in \coupling{\alpha}{\beta}}
     \mathcal{F}(\Gamma, \pi)
     =
     \arg_{\pi \in \coupling{\alpha}{\beta}}
     \OT_{\varepsilon/8}(\overline{\alpha}, \overline{\beta})
\end{align*}
for the \emph{bilinear} cost
$c_\Gamma(x,y) := -2 \langle x, \overline{\Gamma} y \rangle_{\Hilb_\X\times \R}$.

Equivalently, if we identify the supports of $\overline{\beta}$ and $\overline{\alpha}$ with their pushforward images under the actions
of the linear map $\overline{\Gamma}$ and of its transpose 
$\overline{\Gamma}^{\top}$,
we can write that:
\begin{align*}
\pi^\star(\Gamma) ~&=~
     \arg_{\pi \in \coupling{\alpha}{\beta}}
     \OT_{\varepsilon/8}
     \big(\, \overline{\alpha}, \, \overline{\Gamma}_\# \overline{\beta} \,\big)
     \\
     &=~
     \arg_{\pi \in \coupling{\alpha}{\beta}}
     \OT_{\varepsilon/8} \big(\, \overline{\Gamma}_\#^{\top} \overline{\alpha}, \,\overline{\beta} \,\big)
\end{align*}
for the \emph{squared norms costs} in $\Hilb_\X \times \R$
and $\Hilb_\Y \times \R$.
\end{restatable}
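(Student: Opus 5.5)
With $\Gamma$ fixed, I will read $\mathcal{F}(\Gamma,\cdot)$ term by term on $\coupling{\alpha}{\beta}$. The term $\norm{\Gamma}_{\HS}^2$ does not depend on $\pi$.

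The cross term is linear in $\pi$. Using the rank-one identity $\scal{\overline{\Gamma}}{u v^{\top}}_{\HS} = \langle u, \overline{\Gamma} v\rangle$, it becomes
\begin{equation*}
-2\int \scal{\overline{\Gamma}}{\Phi(x)\Psi(y)^{\top}}_{\HS}\diff\pi(x,y) = \int c_\Gamma(\Phi(x),\Psi(y))\,\diff\pi(x,y).
\end{equation*}
To check this identity, I first verify it on the product space $\Hilb_\X\times\R$ with the block structure of $\overline{\Gamma}$. Expanding gives $\langle \varphi(x), \Gamma\psi(y)\rangle + \tfrac14\|\varphi(x)\|^2\|\psi(y)\|^2$. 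Integrability follows from compactness of the supports and continuity of the embeddings, so everything is bounded.

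The entropy term $\KL{\pi}{\alpha\otimes\beta}$ is convex in $\pi$, and strictly convex when $\varepsilon>0$. Hence $\mathcal{F}(\Gamma,\cdot)$ is the sum of a constant, a linear functional and a convex one, so it is convex.

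Minimizing it over $\coupling{\alpha}{\beta}$ is exactly the entropic OT problem with cost $c_\Gamma\circ(\Phi\times\Psi)$ and temperature $\varepsilon/8$. Since $\Phi$ and $\Psi$ are injective and measurable, pushforward by $\Phi\times\Psi$ is a bijection between $\coupling{\alpha}{\beta}$ and $\Pi(\overline{\alpha},\overline{\beta})$. It also preserves the KL divergence relative to the product measures. So the minimizer is $\arg\OT_{\varepsilon/8}(\overline{\alpha},\overline{\beta})$ with bilinear cost $c_\Gamma$, identified back on $\X\times\Y$. This gives the first claim.

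For the squared-norm reformulations, I expand
\begin{equation*}
\|u-\overline{\Gamma}v\|^2 = \|u\|^2 + \|\overline{\Gamma}v\|^2 - 2\langle u,\overline{\Gamma}v\rangle .
\end{equation*}
After integration against $\pi$, the first two terms depend only on the marginals, because $\|u\|^2$ is a function of $x$ alone and $\|\overline{\Gamma}v\|^2$ of $y$ alone. They are therefore constant on the coupling set, and the argmin is unchanged when $c_\Gamma$ is replaced by $\|u-\overline{\Gamma}v\|^2$. The same holds for the EOT problem, since the KL term does not see additive separable costs up to a constant.

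Writing the squared-norm cost between $\overline{\alpha}$ and $\overline{\Gamma}_\#\overline{\beta}$ via the map $(u,v)\mapsto(u,\overline{\Gamma}v)$ gives the second form. The symmetric expansion $\|\overline{\Gamma}^{\top}u - v\|^2$ in $\Hilb_\Y\times\R$ gives the third. Both need the adjoint identity $\langle u,\overline{\Gamma}v\rangle=\langle\overline{\Gamma}^{\top}u,v\rangle$, valid because $\overline{\Gamma}$ is bounded.

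The main subtlety is the identification step. $\overline{\Gamma}$ need not be injective, so $\overline{\Gamma}_\#\overline{\beta}$ can merge points. The statement must then be read as optimizing over couplings of $\alpha$ and $\beta$ themselves, using the cost $\|u-\overline{\Gamma}v\|^2$ evaluated through the labels, rather than over couplings of the pushforward measures. I would make this explicit: the plan is pulled back by $\Phi\times\Psi$, and the pushforward notation only describes the cost. That is precisely what the phrase ``identify the supports'' in the statement encodes.
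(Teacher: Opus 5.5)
Your proposal is correct and follows the paper's own route. The paper's proof consists only of the identity $\langle \overline{\Gamma}, \Phi(x)\Psi(y)^{\top}\rangle_{\HS} = \langle \Phi(x), \overline{\Gamma}\Psi(y)\rangle = \langle \overline{\Gamma}^{\top}\Phi(x), \Psi(y)\rangle$, applied to $\mathcal{F}$ from \cref{prop:entropic_gw_cnt_linearized}; your extra details fill in what it leaves implicit: the separable terms $\|u\|^2$ and $\|\overline{\Gamma}v\|^2$ are constant on $\coupling{\alpha}{\beta}$, the KL term is invariant under the injective lift, and plans stay indexed by $\alpha,\beta$ when $\overline{\Gamma}$ is not injective.
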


As illustrated in Fig.~\ref{fig:main_theorem}, this remarkable result allows us to offload most of the complexity of the GW problem to the well-understood theory of entropy-regularized OT with squared Euclidean costs.
In the remainder of the paper, we 
show that starting from an arbitrary linear map $\Gamma_0$ in $ \Hspace$
and iteratively applying the alternate minimization updates:
\begin{equation}
    \label{eq:alternate_minimization_scheme}
    \pi_{t+1}\gets \pi^\star(\Gamma_t)~,~~~~~
    \Gamma_{t+1} \gets \Gamma^\star(\pi_{t+1})
\end{equation}
is both easy to implement at scale and enjoys strong theoretical guarantees.

\section{Theoretical Guarantees of CNT-EGW}
\label{section:sinkhorn_biases}

%Although entropic regularization offers significant computational benefits, it also modifies the properties of the GW problem.
%We identify three main side effects: the \emphasize{entropic bias} alters the nature of EGW itself, while the \emphasize{optimization bias} arises from the structure of EGW solvers and the \emphasize{convergence drift} is induced by numerical approximations in the Sinkhorn algorithm.
%Exploiting the dual formula of \cref{prop:entropic_gw_cnt_linearized}, we show that these biases are either absent or correctable when the costs are CNT.
%These findings highlight the relevance of CNT costs for EGW, while providing new techniques to speed-up solver computations without sacrificing their stability.

%\cref{corrolary:egw_icp_pi} reformulates the original EGW problem as a minimization over a collection of quadratic OT costs, which enjoy strong theoretical properties and have been widely studied in the literature.
We now leverage \cref{corrolary:egw_icp_pi} to prove two key results for EGW with CNT costs that have important practical implications: EGW can be debiased, enabling its use for variational methods; and EGW solvers are provably convergent, making them reliable in all contexts.

\subsection{Entropic Bias and Symmetric Debiasing}
\label{subsection:entropic_bias}

\paragraph{Entropic Bias.}
Unlike GW, the EGW problem does not satisfy the geometric properties of \cref{prop:gwisometry}: if $\varepsilon > 0$, we no longer have $\GW_{\varepsilon}(\alpha, \alpha) = 0$ and $\alpha$ does not necessarily minimize $\GW_{\varepsilon}(\alpha, \beta)$ over $\beta \in \prob{X}$.
This \emphasize{entropic bias} is well known in the EOT setting and 
is often addressed by introducing the \emphasize{Sinkhorn divergence}:
\begin{equation*}
    \SOT_{\varepsilon}(\alpha, \beta) = \OT_{\varepsilon}(\alpha, \beta) - \tfrac{1}{2} ( \OT_{\varepsilon}(\alpha, \alpha) + \OT_{\varepsilon}(\beta, \beta))~.
\end{equation*}
We propose a new formulation of the result of \cite{feydy2019interpolating}
that highlights the importance of the class of CNT costs for entropy-regularized OT:
\begin{restatable}{theorem}{cntotproperties}
\label{prop:cnt_ot_properties}
Let $c$ be a continuous cost  (not necessarily  vanishing on the diagonal) on a bounded domain. Then, 
   $c$ is CNT if and only if $\SOT_{\varepsilon} \geq 0$ for any $\varepsilon > 0$.
\end{restatable}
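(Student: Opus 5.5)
The plan is to prove the two implications separately, after a preliminary reduction to costs that vanish on the diagonal.

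\emph{Reduction.} If $c$ is symmetric, write $a(x) = \tfrac12 c(x,x)$ and $\tilde c(x,y) = c(x,y) - a(x) - a(y)$. I interpret ``CNT'' for a cost that does not vanish on the diagonal as the condition $\sum_i \lambda_i = 0 \Rightarrow \sum_{i,j} \lambda_i\lambda_j c(x_i,x_j) \le 0$. This condition is identical for $c$ and $\tilde c$, because the terms $a(x_i)+a(x_j)$ cancel when $\sum_i\lambda_i = 0$. The divergence is also unchanged. Adding $a(x)+a(y)$ to the cost shifts $\OT_\varepsilon(\alpha,\beta)$ by $\int a\,\diff\alpha + \int a\,\diff\beta$, since this term is constant on couplings. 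This shift cancels exactly against the corresponding shifts of $\tfrac12\OT_\varepsilon(\alpha,\alpha)$ and $\tfrac12\OT_\varepsilon(\beta,\beta)$. We may therefore assume $c(x,x)=0$. For a non-symmetric $c$, I would symmetrize first, or note that the ``if'' direction below only uses the symmetric part. I would double-check this point against the intended definition.

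\emph{($\Leftarrow$) via the large-temperature limit.} Suppose $\SOT_\varepsilon \ge 0$ for every $\varepsilon>0$; the plan is to let $\varepsilon\to+\infty$.
\begin{itemize}
\item Upper bound: the product coupling $\alpha\otimes\beta$ is admissible and has zero entropy term, so $\OT_\varepsilon(\alpha,\beta) \le \int c\,\diff\alpha\diff\beta$.
\item Lower bound: weak duality with the potentials $f=g=0$ gives $\OT_\varepsilon(\alpha,\beta) \ge -\varepsilon\int (e^{-c/\varepsilon}-1)\,\diff\alpha\diff\beta$. Because $c$ is bounded, this tends to $\int c\,\diff\alpha\diff\beta$.
\end{itemize}
Hence
\[
\SOT_\varepsilon(\alpha,\beta) \;\longrightarrow\; -\tfrac12\int c\,\diff(\alpha-\beta)\,\diff(\alpha-\beta),
\]
and nonnegativity of the left-hand side gives $\langle \alpha-\beta, c(\alpha-\beta)\rangle \le 0$.

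Now take points $x_i$ and coefficients $\lambda_i$ with $\sum_i\lambda_i = 0$, not all zero. Split $\lambda = \lambda^+ - \lambda^-$ into positive and negative parts and set $s = \sum_i \lambda_i^+ = \sum_i \lambda_i^-$. Then $\lambda/s = \alpha-\beta$ for the discrete probability measures $\alpha = \sum_i (\lambda_i^+/s)\,\delta_{x_i}$ and $\beta = \sum_i (\lambda_i^-/s)\,\delta_{x_i}$. The inequality above, multiplied by $s^2$, gives $\sum_{i,j}\lambda_i\lambda_j c(x_i,x_j)\le 0$, i.e.\ $c$ is CNT.

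\emph{($\Rightarrow$) following \cite{feydy2019interpolating}.} If $c$ is CNT with zero diagonal, \cref{theorem:cnt_embedding}(iii) shows that $k_\varepsilon = e^{-c/\varepsilon}$ is a positive definite kernel for every $\varepsilon>0$.
\begin{enumerate}
\item Write the dual of $\OT_\varepsilon(\alpha,\beta)$ as a concave maximization over $(f,g)$:
\[
\int f\,\diff\alpha + \int g\,\diff\beta - \varepsilon\int \big(e^{(f\oplus g - c)/\varepsilon}-1\big)\,\diff\alpha\diff\beta .
\]
\item Let $p$ and $q$ be the symmetric Sinkhorn potentials of $\OT_\varepsilon(\alpha,\alpha)$ and $\OT_\varepsilon(\beta,\beta)$. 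At these fixed points, $\OT_\varepsilon(\alpha,\alpha) = 2\int p\,\diff\alpha$ and $\int e^{(p(x)+p(x')-c)/\varepsilon}\,\diff\alpha(x') = 1$, and similarly for $\beta$ and $q$.
\item Lower bound $\OT_\varepsilon(\alpha,\beta)$ by plugging the suboptimal pair $(f,g)=(p,q)$ into the dual.
\item Subtract the two self-terms. Using the fixed-point identities, the linear parts should cancel and the exponential terms should recombine into
\[
\SOT_\varepsilon(\alpha,\beta) \;\ge\; \tfrac{\varepsilon}{2}\,\big\| e^{p/\varepsilon}\alpha - e^{q/\varepsilon}\beta \big\|_{k_\varepsilon}^2 ,
\]
where $\|\mu\|_{k}^2 = \iint k\,\diff\mu\,\diff\mu$. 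This quantity is nonnegative because $k_\varepsilon$ is positive definite.
\end{enumerate}

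\emph{Main obstacle.} The bookkeeping in step 4 is where the argument could go wrong. To make the cross term exactly $-\varepsilon\,\langle e^{p/\varepsilon}\alpha, k_\varepsilon\, e^{q/\varepsilon}\beta\rangle$, I need to use the convention with the ``$-1$'' inside the dual. The fixed-point identities then turn each self-term into $\tfrac{\varepsilon}{2}\|e^{p/\varepsilon}\alpha\|_{k_\varepsilon}^2$ (respectively $\tfrac{\varepsilon}{2}\|e^{q/\varepsilon}\beta\|_{k_\varepsilon}^2$), so that the three terms complete the square. Beyond this, one has to check that the symmetric potentials exist and are continuous. On a bounded domain with continuous $c$ this is standard.
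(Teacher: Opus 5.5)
Your proposal is correct and follows the paper's proof. For CNT costs you use the positivity argument of \cite{feydy2019interpolating}, which you reconstruct while the paper only cites it. Your bound $\SOT_\varepsilon(\alpha,\beta)\ge\tfrac{\varepsilon}{2}\|e^{p/\varepsilon}\alpha-e^{q/\varepsilon}\beta\|_{k_\varepsilon}^2$ is right, though the bookkeeping is slightly different from your description: the self-terms cancel the linear parts, and it is the constant $\varepsilon$ from the ``$-1$'' convention that splits into the two unit norms. For the converse you let $\varepsilon\to+\infty$ to obtain $-\tfrac12\int c\,\diff(\alpha-\beta)\,\diff(\alpha-\beta)\le 0$, exactly as the paper does.

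The only real variation is your lower bound via weak duality at $f=g=0$. It replaces the paper's KL/Pinsker step, needs neither an optimal plan nor $c\ge 0$, and gives an explicit $O(1/\varepsilon)$ error. Your zero-diagonal reduction and the splitting $\lambda=\lambda^+-\lambda^-$ also spell out steps the paper leaves implicit.
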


By analogy, we define the \emphasize{Sinkhorn GW} divergence as:
\begin{equation*}
    \SGW_{\varepsilon}(\alpha, \beta)\!=\!\GW_{\varepsilon}(\alpha, \beta) - \tfrac{1}{2} (\GW_{\varepsilon}(\alpha, \alpha) + \GW_{\varepsilon}(\beta, \beta)).
\end{equation*}

The use of SGW was previously proposed in \cite{sejourne2022generalized,rioux2024entropic}; however, whether this approach effectively corrects the entropic bias remained an open question.
In the CNT case, we provide a positive answer:
\begin{restatable}{theorem}{cntsgwproperties}
\label{prop:cnt_sgw_properties}
    If $c_\X$ and $c_\Y$ are CNT costs, then  $\SGW_{\varepsilon}(\alpha, \beta) \geq 0$ for any distributions $\alpha$, $\beta$ and temperature $\varepsilon > 0$. 
    If $\alpha$ and $\beta$ are isometric, then $\SGW_{\varepsilon}(\alpha, \beta) = 0$.
\end{restatable}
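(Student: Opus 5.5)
The plan is to reduce the Sinkhorn GW divergence to the ordinary Sinkhorn divergence for a squared Euclidean cost, and then invoke \cref{prop:cnt_ot_properties}. Minimizing first over $\pi$, \cref{prop:entropic_gw_cnt_linearized} and \cref{corrolary:egw_icp_pi} give
\begin{equation*}
\GW_\varepsilon(\alpha,\beta) = C(\alpha,\beta) + 8\min_{\Gamma\in\Hspace}\Big[\norm{\Gamma}_{\HS}^2 + \OT^{c_\Gamma}_{\varepsilon/8}(\overline{\alpha},\overline{\beta})\Big],
\end{equation*}
where $c_\Gamma$ is the bilinear cost. The first step is to check in the proof of \cref{prop:entropic_gw_cnt_linearized} that the constant is separable, $C(\alpha,\beta)=C_\alpha+C_\beta$. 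I expect it to consist of terms like $\int\!\!\int c_\X^2\,\diff\alpha\,\diff\alpha$ and $\int\!\!\int c_\Y^2\,\diff\beta\,\diff\beta$. Separability gives $C(\alpha,\beta)-\tfrac12(C(\alpha,\alpha)+C(\beta,\beta))=0$, so the constants cancel in $\SGW_\varepsilon$.

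The key step is to symmetrize $c_\Gamma$. Fix $\Gamma\in\Hspace$ and take its singular value decomposition $\Gamma=\sum_i\sigma_i e_i f_i^\top$, with $\sigma_i\ge 0$, $(e_i)$ orthonormal in $\Hilb_\X$ and $(f_i)$ orthonormal in $\Hilb_\Y$. Define maps into the common Hilbert space $\ell^2\times\R$ by
\begin{equation*}
a(x)=\big((\sigma_i^{1/2}\langle e_i,\varphi(x)\rangle)_i,\ \tfrac12\|\varphi(x)\|^2\big),\qquad b(y)=\big((\sigma_i^{1/2}\langle f_i,\psi(y)\rangle)_i,\ \tfrac12\|\psi(y)\|^2\big).
\end{equation*}
Then $\langle\Phi(x),\overline\Gamma\Psi(y)\rangle=\langle a(x),b(y)\rangle$, and hence
\begin{equation*}
c_\Gamma(x,y)=\|a(x)-b(y)\|^2-\|a(x)\|^2-\|b(y)\|^2 .
\end{equation*}
The subtracted terms depend only on $x$ or only on $y$, so they integrate to constants against any coupling. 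Since the KL term depends only on $\pi$ relative to $\alpha\otimes\beta$, we get
\begin{equation*}
\OT^{c_\Gamma}_{\varepsilon/8}(\alpha,\beta)=\OT^{\|\cdot\|^2}_{\varepsilon/8}(a_\#\alpha,b_\#\beta)-\int\|a\|^2\,\diff\alpha-\int\|b\|^2\,\diff\beta .
\end{equation*}
Here $a$ and $b$ need not be injective. However, the entropic OT value for a cost that factors through $a\times b$ equals the value between the pushforwards, because conditioning the optimal plan on the fibres does not increase the KL.

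Squared Euclidean distance is CNT, so \cref{prop:cnt_ot_properties} gives
\begin{equation*}
\OT(a_\#\alpha,b_\#\beta)\ge\tfrac12\OT(a_\#\alpha,a_\#\alpha)+\tfrac12\OT(b_\#\beta,b_\#\beta).
\end{equation*}
Running the same decomposition backwards, $\OT(a_\#\alpha,a_\#\alpha)-2\int\|a\|^2\,\diff\alpha$ is exactly $\OT^{c_{\Gamma_\alpha}}_{\varepsilon/8}(\overline\alpha,\overline\alpha)$ with $\Gamma_\alpha=\sum_i\sigma_i e_ie_i^\top=(\Gamma\Gamma^\top)^{1/2}$. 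Likewise the $\beta$ term corresponds to $\Gamma_\beta=(\Gamma^\top\Gamma)^{1/2}$. The $\|a\|^2$ and $\|b\|^2$ terms appear with matching factors on both sides, so they cancel. Moreover $\norm{\Gamma_\alpha}_{\HS}^2=\norm{\Gamma_\beta}_{\HS}^2=\sum_i\sigma_i^2=\norm{\Gamma}_{\HS}^2$. Therefore, for every $\Gamma$,
\begin{equation*}
\norm{\Gamma}^2_{\HS}+\OT^{c_\Gamma}(\overline\alpha,\overline\beta)\ge\tfrac12\big[\norm{\Gamma_\alpha}^2_{\HS}+\OT^{c_{\Gamma_\alpha}}(\overline\alpha,\overline\alpha)\big]+\tfrac12\big[\norm{\Gamma_\beta}^2_{\HS}+\OT^{c_{\Gamma_\beta}}(\overline\beta,\overline\beta)\big].
\end{equation*}
The right-hand side is at least $\tfrac12\min_{\Gamma'}[\cdot]_{\alpha,\alpha}+\tfrac12\min_{\Gamma'}[\cdot]_{\beta,\beta}$. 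Minimizing the left-hand side over $\Gamma$ then gives $\SGW_\varepsilon\ge0$.

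The main obstacle is the bookkeeping in this step. One must confirm that the last coordinate $\tfrac12\|\varphi\|^2\cdot\tfrac12\|\psi\|^2$ carries a positive sign, so that it can be absorbed into the Euclidean inner product. One must also handle infinite-dimensional $\Gamma$ via a compact-operator SVD, and verify the separability of $C$. If $C$ is not separable, I would redo the expansion of $\Loss(\pi)$ directly and show that its non-separable part is already captured by the $\Gamma$ terms.

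For the equality case, let $I$ be an isometry pushing $\alpha$ to $\beta$. Because $c_\X$ and $c_\Y$ are definite, $I$ is injective on the support of $\alpha$. The map $\pi\mapsto(\mathrm{id},I)_\#\pi$ restricted to couplings in $\coupling{\alpha}{\alpha}$ is therefore a bijection onto $\coupling{\alpha}{\beta}$. It preserves the distortion integrand by isometry, and it preserves the KL relative to the product measure because $I$ is a measurable bijection between supports. Hence $\GW_\varepsilon(\alpha,\beta)=\GW_\varepsilon(\alpha,\alpha)$, and symmetrically $\GW_\varepsilon(\alpha,\beta)=\GW_\varepsilon(\beta,\beta)$, which gives $\SGW_\varepsilon(\alpha,\beta)=0$.
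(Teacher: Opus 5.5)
Your argument is essentially the paper's. The paper also writes $\Gamma^*=U^{*\top}\Sigma^*V^*$ and tests $U^{*\top}\Sigma^*U^*$ in $\GW_\varepsilon(\alpha,\alpha)$ and $V^{*\top}\Sigma^*V^*$ in $\GW_\varepsilon(\beta,\beta)$; these are your $(\Gamma\Gamma^\top)^{1/2}$ and $(\Gamma^\top\Gamma)^{1/2}$. It then concludes from the positivity of the Sinkhorn divergence for a cost whose negative is an inner product of feature maps.

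One claim is wrong: the constant is not separable. By \cref{prop:cnt_gw_developed}, $C(\alpha,\beta)$ contains the cross term $-4\int\norm{\varphi}^2\diff\alpha\int\norm{\psi}^2\diff\beta$, and
\[
C(\alpha,\beta)-\tfrac12\big(C(\alpha,\alpha)+C(\beta,\beta)\big)=2\Big(\int\norm{\varphi}^2\diff\alpha-\int\norm{\psi}^2\diff\beta\Big)^2 .
\]
So the constants do not cancel. Your fallback also does not apply: the cross term is not absorbed by the $\Gamma$ terms. It simply survives as this nonnegative square, which is exactly what the paper uses. Your chain of inequalities then gives $\SGW_\varepsilon(\alpha,\beta)\geq 2(\int\norm{\varphi}^2\diff\alpha-\int\norm{\psi}^2\diff\beta)^2\geq 0$, so the conclusion stands once this is corrected.

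Where you differ is in execution, mostly to your advantage. You prove the inequality for every $\Gamma$ before minimizing, using the compact-operator SVD directly in $\ell^2\times\R$. This removes the paper's reduction to finitely supported, finite-dimensional measures (density plus the weak continuity of \cref{prop:continuity:entropic_gw_cnt2}). It also removes the need for an optimal $\Gamma^*$ to exist.

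Your polarization $-2\langle a,b\rangle=\norm{a-b}^2-\norm{a}^2-\norm{b}^2$, followed by \cref{prop:cnt_ot_properties}, is equivalent to the paper's appeal to the positive kernel $\exp(-c_{\Sigma^*}/\varepsilon)$. The reason is that a Sinkhorn divergence is unchanged when the same $h(u)+h(v)$ is subtracted from a symmetric cost. Your fibre argument is what justifies passing to $a_\#\alpha$ and $b_\#\beta$.

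Finally, the paper never proves the isometric case, so your bijection argument adds something. It relies on definiteness, which the statement does not assume. Working with $\varphi_\#\alpha$ and $\psi_\#\beta$ instead removes that assumption, since $I$ induces on them a distance-preserving, hence injective, map.
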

Yet, the nullity of $\SGW_{\varepsilon}(\alpha, \beta)$ does not imply that $\alpha$ and $\beta$ are isometric.
When the regularization parameter $\varepsilon$ is too large, $\SGW_{\varepsilon}$ may fail to separate non-isometric measures: we discuss examples in \cref{prop:sgw_counterex}.
In finite dimension, we provide a criterion that controls this effect based on the eigenvalues of the covariance matrices  $\Sigma_{\alpha} = \int x x^\top \diff\alpha(x)$ and $\Sigma_{\beta} = \int y y^\top \diff\beta(y)$ of centered distributions.

%\begin{restatable}{proposition}{sgwdefinitenessquad}
%\label{prop:definiteness_cnt_quad}
%    Let $d \in \N$, $\varepsilon > 0$ and $R > 0$. 
%    Let $\delta = K \varepsilon d \cdot (1  +\log(\frac{dR}{\varepsilon}))$, with $K$ a constant. 
%    Let $c_X,c_Y = \norm{\cdot}^2$.
%    For all $\alpha,\beta \in \mathcal{M}_{1,\sqrt{\delta},R}^+(\R^d)$, $\alpha$ and $\beta$ are isometric if and only if  $\SGW_{\varepsilon}(\alpha, \beta) = 0$. 
%\end{restatable}

\begin{restatable}{proposition}{sgwdefinitenessquad}
\label{prop:definiteness_cnt_quad}
    Let $\varepsilon > 0$, $\alpha \in \prob{\R^\DD}$, $\beta \in \prob{\R^{\EE}}$ with supports of diameter $R$, $R'$, and $c_\X, c_\Y$ the squared norm of $\R^\DD$ and $\R^{\EE}$.
    Let $\lambda_\alpha$, $\lambda_\beta$ be the smallest eigenvalues of $\Sigma_\alpha$ and $\Sigma_\beta$.
    There are constants $C(\DD,R)$ and $C(\EE,R')$ such that, if:
    \begin{equation*}
        \varepsilon  \max \left(1, \log(1/\varepsilon)\right) \leq \max \left(C(\DD,R) \!\cdot\! \lambda_\alpha^2,~C(\EE,R') \!\cdot\! \lambda_\beta^2 \right) \!,
    \end{equation*}
    then $\alpha$ and $\beta$ are isometric if and only if $\SGW_{\varepsilon}(\alpha, \beta) = 0$.
\end{restatable}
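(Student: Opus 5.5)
The easy direction follows directly from \cref{prop:cnt_sgw_properties}: if $\alpha$ and $\beta$ are isometric, then $\SGW_\varepsilon(\alpha,\beta)=0$. For the converse, assume $\SGW_\varepsilon(\alpha,\beta)=0$. I will rewrite each term through the dual formulation of \cref{prop:entropic_gw_cnt_linearized}. For squared norms, $\varphi(x)=x-x_\alpha$, and $\Gamma$ ranges over $\DD\times\EE$ matrices.

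First I make the constant explicit. Expanding $\Loss(\pi)$ for squared norms, $C(\alpha,\beta)$ splits as $C_\alpha+C_\beta$, where $C_\alpha$ depends only on $\alpha$ (through fourth and second moments). The cross terms involving the $\tfrac12\|\cdot\|^2$ coordinates are part of the bilinear cost $c_\Gamma$, so $C_\alpha$ cancels in the symmetric combination defining $\SGW_\varepsilon$. Writing $m(\alpha,\beta):=\min_{\Gamma,\pi}\mathcal F(\Gamma,\pi)$, we get
\begin{equation*}
\SGW_\varepsilon(\alpha,\beta)=8\Big(m(\alpha,\beta)-\tfrac12 m(\alpha,\alpha)-\tfrac12 m(\beta,\beta)\Big).
\end{equation*}

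Next, I will show that the self-term $m(\alpha,\alpha)$ is attained at a nondegenerate $\Gamma$. Take $\pi\approx(\mathrm{id},\mathrm{id})_\#\alpha$, smoothed at scale $\sqrt\varepsilon$, so that \cref{corrolary:egw_icp_gamma} gives $\Gamma\approx\Sigma_\alpha$. The gain is $\|\Sigma_\alpha\|_{\HS}^2$, minus an entropic cost of order $\varepsilon\DD\log(R/\varepsilon)$, which is the usual covering bound for $\KL{\pi}{\alpha\otimes\alpha}$. When $\varepsilon\max(1,\log(1/\varepsilon))\le C(\DD,R)\lambda_\alpha^2$, this makes $m(\alpha,\alpha)$ strictly below the value at $\Gamma=0$ (independent coupling) by a margin of order $\lambda_\alpha^2$. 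It also forces the optimal $\Gamma_{\alpha\alpha}$ to have all singular values bounded below, because collapsing a direction costs about $\lambda_\alpha^2$ while gaining at most the entropic term.

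The core step is to use $\SGW_\varepsilon=0$ to reach the equality case of the nonnegativity argument in \cref{prop:cnt_sgw_properties}. I expect that argument to work like this: for the optimal pair $(\Gamma,\pi)$ of $m(\alpha,\beta)$, one builds competitors for the self-problems, namely $\Gamma\Gamma^\top$-type and $\Gamma^\top\Gamma$-type plans obtained by gluing $\pi$ with its transpose, which combine via Cauchy--Schwarz in $\HS$ and convexity of KL to give $m(\alpha,\beta)\ge\tfrac12(m(\alpha,\alpha)+m(\beta,\beta))$. Equality forces every Cauchy--Schwarz inequality to be tight. From this I would extract three consequences: $\Gamma^\top\Gamma$ and $\Gamma\Gamma^\top$ match the self-optimal operators; $\Sigma_\alpha$ and $\Sigma_\beta$ share their spectra; and, through the Sinkhorn representation \eqref{eq:sinkhorn_pi} of the optimal plans, the glued plan coincides with the optimal self-plan. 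That would mean $\pi$ is a $\varepsilon$-blurred version of a deterministic map $y\mapsto Ox+\text{shift}$ with $O$ a partial isometry, which is an actual orthogonal map on the relevant span by the nondegeneracy from the previous step. Comparing marginals then gives $\beta=O_\#\alpha$ up to translation, hence isometry.

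The main obstacle is this rigidity step. Equality in a Cauchy--Schwarz over plans does not directly give an isometry, and the eigenvalue condition is exactly what excludes degenerate equality cases such as $\Gamma=0$, where both sides are dominated by entropy (the counterexamples of \cref{prop:sgw_counterex}). I will first try to prove the gap quantitatively: show that $m(\alpha,\beta)-\tfrac12(m(\alpha,\alpha)+m(\beta,\beta))\gtrsim \min_O W_2^2$-type discrepancy minus $O(\varepsilon\log(1/\varepsilon))$, using strong convexity of $\mathcal F$ in $\Gamma$ (whose Hessian is $2\,\mathrm{Id}$), and let the hypothesis absorb the error term.
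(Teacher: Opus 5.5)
Your easy direction is fine. Your self-term step is essentially how the paper concludes: the bound $\GW_\varepsilon(\alpha,\alpha)\lesssim\varepsilon\max(1,\log(1/\varepsilon))$ of \citep[Proposition 1]{zhang2024gromov}, together with the diagonal comparison $e_i^\top\Gamma_\alpha e_i\le e_i^\top\Sigma_\alpha e_i$, shows that a self-optimal operator with a null eigenvalue forces $\lambda_\alpha^2\lesssim\varepsilon\max(1,\log(1/\varepsilon))$. There is a small slip: the constants do not cancel, since $C(\alpha,\beta)-\tfrac12(C(\alpha,\alpha)+C(\beta,\beta))=2\big(\int\|x\|^2\,d\alpha-\int\|y\|^2\,d\beta\big)^2$, but this is harmless because it is nonnegative.

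The genuine gap is the rigidity step, and it starts with the mechanism you attribute to the nonnegativity of $\SGW_\varepsilon$. Gluing $\pi$ with its transpose does not give self-competitors with $\mathcal F(\rho_\alpha)+\mathcal F(\rho_\beta)\le 2\,\mathcal F(\Gamma,\pi)$ at the optimum, even when $\alpha=\beta$. Take $\alpha=\beta=\tfrac12(\delta_{-1}+\delta_1)$ on $\R$. Couplings are indexed by $q=\int xy\,d\pi$, and the EGW objective is, up to constants, $-8q^2+\varepsilon\big(\log 2-h(\tfrac{1+q}{2})\big)$ with $h$ the binary entropy. For $\varepsilon<16$ it is strictly decreasing on $[0,q^*]$, with minimizer $q^*\in(0,1)$. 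The glued plan has correlation $(q^*)^2<q^*$, hence a strictly larger value. So the equality case you intend to exploit belongs to an inequality that does not hold.

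Your quantitative fallback cannot repair this either. A bound of the form gap $\gtrsim\min_O W_2^2-O(\varepsilon\log(1/\varepsilon))$ only yields approximate isometry when $\SGW_\varepsilon=0$. The hypothesis ties $\varepsilon$ to $\lambda_\alpha^2$, not to the discrepancy, so it cannot absorb an additive error into an exact conclusion.

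The missing idea is to run the equality analysis in the dual variable with a shared spectrum, and to get exact rigidity from the definiteness of a Sinkhorn divergence rather than from the structure of $\pi$:

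\begin{itemize}
\item Write the optimal cross operator as $\Gamma^*=U^{*\top}\Sigma^*V^*$. The operators $U^{*\top}\Sigma^*U^*$ and $V^{*\top}\Sigma^*V^*$ are admissible for the two self-problems. This yields $\SGW_\varepsilon(\alpha,\beta)\ge 2\big(\int\|x\|^2\,d\alpha-\int\|y\|^2\,d\beta\big)^2+\SOT_{\varepsilon}^{\Sigma^*}(U^*_\sharp\alpha,V^*_\sharp\beta)$ for the cost $c_{\Sigma^*}(x,y)=-4\|x\|^2\|y\|^2-16\,x^\top\Sigma^*y$.
\item If $\SGW_\varepsilon(\alpha,\beta)=0$, everything is tight: $U^{*\top}\Sigma^*U^*$ is optimal for $\GW_\varepsilon(\alpha,\alpha)$, and $\SOT_{\varepsilon}^{\Sigma^*}(U^*_\sharp\alpha,V^*_\sharp\beta)=0$.
\item The kernel $e^{-c_{\Sigma^*}/\varepsilon}$ is the exponential of the inner product of the feature map $x\mapsto(2\|x\|^2,4\sqrt{\Sigma^*}x)$. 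When $\Sigma^*$ is nondegenerate this map is injective, so the kernel is universal and the Sinkhorn divergence is definite. Then $U^*_\sharp\alpha=V^*_\sharp\beta$, which is exactly an isometry.
\item The eigenvalue hypothesis serves only to exclude a null singular value of $\Sigma^*$, via the self-optimality of $U^{*\top}\Sigma^*U^*$ and your self-term estimate.
\end{itemize}

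Without this link between the cross-optimal and self-optimal operators, your nondegeneracy statement about $\Gamma_{\alpha\alpha}$ has nothing to act on.
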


In words: the measures must be spread across all ambient dimensions for EGW to accurately capture their differences.
This result cannot be generalized to infinite dimensions, since the covariance operators $\Sigma_\alpha$ and $\Sigma_\beta$ are compact, with $0$ as the infimum of their singular values.
As illustrated in \cref{fig:gradientflow_entropicbias},
keeping $\varepsilon$ as small as possible is thus essential to let
$\SGW_\varepsilon$ separate any two non-isometric distributions.

\subsection{Convergence of EGW Solvers}
\label{subsection:optim_bias}

Most existing solvers for the EGW problem
of Eq.~\eqref{eq:entropic_gromov_wasserstein}
rely on descent schemes in the space of couplings $\coupling{\alpha}{\beta}$.
Illustratively, \textsc{Entropic-GW} -- the influential algorithm of \cite{peyre2016gromov} -- relies on the updates:
%To benefit from the Sinkhorn algorithm speedup, EGW must be formulated as a series of EOT problem. 
%The method proposed by \cite{peyre2016gromov} which inspired most of the EGW solvers hence relies on the updates: 
\begin{equation}
\label{eq:entropic_cgd_sequence}
\pi_{t+1}\!=\!\argmin_{\pi \in \coupling{\alpha}{\beta}}\!\left( \int\! \grad_{\pi_t}{\Loss} \cdot \diff\pi + \varepsilon \KL{\pi}{\alpha \otimes \beta}\!\right)\!.
\end{equation}
The authors proved the equivalence of Eq.~\eqref{eq:entropic_cgd_sequence} with a projected gradient descent (PGD) for the KL geometry with a step size $\tau = 1/\varepsilon$.
Although PGD is guaranteed to converge when $\tau$ is sufficiently small, this may not be compatible with the small values of $\varepsilon$ that are required to approximate
the true GW problem. 
In practice, \textsc{Entropic-GW} may thus oscillate between distinct transport plans without reaching convergence: we show a typical failure case in \cref{fig:gradientbias}. 

%On the other hand, the alternate minimization of \cref{eq:alternate_minimization_scheme} ensures that EGW loss decreases at each step, preventing optimization instabilities. Strikingly, both strategies are actually equivalent: we immediately deduce the stability of \textsc{Entropic-GW} in the CNT case.
Fortunately, this cannot happen with CNT costs
since \textsc{Entropic-GW} is then equivalent to our scheme:
\begin{restatable}[Primal descent]{proposition}{equivalencepgd}
\label{prop:equivalence_pgd}
Couplings $(\pi_t)$ obtained by alternate minimization from Eq.~\eqref{eq:alternate_minimization_scheme} also satisfy Eq.~\eqref{eq:entropic_cgd_sequence},
with a monotonic decrease of the EGW loss at every step.
\end{restatable}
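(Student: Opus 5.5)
The plan is to compute the gradient $\grad_{\pi_t}\Loss$ explicitly in the CNT setting and show that, up to terms that depend on $x$ alone or on $y$ alone, it is a positive multiple of the bilinear cost $c_{\Gamma_t}$ of \cref{corrolary:egw_icp_pi} with $\Gamma_t = \Gamma^\star(\pi_t)$. The argument then has three steps: compute the gradient, compare the two linear subproblems, and read off monotonicity.

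\textbf{Step 1: computing the gradient.} Since $\Loss$ is a quadratic form in $\pi$, its first variation is
\begin{equation*}
\grad_{\pi}\Loss(x,y) = 2\int \big(c_\X(x,x')-c_\Y(y,y')\big)^2 \diff\pi(x',y').
\end{equation*}
I will expand the square. The terms $\int c_\X(x,x')^2\diff\alpha(x')$ and $\int c_\Y(y,y')^2\diff\beta(y')$ depend only on $x$ or only on $y$, because $\pi$ has marginals $\alpha,\beta$. The cross term is $-4\int c_\X(x,x')c_\Y(y,y')\diff\pi(x',y')$. Substituting \cref{eq:schoenberg_embedding}, I write $c_\X(x,x') = 2\cdot\tfrac12\|\varphi(x)\|^2 + 2\cdot\tfrac12\|\varphi(x')\|^2 - 2\langle\varphi(x),\varphi(x')\rangle$, and similarly for $c_\Y$. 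I then expand the product and integrate against $\pi$, using the centering $\int\varphi\,\diff\alpha = 0$ and $\int\psi\,\diff\beta=0$.

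\textbf{Step 2: identifying the surviving terms.} Up to functions of $x$ alone or $y$ alone, I expect three terms to survive:
\begin{itemize}
\item $\|\varphi(x)\|^2\|\psi(y)\|^2$;
\item $4\langle \varphi(x), \Gamma_t\psi(y)\rangle$, where $\Gamma_t=\int\varphi(x')\psi(y')^\top\diff\pi_t$ as given by \cref{corrolary:egw_icp_gamma};
\item cross terms such as $\|\varphi(x)\|^2\int\|\psi(y')\|^2\diff\beta$, which are separable and can be discarded.
\end{itemize}
Collecting these should give
\begin{equation*}
\grad_{\pi_t}\Loss(x,y) = 4\,c_{\Gamma_t}(\Phi(x),\Psi(y)) + a(x)+b(y),
\end{equation*}
because $\langle\Phi(x),\overline{\Gamma}_t\Psi(y)\rangle = \langle\varphi(x),\Gamma_t\psi(y)\rangle + \tfrac14\|\varphi(x)\|^2\|\psi(y)\|^2$. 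The main obstacle is tracking the constants so that the scaling is exactly $4$ (or $8$) with the entropic weight $\varepsilon$ versus $\varepsilon/8$. I will check this against the $\mathcal{F}$ normalization of \cref{prop:entropic_gw_cnt_linearized}; equivalently, I can verify that $\grad_\pi \Loss = 8\,\partial_\pi \mathcal{F}(\Gamma^\star(\pi),\pi)$ by an envelope argument, differentiating the identity $\Loss(\pi) = C + 8\min_\Gamma \mathcal{F}_0(\Gamma,\pi)$ (the $\varepsilon=0$ part) in $\pi$.

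\textbf{Step 3: equivalence of the subproblems and monotonic decrease.} Separable terms $a(x)+b(y)$ integrate to constants over $\coupling{\alpha}{\beta}$. Hence the problem in Eq.~\eqref{eq:entropic_cgd_sequence} has the same minimizer as
\begin{equation*}
\min_\pi \int 8\,c_{\Gamma_t}\diff\pi+\varepsilon\KL{\pi}{\alpha\otimes\beta} = 8\min_\pi \mathcal{F}(\Gamma_t,\pi)+\text{const},
\end{equation*}
and that minimizer is $\pi^\star(\Gamma_t)=\pi_{t+1}$. For monotonic decrease, I use that the alternate minimization is block coordinate descent on $\mathcal{F}$, so $\mathcal{F}(\Gamma_{t+1},\pi_{t+1})\le \mathcal{F}(\Gamma_t,\pi_{t+1})\le\mathcal{F}(\Gamma_t,\pi_t)$. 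Since $\Loss(\pi)+\varepsilon\KL{\pi}{\alpha\otimes\beta} = C + 8\min_\Gamma\mathcal{F}(\Gamma,\pi) = C+8\,\mathcal{F}(\Gamma^\star(\pi),\pi)$ by \cref{prop:entropic_gw_cnt_linearized} and \cref{corrolary:egw_icp_gamma}, the EGW loss evaluated at $\pi_t$ equals $C+8\mathcal{F}(\Gamma_t,\pi_t)$ and is therefore non-increasing.
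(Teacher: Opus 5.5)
Your proposal is correct and follows essentially the paper's route: expanding $\nabla_{\pi_t}\mathcal{L}$ through the CNT embeddings and the centering shows that, up to terms that are constant on $\coupling{\alpha}{\beta}$, it equals $8\,c_{\Gamma_t}$, so the \textsc{Entropic-GW} update is exactly the $\OT_{\varepsilon/8}$ problem defining $\pi_{t+1}=\pi^\star(\Gamma_t)$. The paper differs only in pairing the gradient with a second coupling, so the separable terms appear directly as constants; it also leaves the monotone decrease implicit, and your block-coordinate argument on $\mathcal{F}$ supplies it for $t\geq 1$.

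One correction is needed in your Step~2 display: $-4\int c_\X c_\Y\,\diff\pi_t$ reduces to $-4\big(\|\varphi(x)\|^2\|\psi(y)\|^2+4\langle\varphi(x),\Gamma_t\psi(y)\rangle\big)=8\,c_{\Gamma_t}(\Phi(x),\Psi(y))$, not $4\,c_{\Gamma_t}$. This factor matters because it is what produces the temperature $\varepsilon/8$ that you correctly use in Step~3.
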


Similarly, we draw an equivalence between alternate minimization and gradient descent on the dual space $\Hspace$:
\begin{restatable}[Dual descent]{proposition}{equivalencedualgrad}
\label{prop:equivalence_dualgrad}
Operators $\Gamma_t \in \Hspace$ obtained by alternate minimization from Eq.~\eqref{eq:alternate_minimization_scheme} also satisfy:
\begin{equation}
\label{eq:gd_sequence}
    \begin{gathered}
     \Gamma_{t+1} = \Gamma_t - (1/2) \cdot \grad{\Dualfun_{\varepsilon}}(\Gamma_t) \\  \text{where} \quad \Dualfun_{\varepsilon}: \Gamma \mapsto 
     \textstyle\min_{\pi \in \coupling{\alpha}{\beta}}
     \mathcal{F}(\Gamma, \pi)
\end{gathered}
\end{equation}
derives from the objective $\mathcal{F}$ defined in \cref{prop:entropic_gw_cnt_linearized}.
\end{restatable}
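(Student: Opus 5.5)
The plan is to compute the gradient of $\Dualfun_{\varepsilon}$ with an envelope (Danskin) argument, then check that the gradient step with size $1/2$ lands exactly on $\Gamma^\star(\pi_{t+1})$.

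First, write $\mathcal{F}(\Gamma,\pi)$ in a form that isolates the dependence on $\Gamma$. Since $\overline{\Gamma}$ acts as $\Gamma$ on the first block and as the identity on the scalar block,
\begin{equation*}
\scal{\overline{\Gamma}}{\Phi(x)\Psi(y)^{\top}}_{\HS} = \scal{\Gamma}{\varphi(x)\psi(y)^{\top}}_{\HS} + \tfrac14\|\varphi(x)\|^2\|\psi(y)\|^2 .
\end{equation*}
It follows that
\begin{equation*}
\mathcal{F}(\Gamma,\pi) = \norm{\Gamma}_{\HS}^2 - 2\scal{\Gamma}{M(\pi)}_{\HS} + G(\pi),
\end{equation*}
where $M(\pi) = \int \varphi(x)\psi(y)^{\top}\diff\pi(x,y)$ and $G(\pi)$ does not depend on $\Gamma$. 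This also recovers \cref{corrolary:egw_icp_gamma}, since minimizing in $\Gamma$ gives $\Gamma^\star(\pi) = M(\pi)$.

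Second, compute the gradient of $\Dualfun_{\varepsilon}(\Gamma) = \min_{\pi}\mathcal{F}(\Gamma,\pi)$. For $\varepsilon>0$, \cref{corrolary:egw_icp_pi} shows that the minimizer $\pi^\star(\Gamma)$ is unique, because it is an entropic OT plan and the KL term is strictly convex. Danskin's theorem then applies: $\Dualfun_\varepsilon$ is a minimum of functions that are smooth in $\Gamma$, over the weakly compact set $\coupling{\alpha}{\beta}$, with a unique minimizer. It gives
\begin{equation*}
\grad\Dualfun_{\varepsilon}(\Gamma) = \partial_\Gamma \mathcal{F}(\Gamma,\pi^\star(\Gamma)) = 2\Gamma - 2M(\pi^\star(\Gamma)).
\end{equation*}
Substituting into the gradient step gives
\begin{equation*}
\Gamma_t - \tfrac12\grad\Dualfun_\varepsilon(\Gamma_t) = M(\pi^\star(\Gamma_t)) = M(\pi_{t+1}) = \Gamma^\star(\pi_{t+1}) = \Gamma_{t+1},
\end{equation*}
which is exactly the alternate minimization update of Eq.~\eqref{eq:alternate_minimization_scheme}.

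The main obstacle is justifying Danskin's theorem rigorously in the Hilbert–Schmidt setting. The cleanest route avoids citing it abstractly and proves differentiability directly:
\begin{itemize}
\item $\Dualfun_\varepsilon$ is $\norm{\Gamma}^2_{\HS}$ minus a maximum of functions affine in $\Gamma$, so $\Dualfun_\varepsilon - \norm{\cdot}_{\HS}^2$ is concave.
\item The superdifferential of this concave part at $\Gamma$ contains $-2M(\pi)$ for every minimizer $\pi$, by the standard argument of comparing $\Dualfun_\varepsilon(\Gamma')$ with $\mathcal{F}(\Gamma',\pi^\star(\Gamma))$.
\item Uniqueness of $\pi^\star(\Gamma)$, together with continuity of $\Gamma\mapsto\pi^\star(\Gamma)$ (from stability of entropic OT with respect to a cost that varies continuously on compact supports), makes the superdifferential a singleton. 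Hence $\Dualfun_\varepsilon$ is Fréchet differentiable.
\item $M$ is bounded because $\varphi$ and $\psi$ are bounded on the compact supports, so all quantities are well defined.
\end{itemize}
In the unregularized case $\varepsilon=0$ the minimizer need not be unique. There the statement should be read with $\grad\Dualfun_0$ taken as a supergradient selected by $\pi_{t+1}$, and the same computation applies verbatim.
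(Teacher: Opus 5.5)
Your proposal is correct and follows the paper's proof. Both establish $\nabla\Dualfun_{\varepsilon}(\Gamma) = 2\Gamma - 2\int \varphi(x)\psi(y)^{\top}\diff\pi^\star(\Gamma)$ and then substitute $\pi_{t+1}=\pi^\star(\Gamma_t)$, $\Gamma_{t+1}=\Gamma^\star(\pi_{t+1})$ to obtain $\nabla\Dualfun_{\varepsilon}(\Gamma_t) = 2\Gamma_t - 2\Gamma_{t+1}$. The one difference is how differentiability is justified: the paper cites \citep[Proposition 2]{rioux2024entropic}, an implicit-function-theorem argument that it asserts carries over to the Hilbert--Schmidt setting, whereas your envelope argument based on the concavity of $\Dualfun_{\varepsilon} - \norm{\cdot}_{\HS}^2$ is self-contained.
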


\cite{rioux2024entropic}
studied a similar dual gradient descent scheme in finite dimensions,
with step sizes that must stay small ($\tau = \mathcal{O}(\varepsilon)$)
to account for the reduced smoothness of $\Dualfun_\varepsilon$
at low temperatures $\varepsilon$.
Beyond generalizing this framework to CNT costs, \cref{prop:equivalence_dualgrad} makes the gradient step size independent of $\varepsilon$.
Our alternate minimization point of view allows us to adapt the majorization-minimization (MM) framework to our problem \cite{lange2016mm}, providing quantitative convergence results even with a fixed $\tau = 1/2$:

\begin{restatable}{theorem}{altminconvergencereg}
\label{prop:altmin_convergence_reg}
Let $(\pi_t, \Gamma_t)$ be the alternate sequence of Eq.~\eqref{eq:alternate_minimization_scheme}.
Then, $\norm{\Gamma_t - \Gamma_{t+1}}_{\HS} \rightarrow 0$ and every  subsequential limit of $(\Gamma_t)$ is a critical point of $\Dualfun_{\varepsilon}$, with:
\begin{equation}
\label{eq:cvg_estimate_gamma}
    \textstyle\sum_{k=0}^{t-1} \norm{\Gamma_k - \Gamma_{k+1}}_{\HS}^2 ~~\leq~~ \Dualfun_{\varepsilon}(\Gamma_0) - \Dualfun_{\varepsilon}(\Gamma_{t})~.
\end{equation}
\end{restatable}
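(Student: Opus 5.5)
We plan to treat the alternate minimization of Eq.~\eqref{eq:alternate_minimization_scheme} as a majorization--minimization scheme on the dual function $\Dualfun_{\varepsilon}(\Gamma) = \min_{\pi} \mathcal{F}(\Gamma,\pi)$. The key structural observation is that, for a fixed coupling $\pi$, the map $\Gamma \mapsto \mathcal{F}(\Gamma,\pi)$ is an exact quadratic in $\Gamma$ with identity Hessian (up to the factor $2$). Writing $\Gamma^\star(\pi) = \int \varphi(x)\psi(y)^\top \diff\pi(x,y)$ as in \cref{corrolary:egw_icp_gamma}, only the $\Hilb_\X\times\Hilb_\Y$ block of $\overline{\Gamma}$ depends on $\Gamma$, and the scalar block contributes a term independent of $\Gamma$. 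Expanding the inner product, we will therefore check that
\begin{equation*}
\mathcal{F}(\Gamma,\pi) = \norm{\Gamma - \Gamma^\star(\pi)}_{\HS}^2 + \mathcal{F}(\Gamma^\star(\pi),\pi)
\end{equation*}
for every $\Gamma \in \Hspace$. This identity is the only computation needed, and it follows directly from the definition of $\mathcal{F}$ in \cref{prop:entropic_gw_cnt_linearized}.

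\textbf{Descent estimate.} Set $\pi_{t+1} = \pi^\star(\Gamma_t)$, so that $\Dualfun_\varepsilon(\Gamma_t) = \mathcal{F}(\Gamma_t,\pi_{t+1})$, and set $\Gamma_{t+1} = \Gamma^\star(\pi_{t+1})$. The identity above gives
\begin{equation*}
\Dualfun_\varepsilon(\Gamma_t) = \norm{\Gamma_t-\Gamma_{t+1}}_{\HS}^2 + \mathcal{F}(\Gamma_{t+1},\pi_{t+1}) \ge \norm{\Gamma_t-\Gamma_{t+1}}_{\HS}^2 + \Dualfun_\varepsilon(\Gamma_{t+1}),
\end{equation*}
where the inequality uses that $\Dualfun_\varepsilon$ is a minimum over $\pi$. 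Summing this telescoping bound from $k=0$ to $t-1$ yields Eq.~\eqref{eq:cvg_estimate_gamma}.

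\textbf{Lower bound on $\Dualfun_\varepsilon$.} To conclude that $\norm{\Gamma_t-\Gamma_{t+1}}_{\HS}\to 0$, we need $\Dualfun_\varepsilon$ to be bounded below. By \cref{prop:entropic_gw_cnt_linearized}, $8\Dualfun_\varepsilon(\Gamma) \ge GW_\varepsilon(\alpha,\beta) - C(\alpha,\beta)$, which is finite. Alternatively, one can argue directly: the supports are compact, so $\varphi$ and $\psi$ are bounded, the linear term is at most $O(\norm{\Gamma}_{\HS})$, and the KL term is nonnegative. The partial sums are then bounded, the series converges, and the increments tend to zero. The same coercivity shows that the iterates $\Gamma_t$ stay bounded, since $\Dualfun_\varepsilon(\Gamma_t)$ is nonincreasing.

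\textbf{Critical points.} We will first show that $\Dualfun_\varepsilon$ is differentiable with $\grad\Dualfun_\varepsilon(\Gamma) = 2(\Gamma - \Gamma^\star(\pi^\star(\Gamma)))$. This follows from an envelope/Danskin argument: for $\varepsilon>0$ the minimizer $\pi^\star(\Gamma)$ is unique by strict convexity of the KL term, and $\mathcal{F}$ is linear in the data entering through $\pi$. This formula is consistent with \cref{prop:equivalence_dualgrad}, and it gives $\norm{\grad\Dualfun_\varepsilon(\Gamma_t)}_{\HS} = 2\norm{\Gamma_t-\Gamma_{t+1}}_{\HS}\to 0$.

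We then take a subsequence $\Gamma_{t_k}\to\bar\Gamma$ and aim to conclude $\grad\Dualfun_\varepsilon(\bar\Gamma)=0$. This requires continuity of $\Gamma\mapsto\grad\Dualfun_\varepsilon(\Gamma)$, which reduces to continuity of $\Gamma\mapsto \Gamma^\star(\pi^\star(\Gamma))$. Here the Sinkhorn potentials of Eq.~\eqref{eq:sinkhorn_pi} depend continuously on the bilinear cost $c_\Gamma$ in sup norm. The cost $c_\Gamma$ is Lipschitz in $\Gamma$ because the embeddings are bounded, so $\pi^\star(\Gamma)$ varies weakly continuously and its integral against the bounded continuous map $\varphi\psi^\top$ varies continuously.

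\textbf{Main obstacle.} The hard step is infinite dimension. Bounded sequences in $\Hspace$ only have \emph{weakly} convergent subsequences, so "subsequential limit" must either be taken in the strong topology, where continuity applies directly, or we must show that the $\Gamma_t$ lie in a compact set. For the latter, I would try to prove that the range of $\Gamma^\star$ is compact: every $\Gamma_{t+1}=\int \varphi\psi^\top \diff\pi_{t+1}$ lies in the closed convex hull of the compact set $\{\varphi(x)\psi(y)^\top\}$, which is compact by Mazur's theorem. This gives strong subsequential limits for $t\ge1$. In the case $\varepsilon=0$ the differentiability claim fails, and only a Clarke/subgradient-type criticality would be attainable.
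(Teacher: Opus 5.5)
Your proposal is correct and follows the paper's proof. The paper also writes $\Dualfun_{\varepsilon}(\Gamma)=\min_\pi f_{\varepsilon,\pi}(\Gamma)$ with $f_{\varepsilon,\pi}$ quadratic in $\Gamma$ and minimized at $\int \varphi(x)\psi(y)^{\top}\diff\pi$, derives $\Dualfun_{\varepsilon}(\Gamma_t)-\Dualfun_{\varepsilon}(\Gamma_{t+1})\geq\norm{\Gamma_t-\Gamma_{t+1}}_{\HS}^2$ and telescopes, then uses $\nabla\Dualfun_{\varepsilon}(\Gamma_t)=2(\Gamma_t-\Gamma_{t+1})$ (from \cref{prop:equivalence_dualgrad}) with continuity of the gradient; your lower bound on $\Dualfun_{\varepsilon}$, the Mazur compactness argument, and the caveat that $\varepsilon>0$ is needed for differentiability only make explicit what the paper leaves implicit.
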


\paragraph{Optimization in the Space of Linear Operators.} Although we established the equivalence of alternate minimization, primal descent and dual descent, these optimal schemes offer distinct perspectives.
By parameterizing the problem via the operator $\Gamma$, whose dimensions reflect the geometries of the costs, we effectively recast the combinatorial matching problem as a linear registration task in a high-dimensional feature space.

This reparameterization has profound practical implications. 
Because estimating a global linear transform is inherently more robust than resolving a precise point-wise matching, the solver becomes tolerant to approximation errors. Consequently, we can employ cheap, coarse estimates of the Optimal Transport solutions needed in \cref{corrolary:egw_icp_pi} without compromising the final convergence.
We discuss this point further in \cref{appendix:additional_expes}.
Moreover, since the dual variable $\Gamma$ encodes the global geometric alignment of the measures $\alpha$ and $\beta$ without being tied to their supports, it provides a robust framework to visualize the optimization landscape (as illustrated in \cref{section:experiments}) and naturally supports the transfer of correspondences across different sampling resolutions.

\section{Implementation}

\paragraph{Notations.}
We now consider two discrete measures:
\begin{equation*}
   \alpha = \textstyle\sum_{i=1}^\NN a_i \dirac{x_i} ~~\text{ and }~~ \beta = \textstyle\sum_{j=1}^\MM b_j \dirac{y_j}~,
\end{equation*}
where the samples
$x_1, \dots, x_\NN \in \X$ and
$y_1, \dots, y_\MM \in \Y$
belong to the base spaces $\X$ and $\Y$
while the two collections of probability weights
$a_1, \dots, a_\NN \geqslant 0$ and 
$b_1, \dots, b_\MM \geqslant 0$ sum up to $1$.
We identify admissible couplings $\pi\in \coupling{\alpha}{\beta}$ with matrices $(\pi_{ij})~\in~\R^{\NN\times \MM}$ that satisfy the marginal constraints $\sum_{j} \pi_{ij} = a_i$ and $\sum_{i} \pi_{ij} = b_j$ for all $i$ and $j$.

\paragraph{CNT-GW Solver.} 
Our method relies on the embeddings introduced in \cref{section:gw_cntcosts}.
Assuming that we have access to \emph{centered} collections
of embedding vectors $X_1, \dots, X_\NN \in \R^\DD$
and
$Y_1, \dots, Y_\MM \in \R^\EE$
such that for all $i$ and $j$:
\begin{align}
    \textstyle \sum_{i=1}^\NN a_i X_i ~&=~ 0~, &
    c_\X(x_i, x_j)~&=~ \|X_i - X_j\|^2~, \label{eq:embedding_X}\\
    \textstyle \sum_{j=1}^\MM b_j Y_j ~&=~ 0~, &
    c_\Y(y_i, y_j)~&=~ \|Y_i - Y_j\|^2~, \label{eq:embedding_Y}
\end{align}
\cref{alg:dual_kernelpca} provides an efficient implementation
of the alternate minimization scheme of Eq.~\eqref{eq:alternate_minimization_scheme}.
In this pseudo-code, that we detail in \cref{subsection:sinkhorn_algorithms}, $\textsc{SinkhornOT}^{\|\cdot\|^2}_{\varepsilon/8}$
refers to a solver of the EOT problem
for the squared Euclidean cost $c(x, y) = \|x-y\|^2$ in $\R^{\DD+1}$.
Following \cref{corrolary:egw_icp_pi}
and Eq.~\eqref{eq:sinkhorn_pi}, we use two dual vectors
$(f_i) \in \R^\NN$ and $(g_j) \in \R^\MM$ to encode the current optimal coupling $\pi_t \in\R^{\NN\times \MM}$ as:
\begin{equation*}
    \pi_{ij}
    ~=~
    a_i b_j \cdot \exp \tfrac{8}{\varepsilon}[
    f_i + g_j - \|\BAR{X}_i - \BAR{Z}_j\|^2
    ]
\end{equation*}
without having to store a large $\NN\times \MM$ array in memory.
In \cref{appendix:subsection:egw_grad}, we explain how to compute efficiently the derivatives of the $\GW_\varepsilon$ and $\SGW_\varepsilon$
loss functions with respect to the weights and positions of the samples.
Combined with the guarantees of \cref{section:sinkhorn_biases},
this original contribution opens the door to the use
of scalable loss functions based on GW theory.

\begin{algorithm}[t]
\caption{\textsc{CNT-GW} \\
\textbf{Input:} 
probability weights $a_1, \dots, a_\NN$
and $b_1, \dots, b_\MM$, \\
\phantom{~~~~}embeddings $X_1, \dots, X_\NN \in \R^\DD$
and
$Y_1, \dots, Y_\MM \in \R^\EE$. \\
\textbf{Output:} optimal dual potentials
$f \in \R^\NN$ and $g \in \R^\MM$, \\
\phantom{~~~~}optimal linear map $\Gamma \in\R^{\DD\times\EE}$.
}
\begin{algorithmic}[1]
\label{alg:dual_kernelpca}
%\Input{blabla}
% \STATE {\bfseries Input:} $a,b$, $c_X,c_Y$,  $\varepsilon > 0$, $S > 0$, $D > 0$,$\text{tol} > 0$.
\STATE $\Gamma_0 \gets 0_{\DD\times\EE} \text{~or another initialization in~}\R^{\DD\times\EE} $
\vspace*{.1cm}
\STATE $\BAR{X}_i \gets [X_i, \tfrac{1}{2}\|X_i\|^2] \in \R^{\DD+1}$
\vspace*{.1cm}
\WHILE{$\norm{\Gamma_t - \Gamma_{t+1}} > \text{tol}$}
    \vspace*{.1cm}
    \STATE $\BAR{Z}_j \gets [\Gamma_t Y_j, \tfrac{1}{2}\|Y_j\|^2] \in \R^{\DD+1}$

    \STATE $f, g \gets \textsc{SinkhornOT}^{\|\cdot\|^2}_{\varepsilon/8}(a_i, \BAR{X}_i ; b_j, \BAR{Z}_j)$

    \STATE $\widetilde{X}_j \gets \sum_{i=1}^\NN a_i X_i \cdot \exp \tfrac{8}{\varepsilon}[
    f_i + g_j - \|\BAR{X}_i - \BAR{Z}_j\|^2
    ]$
\vspace*{.1cm}
    \STATE $\Gamma_{t+1} \gets 
           \sum_{j=1}^\MM b_j \widetilde{X}_j Y_j^\top  \in \R^{\DD\times \EE} $
    \vspace*{.1cm}
    \STATE $t \gets t+1$
\vspace*{.1cm}
\ENDWHILE
\end{algorithmic}
\end{algorithm}

\paragraph{CNT Embeddings.}
While embeddings $X_i$ and $Y_j$ are easy to compute
when $c_\X$ and $c_\Y$ correspond to squared Euclidean costs,
simple expressions for the injective embedding maps of \cref{prop:entropic_gw_cnt_linearized}
may not always be available.
In this context, we propose to apply Kernel PCA \cite{scholkopf1997kernel}
on the kernels $k_\X$ and $k_\Y$ induced by
$c_\X$ and $c_\Y$ via \cref{prop:cnt_positive_kernel}.
Selecting a small number of PCA components allows us to produce
low-dimensional embeddings $X_i$ and $Y_j$ that approximately
satisfy Eqs.~(\ref{eq:embedding_X}-\ref{eq:embedding_Y}).
In the experiments below, we combine this method with other dimension reduction
techniques to handle cases where $c_\X$ and $c_\Y$
are (non-squared) Euclidean norms, kernel-based formulas or even geodesic distances
on 3D surfaces \cite{coifman2005geometric,lipman2010biharmonic,panozzo2013weighted}.
In \cref{appendix:subsection:detailed_kernel_impl}, we also present an \emph{exact} Algorithm, 
$\textsc{Kernel-GW}$,
that works directly with the cost functions $c_X$ and $c_Y$
but has to represent the (possibly infinite-dimensional) operator
$\Gamma$ as a $\NN \times \MM$ array via the kernel trick.

\textbf{Complexity.}
We can reuse
efficient computational routines that were initially designed for
the popular ``Wasserstein-2'' problem, i.e. OT with a squared Euclidean cost.
In \cref{section:detailed_implementation}, we detail how to use the \texttt{KeOps} library \cite{feydy2020fast,charlier2021kernel}
to implement the iterations of \textsc{CNT-GW} with
a $\mathcal{O}(\NN+\MM)$ memory footprint
and $\mathcal{O}(\NN\MM)$ time complexity.
Likewise, we compute kernel PCA embeddings
with a $\mathcal{O}(\NN\DD + \MM\EE)$ memory footprint
and $\mathcal{O}(\NN^2\DD + \MM^2\EE)$ time complexity.
All in all, finding a local minimum of the EGW problem for CNT costs
with \cref{alg:dual_kernelpca} thus requires
$\mathcal{O}(\NN\DD + \MM\EE)$ storage space
and
$\mathcal{O}(\NN^2\DD + \MM^2\EE + \NN \MM \ninner \nouter)$
time, where
$\ninner$ and $\nouter$
stand for the number of iterates in the \textsc{SinkhornOT}
and \textsc{CNT-GW} solvers, respectively.
These two integers depend on the geometry of the input distributions
and usually stay in the $20$-$200$ and $10$-$50$ ranges, respectively.
In \cref{subsection:multiscale_gw}, we also introduce a multiscale \textsc{MsGW} solver that implements the first outer iterations of \textsc{CNT-GW}
on coarse sub-samplings of the input distributions.

To conclude, we stress that our Algorithms remain local optimization methods.
While our fast solvers can be used on random initializations
$\Gamma_0$ to quickly explore the basins of the optimization landscape,
computing the global optimum of the EGW problem in full generality remains difficult.

\label{section:implementation}

\section{Experiments}
\label{section:experiments}

%\begin{table*}[t!]
%  \caption{Evaluation of convergence times and GW loss $GW_{\varepsilon}$ on shape data. We use $c_\X,c_\Y = \norm{\cdot}_{\text{geodesic}}$ for \textsc{Kids} and \textsc{Faust}, $c_\X,c_\Y = \norm{\cdot}_{\R^3}$ otherwise.}
%  \label{table:benchmark}
%  \begin{center}
%    \begin{small}
%      \begin{sc}
%        \begin{tabular}{lrrrrrrrrrrr}
%          \toprule
%          Shapes & $\NN, \MM$ & \multicolumn{2}{c}{EGW} & \multicolumn{2}{c}{KGW} & \multicolumn{2}{c}{QLrGW} & \multicolumn{2}{c}{CntGW} & \multicolumn{2}{c}{MsGW}  \\
%           & & Time &  $GW_{\varepsilon}$ & Time &  $GW_{\varepsilon}$ & Time &  $GW_{\varepsilon}$ & Time &  $GW_{\varepsilon}$ \\
%          \midrule
%          Horses & $4k$ & $81s$ & - & $16s$ & - & $15s$ & - & $3s$ & - &$\mathbf{2s}$ & - \\
%          Hands & $10k$& $426s$ & - & $256s$ & - & $45s$ & - & $21s$ & - & $\mathbf{5s}$ & - \\
%          Dogs & $36k$& \multicolumn{2}{c}{Mem} & \multicolumn{2}{c}{Mem} & $205s$ & - & $145s$ & - & $\mathbf{26s}$ & - \\
%          Kids &$60k$ & \multicolumn{2}{c}{Mem} & \multicolumn{2}{c}{Mem} & $2,025s$ & $\mathit{0.025}$ & $279s$ & $\mathit{0.020}$ & $\mathbf{83s}$ & $\mathit{0.020}$ \\
%          Faust &$177k$ & \multicolumn{2}{c}{Mem} & \multicolumn{2}{c}{Mem} & \multicolumn{2}{c}{-} & \multicolumn{2}{c}{-} & $\mathbf{355s}$ & $\mathit{0.014}$ \\
%          \bottomrule
%        \end{tabular}
%      \end{sc}
%    \end{small}
%  \end{center}
%  \vskip -0.1in
%\end{table*}

\begin{table*}[t!]
  \caption{Evaluation of different solvers on shape data (solving time and GW objective attained after convergence).
  We use $c_\X,c_\Y = \norm{\cdot}_{\text{geodesic}}$ for \textsc{Kids}, \textsc{Faust}, \textsc{Hips} and \textsc{Vessels}; $c_\X,c_\Y = \norm{\cdot}_{\R^3}$ otherwise. For \textsc{Faust}, only \textsc{MsGW} was evaluated as it is the only solver converging in less than one hour. True objectives $GW_{\varepsilon}$ are only reported on \textsc{Horses} and \textsc{Hands}; for larger datasets, these could not be computed without memory overflow, and approximate values are given instead (in italic).}
  \label{table:benchmark}
  \begin{center}
    \begin{small}
        \begin{tabularx}{\textwidth}{lrRRRRRRRRRR}
          \toprule
           \sc Shapes & $\NN, \MM$ & \multicolumn{2}{c}{\sc \quad EGW} & \multicolumn{2}{c}{\sc \quad KGW} & \multicolumn{2}{c}{\sc \quad QLrGW} & \multicolumn{2}{c}{\sc \quad CntGW} & \multicolumn{2}{c}{\sc \quad MsGW}  \\
           & & \sc Time &  $GW_{\varepsilon}$ & \sc Time &  $GW_{\varepsilon}$ & \sc Time &  $GW_{\varepsilon}$ & \sc Time &  $GW_{\varepsilon}$ & \sc Time &  $GW_{\varepsilon}$ \\
          \midrule
          \sc Horses & $4k$ & $81s$ & $1.4\text{e-2}$ & $16s$ & $1.3\text{e-2}$ & $15s$ & $1.4\text{e-2}$ & $3s$ & $1.3\text{e-2}$ &$\mathbf{2s}$ & $1.3\text{e-2}$ \\
          \sc Hands & $10k$& $426s$ & $1.3\text{e-2}$ & $256s$ & $1.3\text{e-2}$ & $45s$ & $1.3\text{e-2}$ & $21s$ & $1.3\text{e-2}$ & $\mathbf{5s}$ & $1.3\text{e-2}$ \\
          \sc Femurs &$25k$ & \sc Mem. & \sc Mem. & \sc Mem. & \sc Mem. & $54s$ & $\mathit{6.3}\textit{e-3}$ & $21s$ & $\mathit{6.5}\textit{e-3}$ & $\mathbf{10s}$ & $\mathit{6.6}\textit{e-3}$ \\
          \sc Dogs & $36k$& \sc Mem. & \sc Mem. & \sc Mem. & \sc Mem. & $205s$ & $\mathit{1.2}\textit{e-2}$ & $145s$ & $\mathit{1.2}\textit{e-2}$ & $\mathbf{26s}$ & $\mathit{1.2}\textit{e-2}$ \\
          \sc Kids &$60k$ & \sc Mem. & \sc Mem. & \sc Mem. & \sc Mem. & $2,025s$ & $\mathit{2.5}\textit{e-2}$ & $279s$ & $\mathit{2.0}\textit{e-2}$ & $\mathbf{83s}$ & $\mathit{2.0}\textit{e-2}$ \\
          \sc Hips &$60k$ & \sc Mem. & \sc Mem. & \sc Mem. & \sc Mem. & $479s$ & $\mathit{1.0}\textit{e-2}$ & $342s$ & $\mathit{9.4}\textit{e-3}$ & $\mathbf{89s}$ & $\mathit{9.4}\textit{e-3}$ \\
          \sc Vessels &$100k$ & \sc Mem. & \sc Mem. & \sc Mem. & \sc Mem. & $2,444s$ & $\mathit{7.7}\textit{e-3}$ & $891s$ & $\mathit{7.5}\textit{e-3}$ & $\mathbf{196s}$ & $\mathit{7.4}\textit{e-3}$ \\
          \sc Faust &$177k$ & \sc Mem. & \sc Mem. & \sc Mem. & \sc Mem. & $>1h$ &- & $>1h$ & - & $\mathbf{355s}$ & $\mathit{1.4}\textit{e-2}$ \\
          \bottomrule
        \end{tabularx}
    \end{small}
  \end{center}
  \vskip -0.1in
\end{table*}

\begin{figure*}[t]
\centering
\includegraphics[width=\linewidth]{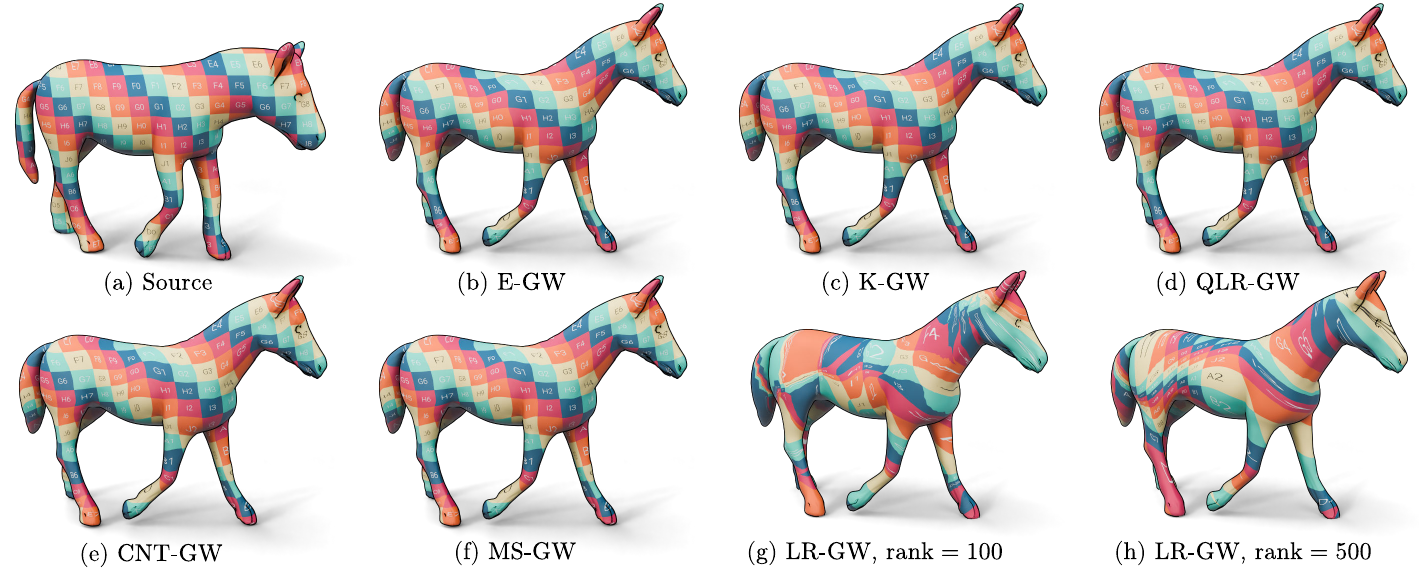} 
    \caption{GW-based texture transfer between horse shapes, using transport plans computed with different solvers (as detailed in \cref{appendix:experiment_details}).
    We also illustrate the limitations of heuristic solvers by plotting transport plans obtained with \textsc{LowRank-GW}.}
    \label{fig:horse_texture_comparisons}
\end{figure*}

\paragraph{Setup.}
We now present some benchmarks and illustrative experiments,
performed on a Dell laptop powered by an Intel i7-12700H CPU and
a relatively modest Nvidia RTX 3050 GPU with 4Gb of VRAM.
When not specified otherwise, we use embeddings
in dimension $\DD=\EE=20$ (that explain $80\%$ to $90\%$ of the variance of the kernel matrices), normalize the input distributions to have a radius of $1$ and
use a temperature $\varepsilon = 10^{-3}$.
For the sake of simplicity, we run the \textsc{SinkhornOT} solver
with a fixed number of $100$ iterations and stop all GW solvers
when the objective decreases by less than $10^{-5}$ between two consecutive steps. For our methods, this
usually corresponds to $10$ to $50$ ``outer'' steps depending
on the geometry of the problem.
We provide full details for our experiments in \cref{appendix:experiment_details} and leave a thorough exploration of these parameters to future works.

%\paragraph{Differentiation of Gromov-Wasserstein losses.} Our new EGW formulation also allows us to efficiently compute its gradient with respect to the input measures $\alpha$ and $\beta$.
%Indeed, EOT gradients can be estimated simply from Sinkhorn dual potentials \cite{feydy2019interpolating}, a technique that we adapt to the EGW case in \cref{appendix:subsection:egw_grad} and that is made tractable thanks to the kernel trick.
%This open the way for dynamic deformations based on EGW potentials, and make of Gromov-Wasserstein a suitable loss function for geometric machine learning applications.  

%\paragraph{Sinkhorn annealing.} A la fin, pour obtenir un plan de transport sharp.

\paragraph{Benchmarks.} 
As detailed in \cref{appendix:subsection:review_computational_GW},
the \textsc{Entropic-GW} (\textsc{EGW}) \cite{peyre2016gromov} and
\textsc{Quadratic-LowRank-GW} (\textsc{QLrGW}) \citep[Algorithm 2]{scetbon2022linear}
solvers provide strong baselines for the estimation of optimal GW couplings at scale.
In \cref{table:benchmark}, we compare our \cref{alg:dual_kernelpca} (\textsc{CntGW}), \cref{alg:kernel_gw} (\textsc{KGW})
and \cref{alg:multiscale_dual_kernelpca} (\textsc{MsGW}) to these state-of-the-art solvers
on pairs of shapes from the CAPOD \cite{papadakis2014canonically}, SMAL \cite{Zuffi_CVPR_2017}, KIDS \cite{rodola2014dense}, FAUST \cite{Bogo_CVPR_2014} and MedShapeNet \cite{li2025medshapenet} datasets. 
We supplement these results by extensive convergence plots, ablation studies
and visualizations in \cref{appendix:additional_expes}.
We observe a significant speed-up for our new solvers \textsc{KGW} and \textsc{CntGW} over their corresponding baselines  \textsc{EGW} and \textsc{QLrGW}, while achieving similar GW objectives: \cref{fig:horse_texture_comparisons} confirms that the output of these methods are all qualitatively equivalent.
Most importantly, our multiscale $\textsc{MsGW}$ solver consistently outperforms competitors by one to two orders of magnitude.
Although heuristic accelerations exist, like \textsc{LowRank-GW} \citep[Section 5]{scetbon2022linear}, they generate important artifacts that strongly degrade the transport plan quality (\cref{fig:horse_texture_comparisons}g and h).
As a result, our multiscale algorithm is the only GW solver capable of scaling to more than $\NN=100k$ samples in minutes while providing transport maps of sufficient quality for shape matching applications.

\paragraph{Optimization Landscape.} In \cref{fig:introduction}e, we use \textsc{MsGW} with geodesic costs to transfer a texture from one FAUST surface to another.
We guide the registration by specifying $5$ pairs of landmarks
$(x_1, y_1)$, \dots, $(x_5, y_5)$ 
that should be matched with each other
at the top of the skull and at the tips of the arms and legs. They induce a covariance matrix
$\Gamma_0 = \sum_{k=1}^5 X_k Y_k^\top$ in feature space that we use to initialize the alternate minimization: this ensures that we fall in the ``correct'', user-defined local minimum of the GW objective.

%$\Gamma$ was initialized using $5$ landmarks (on the head and the extremities of the legs and arms) and a last annealed Sinkhorn was run after after convergence to obtain a sharp transport plan.
%Using Multiscale-GW, the computation took $11$ min in total ($70 s$ for Kernel PCA, $15 s$ for coarse optimization, $365 s$ for fine-scale optimization and $195 s$ for annealed Sinkhorn).
%Therefore, we are able to compute accurate correspondences between large scale shapes in only a few minutes.

Going further, we explore the optimization landscape of the GW problem
as we match two poses from the KIDS dataset, with cost functions
that correspond to geodesic distances embedded in spaces
of dimension $\DD=\EE=20$ \cite{panozzo2013weighted}.
We sample the coefficients of $500$ initializations $\Gamma_0 \in \R^{\DD\times\EE}$ uniformly at random (with a relevant scale factor),
and run the ``coarse'' phase of \textsc{MsGW} with this diverse collection of seeds.
We identify $25$ distinct local minima, that we refine in the second step of \textsc{MsGW} to obtain high-resolution transport plans $\pi$.
As illustrated in \cref{fig:kids_landscape},
the local minima of the GW problem correspond to body part permutations,
with the 8 best candidates attracting $82\%$ of our random seeds $\Gamma_0$.
In the lower right corner, the global optimum corresponds to the widest basin of attraction and represents the correct alignment of limbs.

We perform a Principal Component Analysis on the 25
linear maps $\Gamma\in\R^{\DD\times\EE}$ that encode these local minima, weighted by the number of seeds that they attracted.
This allows us to represent them in the most relevant
2-dimensional plane of the full operator space,
with contour lines of the objective function $\Dualfun_\varepsilon$
of Eq.~\eqref{eq:gd_sequence} in the background.
This figure reveals the structure of the matching problem,
as a left-right symmetry dominates the visualization.

\begin{figure}[t]
\centering
\includegraphics[width=\linewidth]{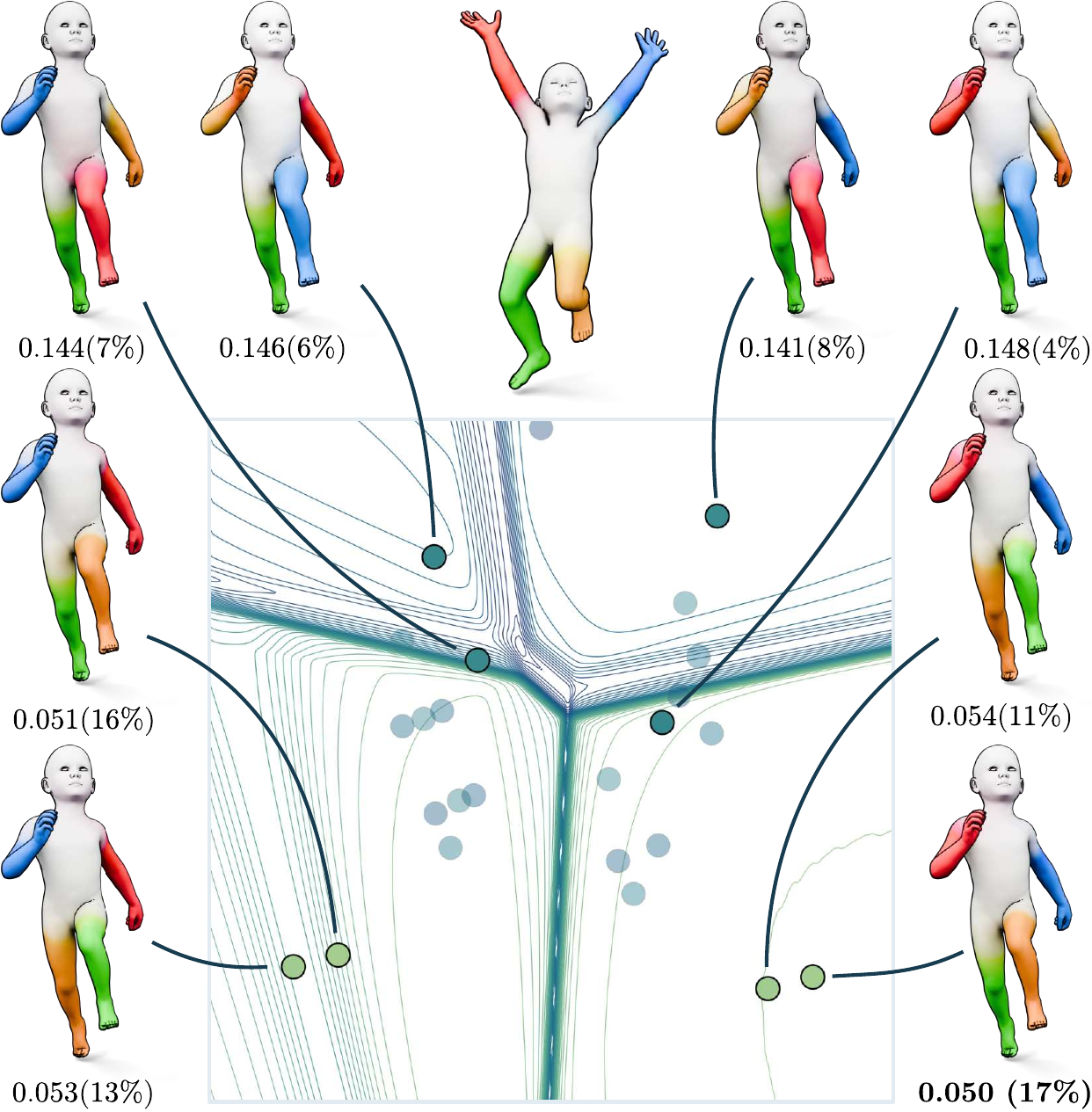} 
    \caption{Optimization landscape of the EGW problem, visualized in a dual plane.
    We match a source point cloud (with arms raised) to a target pose (running).
    We highlight the $8$ best minima, the corresponding EGW losses and the proportions of random seeds $\Gamma_0$ that fell in their attraction basins.
    }
    \label{fig:kids_landscape}
\end{figure}

\paragraph{GW Barycenters.} 
%We finally showcase an application of our framework to anatomic shape processing.
%\cref{fig:vertebres} shows that GW provides anatomically meaningful comparison between vertebras with clearly non-isometric shapes; it also highlights the benefits of the CNT framework, allowing to use geodesic distances as base costs -- whereas squared norm GW would only enable squared geodesic distances that provide non satisfying results.
Since our solvers provide gradients for the $\GW_\varepsilon$
objective and the debiased $\SGW_\varepsilon$ divergence,
we can easily implement gradient flows and other variational schemes
that involve the GW metric.
To illustrate this, we compute GW barycenters between
anatomical shapes from the BodyParts3D\footnote{(c) The Database Center for Life Science licensed under CC Attribution-Share Alike 2.1, Japan.} dataset
in \cref{fig:egw_barycenters}.
Specifically, we sample $\NN=3,000$ points uniformly on the source and target surfaces to create two distributions of points $\alpha$ and $\beta$ in $\R^3$.
Then, we perform gradient descent on a point cloud $z_1$, \dots, $z_\NN$ 
that parameterizes a measure $\gamma_z = \tfrac{1}{\NN}\sum_i \delta_{z_i}$ in order to minimize the weighted objective:
\begin{equation*}
    (1 - \lambda) \cdot \SGW_\varepsilon(\gamma_z,\alpha) + \lambda \cdot \SGW_\varepsilon(\gamma_z, \beta)~,
\end{equation*}
where $\lambda\in [0,1]$ is an interpolation slider.
We turn the resulting point clouds into surface meshes
using Poisson surface reconstruction~\cite{kazhdan2006poisson}.

\begin{figure}[t]
\centering
\includegraphics[width=\linewidth]{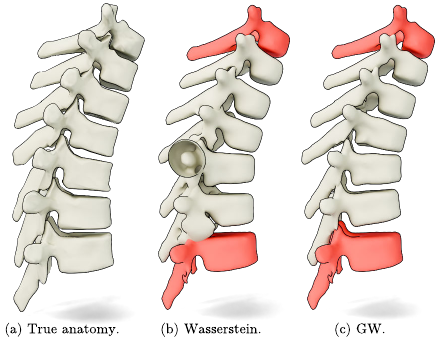} 
\caption{(a) Interpolating between the first (top) and seventh (bottom) thoracic vertebrae.
We compute barycenters on normalized data, and align them
with true anatomical positions for visualization purposes.
(b)~Wasserstein barycenters are now affordable, but
create many topological artifacts \cite{agueh2011barycenters}.
(c)~We compute GW barycenters for the Euclidean cost in $\R^3$.
The GW metric puts more emphasis on topology preservation
and could become a versatile baseline for 3D shape analysis.
}
\label{fig:egw_barycenters}
\end{figure}

\section{Conclusion}

While these experiments validate our theoretical findings, further development is required to turn these illustrations into competitive state-of-the-art methods for shape registration or domain adaptation.
Specifically, future works should adapt our approach to the unbalanced \cite{sejourne2021unbalanced} or fused-GW settings \cite{vayer2020fused,thual2022aligning}. Furthermore, the impact of cost functions, embedding dimensions or annealing schedules for the temperature $\varepsilon$ \cite{chizat2024annealed} remains to be fully characterized.

Despite these necessary extensions, we believe our work offers significant conceptual value to the field. 
By establishing a surprising connection between a broad class of Quadratic Assignment Problems (GW with CNT costs) and the linear registration of point clouds, we provide a new geometric perspective on structured data alignment.
This perspective closely relates to Procrustes–Wasserstein (PW) approaches \cite{grave2019unsupervised,alvarez2019towards}, which combine rigid alignment with optimal transport–based matching. 
A deeper comparison between the GW and PW frameworks would certainly provide valuable insights into both problems, from both geometric and algorithmic viewpoints.
This bridge also invites further investigation into the functional maps framework, where analogous problems have been explored in spectral or data-driven feature spaces \cite{ovsjanikov2012functional,ren2020maptree,pai2021fast}, and may pave the way for the development of provably approximate global optimizers \cite{jubran2021provably}.

% \newpage

\section*{Acknowledgements}

This work was supported by the French ``Agence Nationale de la Recherche'' via the ``PR[AI]RIE-PSAI'' project (ANR-23-IACL-0008).
We would also like to thank Robin Magnet for relevant comments
and his invaluable help with the rendering of the 3D figures.

\section*{Impact Statement}

This paper presents work whose goal is to advance the field of machine learning. There are many potential societal consequences of our work, none of which we feel must be specifically highlighted here.

\bibliography{paper}
\bibliographystyle{icml2026}

%%%%%%%%%%%%%%%%%%%%%%%%%%%%%%%%%%%%%%%%%%%%%%%%%%%%%%%%%%%%%%%%%%%%%%%%%%%%%%%
%%%%%%%%%%%%%%%%%%%%%%%%%%%%%%%%%%%%%%%%%%%%%%%%%%%%%%%%%%%%%%%%%%%%%%%%%%%%%%%
% APPENDIX
%%%%%%%%%%%%%%%%%%%%%%%%%%%%%%%%%%%%%%%%%%%%%%%%%%%%%%%%%%%%%%%%%%%%%%%%%%%%%%%
%%%%%%%%%%%%%%%%%%%%%%%%%%%%%%%%%%%%%%%%%%%%%%%%%%%%%%%%%%%%%%%%%%%%%%%%%%%%%%%

\newpage
\appendix
\onecolumn

\section{Additional Background}
\label{appendix:sinkhorn}
\subsection{The Sinkhorn Algorithm}

By convex duality, the EOT problem is equivalent to the maximization of the following objective:
\begin{equation*}
\OT_{\varepsilon}(\alpha, \beta) = \max_{f\in \Cont{X}} \max_{g \in \Cont{Y}} \int f(x) \diff\alpha(x) + \int g(y) \diff \beta(y)
    - \varepsilon \int \left( \exp \left( \frac{f(x) + g(y) - c(x,y)}{\varepsilon} \right) - 1 \right) \diff\alpha(x) \diff\beta(y).
\end{equation*}
The Sinkhorn algorithm solves this dual problem by block coordinate ascent.
Starting from an initial potential $g_0 \in \Cont{Y}$, it computes a sequence of functions $(f_s, g_s) \in \Cont{X} \times \Cont{Y}$ through alternating updates:
\begin{equation*}
f_{s+1} = \Sink{\varepsilon,\beta}(g_s) \quad \text{ and } \quad g_{s+1} = \Sink{\varepsilon,\alpha}(f_{s+1}),
\end{equation*}
where the Sinkhorn operators $\Sink{\varepsilon,\beta}$ and $\Sink{\alpha}$ are defined as:
\begin{equation}
\label{eq:sinkhorn_operators}
\begin{split}
    \Sink{\varepsilon,\beta}(g): x \in X \mapsto & - \varepsilon \cdot \log \int \exp \left( \frac{g(y) - c(x,y)}{\varepsilon} \right) \diff \beta(y), \\
    \Sink{\varepsilon,\alpha}(f) : y \in Y \mapsto & - \varepsilon \cdot \log \int \exp \left( \frac{f(x) - c(x,y)}{\varepsilon} \right) \diff \alpha(x).
\end{split}
\end{equation}
The sequence $(f_s,g_s)$ converges to the optimal dual potentials $(f^*, g^*)$, recovering the optimal EOT plan $\pi^*$ via:
\begin{equation*}
    \forall~x,y \in X\times Y, \quad \frac{\diff\pi^*}{\diff \alpha\diff \beta}(x, y) = \exp \left( \frac{f^*(x) + g^*(y) - c(x,y)}{\varepsilon} \right)~.
\end{equation*}
%In the finite setting, the Sinkhorn updates require at most $O(N^2)$ operations and can be implemented with a linear memory footprint using the \texttt{KeOps} library \cite{feydy2020fast}.
%Going further, when $c(x,y) = \|x-y\|^2$ is the squared Euclidean distance, annealing and multiscale heuristics let modern solvers reach sub-quadratic run times: see for instance the \texttt{GeomLoss} library \cite{feydy2019interpolating}.
However, the intermediate $\pi_s$ obtained by using $(f_s,g_s)$ instead of $(f^*,g^*)$ in the previous formula are not valid couplings as they do not exactly satisfy the marginal constraints.
For discrete measures of size $\NN$, the number of iterations required to reduce $\Err(\pi_s)$ to a threshold $\delta$ scales as $s = O(\delta^{-2} \cdot ( \log(\NN) + \norm{c}_{\infty}/\varepsilon))$ \cite{altschuler2017near}, where $\norm{c}_{\infty} = \max(c) - \min(c)$ and:
\begin{equation*}
    \Err(\pi_s) = \norm{\int \diff\pi_s(\cdot,y) - \alpha}_1 +  \norm{\int \diff\pi_s(x,\cdot) - \beta}_1.
\end{equation*}
Therefore, while the convergence rate is not strongly influenced by the input size $\NN$, it depends on the entropic regularization: keeping $\varepsilon$ sufficiently large is crucial to ensure fast computations.

\subsection{Review of the Computational Gromov-Wasserstein Literature}
\label{appendix:subsection:review_computational_GW}

This section provides an overview of existing GW solvers, including both unregularized and entropic approaches.
We also discuss algorithms designed for GW extensions such as Fused \cite{vayer2020fused} and Unbalanced \cite{sejourne2021unbalanced} Gromov-Wasserstein, as these frameworks offer a broader perspective on current computational challenges.
To maintain clarity, we adopt the notation from \cref{section:background} for general measures and \cref{section:implementation} for finite discretizations.

\subsubsection{Unregularized Gromov-Wasserstein Solvers}
\label{appendix:subsection:literature_unregularized_solvers}

\paragraph{Global Optimization Methods.} GW is closely related to the quadratic assignement problem (QAP), which is known to be NP-hard \cite{burkard1998quadratic}.
Consequently, finding a global minimum is computationally expensive.
Recent works have proposed relaxations to make this global search tractable: \cite{mula2024moment} formulate GW as a generalized moment problem that can be truncated at the desired degree of precision, while \cite{chen2024semidefinite} propose a semi-definite relaxation solvable in polynomial time.
Alternatively, \cite{ryner2023globally} exploit the structure of squared Euclidean costs to design a cutting plane algorithm that efficiently explores the set of admissible solutions.
Despite these advances, global methods remain slow and are limited to inputs containing at most a few hundred points.

\paragraph{Conditional Gradient Descent (Franke-Wolfe).} By contrast, local minima can be obtained in polynomial time using first-order methods.
The standard approach is the Conditional Gradient Descent (or Franke-Wolfe) algorithm \cite{titouan2019optimal}, whose updates are given by:
\begin{equation}
\label{eq:franke_wolfe}
\pi_{t+1} = (1 - \gamma) \cdot \pi_t + \gamma \cdot \tilde{\pi}_t \quad \text{ where } \quad \tilde{\pi}_t = \argmin_{\pi \in \coupling{\alpha}{\beta}} \int \grad_{\pi_t}{\Loss} \cdot \diff\pi~,
\end{equation}
where $\gamma \in [0, 1]$ is either fixed or selected by line search.
Each step is equivalent to a linear optimal transport problem with cost $c = \grad_{\pi_t}{\Loss}$, effectively transforming GW into a sequence of OT subproblems solvable in cubic time.
For finite measures, these subproblems take the form:
\begin{equation*}
    \tilde{\pi}^{t+1} = \argmin_{\pi \in \coupling{\alpha}{\beta}} \sum C^{t}_{ij} \pi_{ij}  \quad \text{ where } \quad  C^{t}_{ij} = \sum_{kl} (c_\X(x_i, x_k) - c_\Y(y_j, y_l))^2 \pi^{t}_{kl}~.
\end{equation*}
When the base costs $c_X$ and $c_Y$ are CNT, the loss $\Loss$ is concave: it allows the step size $\gamma$ to be set to $1$, ensuring convergence without needing to tune this parameter. \cite{maron2018probably}.

\paragraph{Cheaper Gromov-Wasserstein Variants.} 
Many practical applications only require a quantification of similarity between $\alpha$ and $\beta$ rather than an exact matching.
In such cases, cheaper alternatives to the full GW problem may suffice.
This category includes \textsc{Sliced GW} \cite{titouan2019sliced}, \textsc{Quantized-GW} \cite{chowdhury2021quantized}, \textsc{Minibatch-GW} \cite{fatras2021minibatch} and \textsc{Linear-GW} \cite{beier2022linear}, as well as GW-inspired distances for Gaussian mixture models \cite{salmona2023gromov}.
While these heuristics can be computed in quadratic or even linear time, they rely on heavy approximations and are less suitable for tasks requiring precise point-wise correspondences such as shape matching.

\subsubsection{Entropic Gromov-Wasserstein Solvers}
\label{appendix:subsection:literature_egw_solvers}

\paragraph{Exact Entropic Solvers.} Entropic regularization, introduced by \cite{solomon2016entropic,peyre2016gromov}, allows GW to be solved using the efficient Sinkhorn algorithm.
The standard solver, \textsc{Entropic-GW}, is a Projected Gradient Descent (PGD) in the KL-divergence geometry.
This method, widely adopted in the literature, uses updates of the form: 
\begin{equation*}
\pi_{t+1} = \argmin_{\pi \in \coupling{\alpha}{\beta}} \left( \int  \grad_{\pi_t}{\Loss} \cdot \diff\pi + \varepsilon \KL{\pi}{\alpha \otimes \beta} \right).
\end{equation*}
These updates are similar to the Franke-Wolfe steps of Eq.~\eqref{eq:franke_wolfe} and can be viewed as a GW adaptation of the \emphasize{soft assign} method proposed in \cite{gold2002graduated}. 
Focusing on Unbalanced GW, \cite{sejourne2021unbalanced} also propose a bilinear relaxation of the EGW loss which can be solved by alternate minimization:
\begin{equation*}
\pi_{t+1} = \argmin_{\pi \in \coupling{\alpha}{\beta}} \left[ \int \left( \norm{x-x'}^2-\norm{y-y'}^2 \right)^2 \diff\pi(x,y) \diff\pi_t(x',y') + \varepsilon \KL{\pi}{\alpha \otimes \beta} \right].
\end{equation*}
For CNT base costs, this relaxation is tight and its minimum coincides with the true EGW solution.
In the finite case, this technique is equivalent to \textsc{Entropic-GW} and consists in solving a sequence of EOT problems:
\begin{equation}
\label{eq:pgd_finitecase}
    \pi^{t+1} = \argmin_{\pi \in \coupling{\alpha}{\beta}} \left( \sum C^{t}_{ij} \pi_{ij}  + \varepsilon \KL{\pi}{\alpha \otimes \beta} \right) \quad \text{ where } \quad  C^{t}_{ij} = \sum_{kl} (c_\X(x_i, x_k) - c_\Y(y_j, y_l))^2 \pi^{t}_{kl}~.
\end{equation}
Using the Sinkhorn algorithm, these steps can be computed in $\mathcal{O}(\NN\MM)$ time.
The overall complexity, however, is dominated by the computation of the cost matrix $C^t$ that takes $\mathcal{O}(\NN^2 \MM^2)$ time.
In practice, the same solution can be obtained with $C^{t}_{ij} = - 2 \sum_{k} c_\X(x_i, x_k) \sum_l c_\Y(y_j, y_l) \pi^{t}_{kl}$ instead \cite{peyre2016gromov}, so that the overall running time becomes cubic.
When costs are squared norms, \cite{scetbon2022linear} further reduce this computation to $\mathcal{O}(\NN\MM\DD)$; similarly, the dual solver \textsc{Dual-GW} of \cite{rioux2024entropic} achieves quadratic complexity and is guaranteed to converge.

\paragraph{Accelerated Approximations.} Apart from the squared norm case, solving EGW remains expensive due to the computation of the matrix $C^t$. 
The algorithm complexity can still be improved thanks to several approximation strategies introduced in the last few years.
\textsc{Sampled-GW} \cite{kerdoncuff2021sampled} computes $C^{t}$ using a random subset of indices $k$ and $l$ in Eq.~\eqref{eq:pgd_finitecase}, while \textsc{Sparse-GW} \cite{li2023efficient} applies a sparsity mask to fasten both cost matrix computations and Sinkhorn iterations.
Alternatively, \textsc{Quadratic-LowRank-GW} \cite{scetbon2022linear} approximates the input matrices $(C^\X_{ij}) = (c_\X(x_i, x_j))$ and $(C^\Y_{ij}) = (c_\Y(y_i,y_j))$ via low-rank factorization to accelerate the estimation of $C^t$. 
These three algorithms reach a quadratic running time, improving significantly the scalability of the original EGW solvers.
With \textsc{LowRank-GW}, \cite{scetbon2022linear} combine the low-rank cost reduction of \textsc{Quadratic-LowRank-GW} with a low rank constraint on the transport plan, further reducing computation time to $\mathcal{O}(\NN+\MM)$ but generating quantization artifacts in the transport plan.
Finally, \textsc{ULOT} (Unsupervised Learning of Optimal Transport) \cite{mazelet2025unsupervised} trains neural networks to predict optimal couplings, removing the optimization burden at inference time -- although its training phase keeps a cubic complexity.

\paragraph{Proximal Solvers.} A last strategy consists in adding a proximal term to the GW loss rather than an entropic regularization \cite{xu2019scalable,xu2019gromov,kerdoncuff2021sampled}. 
The updates of \textsc{Proximal-GW} are given by:
\begin{equation*}
\pi_{t+1} = \argmin_{\pi \in \coupling{\alpha}{\beta}} \left( \int  \grad_{\pi_t}{\Loss} \cdot \diff\pi + \varepsilon \KL{\pi}{\pi_t} \right).
\end{equation*}
This approach ensures the convergence of $(\pi_t)$ to the minimum of the \emphasize{unregularized} GW problem while remaining Sinkhorn-compatible.
Indeed, in the finite case, it can be solved using:
\begin{equation*}
    \pi^{t+1} = \argmin_{\pi \in \coupling{\alpha}{\beta}} \left( \sum C^{t}_{ij} \pi_{ij}  + \varepsilon \KL{\pi}{\alpha \otimes \beta} \right) \quad \text{ where } \quad  C^{t}_{ij} = - 2 \sum_{kl} c_\X(x_i,x_k) c_\Y(y_j, y_l) \pi^{t}_{kl} - \varepsilon \log (\pi^{t}_{ij})~.
\end{equation*}
This alternative is attractive when exact, unregularized GW solution are required.
However, it has other limitations.
The proven convergence rates only apply to finite measures and degrade as the number of points increases \cite{xie2020fast}.
The sequential dependency of the proximal term (due to the $\log(\pi^t)$ term) complicates its integration with the advanced scalability techniques described in \cref{section:implementation}, such as lazy tensor manipulation with KeOps or multiscaling, which are essential for large-scale applications.
Finally, in many machine learning contexts, the entropic term acts as a regularizer against noisy input, making EGW preferable to the unregularized GW formulation.

\subsubsection{Choice of Baselines}

Despite the wide variety of Gromov-Wasserstein solvers available, only entropic-regularized methods (\cref{appendix:subsection:literature_egw_solvers}) scale to inputs containing thousands of points, and only accelerated approximations can handle tens of thousands. 
Consequently, we focus our benchmarks on two main competitors: \textsc{Entropic-GW}, the reference solver for exact EGW problems, and \textsc{Quadratic-LowRank-GW}, the current state-of-the-art for approximate solving.
We discarded heuristic methods and linear-time approximations as they introduce artifacts that are unsuited to our shape matching applications (\cref{fig:horse_texture_comparisons}g and h). 
Finally, we excluded other potential baselines for the following reasons: \begin{itemize} 
\item \textsc{Sampled-GW} relies on stochastic sampling: due to this randomness, the solver does not converges effectively which prevents reliable comparisons of convergence speed. 
\item \textsc{Proximal-GW} addresses a distinct problem formulation, and does not scale to $\NN>10^4$ points. 
\item Methods such as \textsc{Sparse-GW} or \textsc{ULOT} rely on different computational backends or address different settings, making fair comparisons difficult to perform. 
\end{itemize}

Note that although we did not include them in our quantitative benchmarks, we still discuss the convergence properties of \textsc{Sampled-GW} and \textsc{Proximal-GW} in \cref{subsection:appendix:solver_cvg}. 
Finally, we stress that the quadratic-time methods excluded from our benchmarks were originally validated on inputs with at most a few thousand nodes. 
Therefore, while we do not provide a direct performance comparison against all pre-existing GW solvers, our experiments demonstrate scalability to problems whose sizes are $10$ to $100$ times larger than those addressed in prior works.

\newpage

\section{Detailed Implementation}
\label{section:detailed_implementation}
We detail here the pseudo-code of the main routines used in our method.

\subsection{Sinkhorn Algorithms}
\label{subsection:sinkhorn_algorithms}

\paragraph{Standard Sinkhorn.} \cref{alg:sinkhorn} details the standard version of the Sinkhorn algorithm.
It implement the discrete counterpart of the Sinkhorn operators defined in \cref{eq:sinkhorn_operators}, adressing the discrete EOT problem:
\begin{equation*}
    \min_{\pi \in \R^{\NN\times \MM}} \left[ \sum_{ij} C_{ij} \pi_{ij} + \varepsilon \sum_{ij} \pi_{ij} \log \left(\frac{\pi_{ij}}{a_i b_j} \right)\right] \quad \text{ subject to } \quad \sum_j \pi_{ij} = a_i \quad \text{ and } \quad \sum_i \pi_{ij} = b_j,
\end{equation*}
where $C \in \R^{\NN \times \MM}$ is a cost matrix and $a \in \R^\NN, b \in \R^\MM$ are two positive vectors summing to $1$.
From the final dual potentials $(f^\ninner, g^\ninner)$, we recover the approximate optimal plan using \cref{eq:sinkhorn_pi}:
\begin{equation*}
    \forall~i,j, \quad \pi_{ij} = a_i b_j \exp((f^\ninner_i + g^\ninner_j - C_{ij})/\varepsilon).
\end{equation*}

\begin{algorithm}
\caption{\textsc{Sinkhorn}.}
\label{alg:sinkhorn}
\begin{algorithmic}
\STATE {\bfseries Parameters:} $\varepsilon > 0$, $\ninner > 0$.
\STATE {\bfseries Inputs:} $a \in \R^\NN,b \in \R^\MM$, $C \in \R^{\NN\times \MM}$.
\STATE Initialize $f^0 \in \R^\NN, g^0 \in \R^\MM$
\FOR{$n = 0$ to $\ninner-1$}
    \STATE $f^{n+1} \leftarrow - \varepsilon \cdot [\log \sum_j \exp (\log(b_j) + (g^n_j - C_{ij}) / \varepsilon ) ]_i$
    \STATE $g^{n+1} \leftarrow - \varepsilon \cdot [\log \sum_i \exp (\log(a_i)+ (f^{n+1}_i - C_{ij}) / \varepsilon )]_j$
\ENDFOR
\STATE {\bfseries Return $f^\ninner, g^\ninner$}
\end{algorithmic}
\end{algorithm}

\paragraph{Symmetrized Sinkhorn.} \cref{alg:sinkhorn_symm} presents the symmetrized variant of Sinkhorn introduced in \citep[Algorithm 3.4]{feydy2020analyse}.
This variant ensures that the resulting pair $(f^\ninner, g^\ninner)$ remains the same if the roles of $\alpha$ and $\beta$ are exchanged with each other.
As discussed in \cref{subsection:appendix:solver_cvg}, we also notice a drastic convergence speed-up when using symmetrized Sinkhorn in EGW solvers.

\begin{algorithm}
\caption{\textsc{Sinkhorn-symm}.}
\label{alg:sinkhorn_symm}
\begin{algorithmic}
\STATE {\bfseries Parameters:} $\varepsilon > 0$, $\ninner > 0$.
\STATE {\bfseries Inputs:} $a \in \R^\NN,b \in \R^\MM$, $C \in \R^{\NN\times \MM}$.
\STATE Initialize $f^0 \in \R^\NN, g^0 \in \R^\MM$
\FOR{$n = 0$ to $\ninner-1$}
    \STATE $f^{n+1} \leftarrow \frac{1}{2} (f^n -  \varepsilon \cdot [\log \sum_j \exp (\log(b_j) + (g^n_j - C_{ij}) / \varepsilon )]_i)$
    \STATE $g^{n+1} \leftarrow \frac{1}{2} (g^n - \varepsilon \cdot [\log \sum_i \exp (\log(a_i) + (f^n_i - C_{ij}) / \varepsilon)]_j)$
\ENDFOR
\STATE {\bfseries Return $f^\ninner, g^\ninner$}
\end{algorithmic}
\end{algorithm}

\paragraph{Warm-Start Initialization.} The number of iterations needed to reach convergence is highly sensitive to the initialization of the potentials $(f^0, g^0)$.
Since EGW solvers involve a sequence of successive Sinhorn calls, we employ a warm-start strategy: the optimal potentials $(f, g)$ from one optimization step are used to initialize the Sinkhorn loop of the next step.
This greatly reduces the number of iterations required for convergence.

\paragraph{Sinkhorn Annealing.} In order to get sharp transport plans, a common strategy consists is to decrease the regularization parameter $\varepsilon$ at every Sinkhorn iteration.
As proven in \cite{chizat2024annealed}, using a squared root decay $\varepsilon_n = \varepsilon_0 / \sqrt{n}$ guarantees the convergence of the reconstructed transport plans towards the unregularized OT solution.
We use this annealing scheme to apply the transfer texture of \cref{fig:introduction}, only in the last step of the EGW solver.

\subsection{Kernel matrices computations.} 

\label{subsection:appendix_kernel}

\paragraph{Centered Kernels From Costs.}\cref{alg:kernel} computes the kernel matrix $K$ associated with a CNT cost matrix $(C_{ij}) = (c(x_i,x_j))$ using the equation of \cref{prop:cnt_positive_kernel}:
\begin{equation}
\label{eq:kernel_from_cost}
    k(x,x') ~=~ \big( c(x, x_0) + c(x_0, x') - c(x, x') \big) \,/\, 2~.
\end{equation}
The procedure includes a centering step (line 3) which ensures that the embeddings are centered in the kernel space. 
This centering also ensures that the output is independent of the choice of base point $x_0$; we can therefore take $1$ as base index in line 2 of the algorithm without inducing any bias.

\begin{algorithm}
\caption{\textsc{Kernel}}
\label{alg:kernel}
\begin{algorithmic}
\STATE {\bfseries Inputs:} $C \in \R^{\NN \times \NN}, a \in \R^\NN$.
\STATE $K \leftarrow [(C_{i1} + C_{1j} - C_{ij}) / 2]_{ij}$
\STATE $K \leftarrow [K_{ij} - \sum_i a_i K_{ij} - \sum_j a_j K_{ij} + \sum_{ij} a_i a_j K_{ij}]_{ij}$
\STATE {\bfseries Return $K$}
\end{algorithmic}
\end{algorithm}

\paragraph{Kernel PCA.} The Kernel PCA of a centered kernel matrix $K \in \R^{\NN \times \NN}$ is computed from its eigendecomposition.
Let $\Lambda_{1:\DD} \in \R^{\DD \times \DD}$ denote the diagonal matrix containing the $\DD$ largest eigenvalues of $K$ and $V_{1:\DD} \in \R^{\NN \times \DD}$ the corresponding eigenvectors.
The Kernel PCA of $K$ in dimension $\DD$ is then given by $X_i = V_{i,1:\DD} \sqrt{\Lambda_{1:\DD}} \in \R^{\DD}$ for all $i \in \{1,\dots,\NN \}$.

When the cost $c$ is given by a simple mathematical formula, e.g. an analytical function of the Euclidean norm, the kernel of \cref{eq:kernel_from_cost} admits a simple formula as well and the kernel matrix $K$ can be implemented as a \emph{symbolic tensor} with the \texttt{KeOps} library.
Combined with the function \texttt{eigsh} of the \texttt{scipy} library, the truncated PCA of $K$ can thus be obtained without explicitly storing the full matrix.
This allows us to perform Kernel PCAs on $\NN > 10^4$ points without memory overflows.

\subsection{Exact Kernel Embeddings with \textsc{Kernel-GW}}
\label{appendix:subsection:detailed_kernel_impl}

Rather than relying on approximate embeddings, the alternate minimization of \cref{eq:alternate_minimization_scheme} can be implemented exactly using the \emph{kernel trick}.
If $k_\X$ and $k_\Y$ are the kernels obtained from $c_\X$ and $c_\Y$ by \cref{prop:cnt_positive_kernel}, we have: 
\begin{equation*}
    \scal{\varphi(x) \psi(y)^{\top}}{\varphi(x') \psi(y')^{\top}}_{\HS} = k_\X(x,x') k_\Y(y,y')~.
\end{equation*}
Denoting by $\Gamma = \sum_{ij} \varphi(x_i) \psi(y_j)^{\top} \pi_{ij}$ the cross-correlation of a given transport plan $\pi \in \R^{\NN \times \MM}$ , the cost $c_\Gamma$ becomes:
\begin{equation}
\label{eq:c_gamma_kerneltrick}
    c_\Gamma(x,y) = -4 \cdot k_\X(x,x)k_\Y(y,y) - 16 \cdot \sum_{k,l} k_\X(x,x_k)k_\Y(y,y_l)\pi_{kl}~.
\end{equation}
Note that for the sake of simplicity, we include the factor $8$
that was factored out in our statement of \cref{prop:entropic_gw_cnt_linearized}.
\cref{alg:kernel_gw} thus implements the optimization scheme without handling $\Gamma$ explicitly (where $C^X, C^Y$ are the cost matrices of $(x_1,\dots,x_\NN)$ and $(y_1,\dots,y_{\MM})$).
\textsc{Kernel-GW} provides few computational benefits compared to \textsc{Entropic-GW}, as both methods store a large matrices of size $\NN \times \MM$.
Yet, it will help us to isolate the specific effects of kernelization on solver convergence in \cref{subsection:appendix:solver_cvg}, discarding the approximation errors induced by \textsc{CNT-GW} and \textsc{LowRank-GW}.

\begin{algorithm}[ht]
\caption{\textsc{Kernel-GW}.}
\label{alg:kernel_gw}
\begin{algorithmic}
\STATE {\bfseries Parameters:} $\varepsilon > 0$, $\nouter>0$, $\ninner > 0$.
\STATE {\bfseries Inputs:} $a \in \R^{\NN}$, $b \in \R^{\MM}$, $C^{X} \in \R^{\NN \times \NN}$, $C^{Y} \in \R^{\MM \times \MM}$.
\STATE $K^X, K^Y \leftarrow \textsc{Kernel}(C^X, a),~\textsc{Kernel}(C^Y, b)$
\STATE Initialize $\pi^0$
\FOR{$t = 0$ to $\nouter-1$}
    \STATE $ C \leftarrow [-4 \cdot K^X_{ii} K^Y_{jj}
           - 16 \cdot \sum_{kl} K^X_{ik} K^Y_{jl} \pi^t_{kl}~]_{ij}$
    \STATE $f^{t+1}, g^{t+1} \leftarrow \textsc{Sinkhorn-symm}_{\varepsilon,\ninner}(a, b, C)$
    \STATE $\pi^{t+1} = [a_i b_j \exp((f^{t+1}_i + g^{t+1}_j - C_{ij})/\varepsilon)]_{ij}$
\ENDFOR
\STATE {\bfseries Return} $\pi^\nouter, f^\nouter, g^\nouter$
\end{algorithmic}
\end{algorithm}

\subsection{Adaptive Sinkhorn Iterations.} 

In \cref{alg:kernel_gw_adaptive}, we present the adaptive variant of \cref{alg:kernel_gw} where the fixed iteration count $\ninner$ is replaced by an adaptive schedule.
While an initial value $n_{\text{inner}}^0$ is still required, the algorithm automatically increases precision when needed: therefore, this hyperparameter can be set to a small arbitrary value.
Since that $\textsc{EGW-Loss}(\pi^t)$ is guaranteed to decrease at each step when $\pi^t$ corresponds to the true Sinkhorn solution, any violation of this decrease indicates a lack of Sinkhorn convergence: therefore, the adaptive method is guaranteed to converge to a EGW local minimum for $t \rightarrow + \infty$ independently of the choice $n_{\text{inner}}^0$ (although this initial value can influence which local minimum will be attained).

Adaptive scheduling can also be applied to \textsc{CNT-GW}.
In this case, we use the squared Euclidean norm of the embedding space as the base cost for the computations of \textsc{EGW-Loss}, since it can be computed more efficiently.

\begin{algorithm} 
\caption{\textsc{Kernel-GW} with adaptive Sinkhorn iterations.}
\label{alg:kernel_gw_adaptive}
\begin{algorithmic}
\STATE {\bfseries Parameters:} $\varepsilon > 0$, $T > 0$, $n_{\text{inner}}^0 > 0$.
\STATE {\bfseries Inputs:} $a \in \R^\NN,b \in \R^\MM$, $C^X \in \R^{\NN\times \NN}$, $C^X \in \R^{\MM\times \MM}$.
\STATE $K^X, K^Y \leftarrow \textsc{Kernel}(C^X, a),~\textsc{Kernel}(C^Y, b)$
\STATE Initialize $\pi^0$
\FOR{$t = 0$ to $\nouter-1$}
    \STATE $ C \leftarrow [-4 \cdot K^X_{ii} K^Y_{jj}
           - 16 \cdot \sum_{kl} K^X_{ik} K^Y_{jl} \pi^t_{kl}~]_{ij}$
    \STATE $f^{t+1}, g^{t+1} \leftarrow \textsc{Sinkhorn-symm}_{\varepsilon,n_{\text{inner}}^t}(a, b, C)$
    \STATE $\pi^{t+1} = [a_i b_j \exp((f^{t+1}_i + g^{t+1}_j - C_{ij})/\varepsilon)]_{ij}$
    \IF{$\textsc{EGW-Loss}(\pi^{t+1}) > \textsc{EGW-Loss}(\pi^{t})$}
    \STATE $n_{\text{inner}}^{t+1} \leftarrow 2 \cdot n_{\text{inner}}^t$
    \ELSE 
    \STATE $n_{\text{inner}}^{t+1} \leftarrow n_{\text{inner}}^t$
    \ENDIF
\ENDFOR
\STATE {\bfseries Return} $\pi^\nouter, f^\nouter, g^\nouter$
\end{algorithmic}
\end{algorithm}

\subsection{Implementation of \textsc{Multiscale-GW}}
\label{subsection:multiscale_gw}

\cref{alg:multiscale_dual_kernelpca} directly adapts the multiscale technique of \cite{feydy2020analyse} to \textsc{CNT-GW}.
The $\textsc{Coarsen}$ function is implemented using K-Means clustering; the coarse points $X^c \in \R^{\NN^c \times \DD}$ are the K-Means centroids and the weights $a^c$ are the aggregated weights of the clusters:
\begin{equation*}
    a^c_i = \sum_{j \in \mathcal{C}_i} a_j \quad \text{where} \quad \mathcal{C}_i = \left\{j = 1,\dots,\NN \; \text{ s.t. } \; i = \argmin_{k=1,\dots,\NN^c}{\norm{X_j - X^c_k}^2} \right\}.
\end{equation*}
The Sinkhorn potentials $(f, g)$ are upsampled from coarse to fine using the block-wise interpolation described in \cref{alg:multiscale_interpolate}, and are used together with the coarse dual solution $\Gamma^c$ to initialize the large-scale solver. 

\begin{algorithm}
\caption{\textsc{Multiscale-GW}.}
\begin{algorithmic}
\label{alg:multiscale_dual_kernelpca}
\STATE {\bfseries Parameters:} $\varepsilon > 0$, $D > 0$, $\rho \in [0, 1]$.
\STATE {\bfseries Inputs:} $a \in \R^\NN,b \in \R^\MM$, $X \in \R^{\NN \times \DD}$, $Y \in  Y^{\MM \times \EE}$.
\STATE $(X^{c}, a^{c}), (Y^{c}, b^{c}) \leftarrow \textsc{Coarsen}(X, a, k=\lceil\rho \NN \rceil),~~ \textsc{Coarsen}(Y, b, k=\lceil\rho \MM \rceil)$
\STATE $\Gamma^c, f^c, g^c \leftarrow \textsc{CNT-GW}(a^{c}, b^{c}, X^{c}, Y^{c})$
\STATE $f, g \leftarrow \textsc{Interpolate}(g^c, X, Y^{c}, b^c),~~ \textsc{Interpolate}(f^c, Y, X^{c}, a^c)$
\STATE $\Gamma, f, g \leftarrow \textsc{CNT-GW}(a, b, X, Y ~\vert~ \Gamma^0=\Gamma^c, (f^0,g^0)=(f,g))$
\STATE {\bfseries Return} $\Gamma, f, g$
\end{algorithmic}
\end{algorithm}

\begin{algorithm}
\caption{\textsc{Interpolate}.}
\begin{algorithmic}
\label{alg:multiscale_interpolate}
\STATE {\bfseries Inputs:} $g^c \in \R^{\MM^c}, X \in \R^{\NN \times \DD}, Y^{c} \in R^{\MM^c \times \DD}, b^c \in R^{\MM^c}$.
\STATE $C \leftarrow 
    [-4 \norm{X_i}^2\norm{Y^c_j}^2 - 16 \sum_{rs} X_{ir}\, Y^c_{js}\, \Gamma_{rs}]_{ij}$
\STATE $f \leftarrow - \varepsilon \cdot [\log \sum_j \exp (\log(b^c_j) + (g^c_j - C_{ij}) / \varepsilon ) ]_i$
\STATE {\bfseries Return} $f$
\end{algorithmic}
\end{algorithm}

\subsection{EGW Gradient Computations.}
\label{appendix:subsection:egw_grad}

We finally detail the computation of EGW gradients with respect to the input locations $x_i$ and $y_j$.
Specifically, we seek:
\begin{equation*}
    G^X_i = \grad_{x_i}{\GW_{\varepsilon}(\alpha, \beta)}~~\text{ and }~~G^Y_j = \grad_{y_j}{\GW_{\varepsilon}(\alpha, \beta)} \quad \text{ where }~~\alpha = \sum_i a_i \delta_{x_i}~~\text{ and }~~\beta = \sum_i b_j \delta_{y_j}.
\end{equation*}
We could compute them by differentiating through the whole EGW solver using PyTorch's autograd: however, this method would be costly in time and memory and would not scale on large point clouds.
Therefore, we need more efficient methods to perform these computations.
Assuming that the gradient is well defined, we rather apply the envelope theorem to write:
\begin{equation*}
    \grad_{x_i}{\GW_{\varepsilon}(\alpha, \beta)} = \grad_{x_i}C(\push{\varphi}{\alpha},\push{\psi}{\beta}) + \grad_{x_i}{\OT_{\varepsilon}^{\Gamma^{*}}}(\push{\varphi}{\alpha},\push{\psi}{\beta}).
\end{equation*}
The main complexity lies in the presence of the kernel embeddings $\varphi$ and $\psi$ that we need to differentiate through.
We provide two techniques to perform these computations:
the first one, based on the \textsc{Kernel-GW} algorithm, outputs exact EGW gradient while the second one provides approximate gradients based using \textsc{CNT-GW}.

\paragraph{Exact Gradient Computation} The kernel trick provides an exact method to compute EGW gradients.
First, the constant can be differentiated automatically with PyTorch using the explicit formula:
\begin{equation*}
    C(\push{\varphi}{\alpha},\push{\psi}{\beta}) = \sum_{ik} c_\X(x_i,x_k)^2 a_i a_k + \sum_{jl} c_\Y(y_j,y_l)^2 b_j b_l - 4 \sum_{i} k_\X(x_i,x_i)a_i \sum_{j} k_\Y(y_j,y_j)b_j,
\end{equation*}

To differentiate the EOT loss, we use the method of \cite{feydy2019interpolating} recalled in \cref{alg:eot_grad} (where the apostrophes indicate tensors whose auto-differentiation is enabled) and the fact that $c_{\Gamma^*}(x_i,y_j)$ is explicitly computable using \cref{eq:c_gamma_kerneltrick}.
The whole implementation of \textsc{EGW-Grad} is given in \cref{alg:egw_grad}, where \textsc{KernelGW} outputs both the optimal coupling $\pi$ and the optimal potentials $(f,g)$ of the last Sinkhorn step.

\begin{algorithm}
\caption{\textsc{EOT-DualLoss}.}
\label{alg:eot_dualloss}
\begin{algorithmic}
\STATE {\bfseries Parameters :} $\varepsilon > 0$. 
\STATE {\bfseries Inputs:} $a \in \R^\NN,b \in \R^\MM$, $f \in \R^\NN$, $g \in \R^\MM$, $C \in \R^{\NN \times \MM}$.
\STATE $f \leftarrow - \varepsilon \cdot [\log \sum_j \exp (\log(b_j) + (g_j - C_{ij}) / \varepsilon ) ]_i$
\STATE $g \leftarrow - \varepsilon \cdot [\log \sum_i \exp (\log(a_i)+ (f_i - C_{ij}) / \varepsilon )]_j$
\STATE $L \leftarrow \sum_i a_i f_i + \sum_j b_j g_j$
\STATE {\bfseries Return $L$}
\end{algorithmic}
\end{algorithm}

\begin{algorithm}
\caption{\textsc{EOT-Grad}.}
\label{alg:eot_grad}
\begin{algorithmic}
\STATE {\bfseries Parameters :} $\varepsilon > 0$. 
\STATE {\bfseries Inputs:} $a \in \R^\NN,b \in \R^\MM$, $x_i \in \X$, $y_j \in \Y$.
\STATE $f, g \leftarrow \textsc{Sinkhorn}_{\varepsilon}(a, b, C = [c(x_i, y_j)]_{ij})$
\STATE $x'_i, y'_j \leftarrow x_i, y_j \text{ with auto-diff. activated}$ 
\STATE $L' \leftarrow \textsc{EOT-DualLoss}(a, b, f, g, C' = [c(x'_i, y'_j)]_{ij})$
\STATE $G^X, G^Y \leftarrow \text{ Gradient of } L' \text{ w.r.t. } x'_i, y'_j$
\STATE {\bfseries Return $G^X, G^Y$}
\end{algorithmic}
\end{algorithm}

\begin{algorithm}
\caption{\textsc{EGW-Grad}.}
\label{alg:egw_grad}
\begin{algorithmic}
\STATE {\bfseries Parameters :} $\varepsilon > 0$. 
\STATE {\bfseries Inputs:} $a \in \R^\NN,b \in \R^\MM$, $x_i \in \X$, $y_j \in \Y$.
\STATE $P, f, g \leftarrow \textsc{Kernel-GW}_{\varepsilon}(a, b, C^X = [c_\X(x_i,x_k)]_{ik}, C^Y = [c_\Y(y_j,y_l)]_{jl})$.
\STATE $x'_i, y'_j \leftarrow x_i, y_j \text{ with auto-diff. activated}$ 
\STATE $L' \leftarrow \textsc{EOT-DualLoss}(a, b, f, g, C' = [-4 \cdot k_\X(x'_i, x'_i) k_\Y(y'_j, y'_j) - 16 \cdot \sum_{kl} k_\X(x'_i, x_k) k_\Y(y'_j, y_l) \pi_{kl} ]_{ij})$
\STATE $L'_0 \leftarrow \sum_{ik} c_\X(x'_i,x'_k)^2 a_i a_k + \sum_{jl} c_\Y(y'_j,y'_l)^2 b_j b_l - 4 \sum_{i} k_\X(x'_i,x'_i)a_i \sum_{j} k_\Y(y'_j,y'_j)b_j$ 
\STATE $G^X, G^Y \leftarrow \text{ Gradient of } L' + L'_0 \text{ w.r.t. } (x'_i), (y'_j)$

\STATE {\bfseries Return $G^X, G^Y$}
\end{algorithmic}
\end{algorithm}

\paragraph{Approximation with Kernel PCA.}

The previous method outputs exact gradients, but it scales quadratically in memory. 
To further improve computational complexity, we provide a method relying on Kernel PCA instead: the principle is to differentiate through kernel embeddings, then differentiate through \textsc{CNT-GW} using the same technique as before.

Although we cannot differentiate through Kernel PCA directly, we use the following strategy.
Let us denote by $\Lambda_1, \dots, \Lambda_\DD \in \R$ and $V_1, \dots, V_\DD \in \R^\NN$ the eigendecomposition of the centered kernel matrix $K_X = [k_X(x_i, x_j)]_{ij}$.
The \emph{kernel projection map} $\Pi_\X : \X \mapsto \R^{\DD}$ sends any point $x \in \X$ to its kernel embedding by interpolating the embeddings of $x_1,\dots,x_\NN$:
\begin{equation*}
    \Pi_\X(x) = \left[\frac{1}{\sqrt{\Lambda_d}} \sum_{i=1}^\NN k_\X(x, x_i) V_{di} \right]_{d=1,\dots,\DD}.
\end{equation*}
This map is differentiable with respect to $x$, with:
\begin{equation*}
    \grad_{x}{\Pi_\X} = \frac{1}{\sqrt{\Lambda_d}} \sum_{i=1}^\NN \grad_{x}{k_\X(x, x_i)} V_{di}.
\end{equation*}
We can therefore use $\grad_{x_i}{\Pi_\X}$ as an approximation of the differentiation of the Kernel PCA with respect to $x_i$.
The corresponding code is provided in \cref{alg:egw_grad_kernelpca}.

\begin{algorithm}
\caption{\textsc{EGW-Grad} (with Kernel PCA).}
\label{alg:egw_grad_kernelpca}
\begin{algorithmic}
\STATE {\bfseries Parameters :} $\varepsilon > 0, \DD > 0, \EE > 0$. 
\STATE {\bfseries Inputs:} $a \in \R^\NN,b \in \R^\MM$, $x_i \in \X$, $y_j \in \Y$.
\STATE $X, V^X_{1:\DD}, \Lambda^X_{1:\DD} \leftarrow \textsc{Kernel-PCA}(x_i, k_\X, \DD)$ with eigendecomposition
\STATE $Y, V^Y_{1:\EE}, \Lambda^Y_{1:\EE} \leftarrow \textsc{Kernel-PCA}(x_j, k_\Y, \EE)$ with eigendecomposition
\STATE $\Gamma, f, g \leftarrow \textsc{CNT-GW}(a, b, X, Y)$
\STATE $x'_i, y'_j \leftarrow x_i, y_j \text{ with auto-diff. activated}$ 
\STATE $X'_i \leftarrow \textsc{Kernel-Proj}(x'_i,x_i,V^X_{1:\DD}, \Lambda^X_{1:\DD})$
\STATE $Y'_j \leftarrow \textsc{Kernel-Proj}(y'_j,y_j,V^Y_{1:\EE}, \Lambda^Y_{1:\EE})$
\STATE $L' \leftarrow \textsc{EOT-DualLoss}(a, b, f, g, C' = [-4 \cdot \norm{X'_i}^2\norm{X'_j}^2 - 16 \cdot \sum_{k} X'_{ik} \sum_l (Y'_{jl} \Gamma_{kl})]_{ij})$  \quad with $C'$ as symbolic matrix
\STATE $L'_0 \leftarrow \sum_{ik} \norm{X'_i - X'_k}^4 a_i a_k + \sum_{jl} \norm{Y'_j - Y'_l}^4 b_j b_l - 4 \cdot \sum_{i} \norm{X'_i}^2 a_i \sum_j \norm{Y'_j}^2 b_j$ 
\STATE $G^X, G^Y \leftarrow \text{ Gradient of } L' + L'_0 \text{ w.r.t. } (x'_i), (y'_j)$
\STATE {\bfseries Return $G^X, G^Y$}
\end{algorithmic}
\end{algorithm}

\begin{algorithm}
\caption{\textsc{Kernel-Proj}.}
\label{alg:kernel_proj}
\begin{algorithmic}
\STATE {\bfseries Inputs:} $x \in \X$, $x_i \in \X$,$V_{1:\DD} \in \R^{\NN \times \DD}$,$\Lambda_{1:\DD} \in \R^{\DD}$.
\STATE $X \leftarrow [\tfrac{1}{\sqrt{\Lambda_d}} \sum_{i} k_X(x, x_i) V_{di} ]_{d=1,\dots,\DD}$
\STATE {\bfseries Return $X$}
\end{algorithmic}
\end{algorithm}

\newpage

\section{Proofs}
We first recall several definitions necessary to the proofs of this paper.

\textbf{Pushforward.} Given a measure $\alpha \in \meas{\X}$ and a map $\func{\Psi}{X}{Y}$, the\emphasize{pushforward} of $\alpha$ by $\Psi$ (dentoed  $\push{\Psi}{\alpha}$) is the only measure of $\meas{Y}$ such that for all continuous function $h \in \Cont{Y}$, $\quad \int h(y) \diff(\push{\Psi}{\alpha})(y) = \int h(\Psi(x))\diff\beta(x)$.

\textbf{Weak convergence of measures.} A sequence of measures $(\alpha_n) \in \meas{\X}$ converges weakly to $\alpha \in \meas{\X}$ (denoted $\alpha_n  \rightharpoonup \alpha $) if for any continuous function $f \in \Cont{\X}$, $\int\!f \diff\alpha_n \rightarrow \int\!f \diff\alpha$.

\textbf{Hilbert-Schmidt norm and scalar product.} Let $\Gamma \in \Hspace$ be a HS operator.
Its HS norm is equal to $\norm{\Gamma}_{\HS}^2 := \sum_i \norm{\Gamma e_i}_{\Hilb_Y}^2$, where $(e_i)$ is an orthonormal basis of $\Hilb_\X$ (this definition is independent of the choice of $(e_i)$). If $x, x' \in \Hilb_X$ and $y,y' \in \Hilb_Y$, the HS scalar product satisfies $\scal{x y^{\top}}{x' y'^{\top}}_{\HS} = \scal{x}{x'}_{\Hilb_X} \scal{y}{y'}_{\Hilb_Y}$.

\subsection{Proofs of \cref{section:background} and \ref{section:gw_cntcosts}}

\gwisometry*
\begin{proof}
This is result have been proven in \citep[Theorem 5.1]{memoli2011gromov} when $c_\X, c_\Y$ are metric distances, and their proof does not use the triangular inequality.
We will recall their demonstration here, showing that it generalizes to our setting.

If $\alpha$ and $\beta$ are isometric, then $\GW(\alpha,\beta)$ is trivially equal to $0$.
Reciprocally, let $\alpha \in \prob{\X}$, $\alpha \in \prob{\Y}$ such that $\GW(\alpha,\beta)=0$.
There exists a $\pi \in \coupling{\alpha}{\beta}$ such that:
\begin{equation}
\label{eq:loss_zero}
    \int (c_\X(x,x')\!-\!c_\Y(y,y'))^2 \diff\pi(x,y)\diff\pi(x',y') = 0, \quad { i.e. } \quad  \forall~(x,y), (x',y') \in \supp{\pi}, \quad c_\X(x,x') = c_\Y(y,y').
\end{equation}

Let $x \in \X$ and $y,y' \in \Y$.
If $(x,y),(x,y') \in \supp{\pi}$, then \cref{eq:loss_zero} implies that $c_\Y(y,y') = c_\X(x,x) = 0$, and $y=y'$ by hypothesis on $c_\Y$: therefore, for each $x \in \supp{\alpha}$, there is a unique $y \in \supp{\beta}$ such that $(x,y) \in \supp{\pi}$.
Reciprocally, for each $x \in \supp{\alpha}$ there is a unique $y \in \supp{\beta}$ such that $(x,y) \in \supp{\pi}$.
This proves the existence of a bijection $\func{I}{\supp{\alpha}}{\supp{\beta}}$ such that $\supp{\pi} = \{(x,I(x)), x \in \X \}$.

From \cref{eq:loss_zero}, we immediately have that all $(x,I(x)), (x',I(x')) \in \supp{\pi}$ satisfy $c_\X(x,x') = c_\Y(I(x),I(x'))$. Moreover, using the fact that $y = I(x)$ for almost every $(x,y) \in \supp{\pi}$, we write:
\begin{equation*}
    \text{For all}\; h \in \Cont{\Y}, \quad \int h(y) \diff\beta(y) = \int h(y) \diff\pi(x,y) = \int h(I(x)) \diff\pi(x,y) = \int h(\Psi(x)) \diff\beta(y).
\end{equation*}
Therefore, $\beta = \push{I}{\alpha}$, i.e. $I$ pushes $\alpha$ onto $\beta$, which proves the result.
\end{proof}

To prove the main EGW decomposition theorem, we need the following lemma:

\begin{lemma}
\label{prop:cnt_gw_developed}
Let $\X$ and $\Y$ be Hilbert spaces and $c_\X, c_\Y$ be the corresponding squared Hilbert norms.
Let $\alpha$, $\beta$ be compactly supported probability measures over $\X$ and $\Y$.
The GW loss decomposes as:
\begin{equation*}
    \text{For all }~\pi \in \coupling{\alpha}{\beta}, \quad \Loss(\pi) = C(\alpha, \beta) - 8 \cdot \norm{ \int x y^{\top} \diff\pi(x,y)}_{\HS}^2 - 4 \cdot \int \norm{x}^2\norm{y}^2 \cdot \diff\pi(x,y),
\end{equation*}
where $C(\alpha, \beta)$ is a constant that only depends on the marginals $\alpha$ and $\beta$:
\begin{equation*}
C(\alpha, \beta) =  \int \norm{x - x'}^4 \diff\alpha(x)\diff\alpha(x') + \int \norm{y - y'}^4 \diff\beta(y)\diff\beta(y') - 4 \cdot \int \norm{x}^2 \diff\alpha(x) \int \norm{y}^2 \diff\beta(y).
\end{equation*}
\end{lemma}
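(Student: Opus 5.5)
The plan is a direct expansion of the integrand, using that both costs are squared Hilbert norms, followed by an identification of which terms depend on $\pi$ only through its marginals. One point must be fixed first: the formula for $C(\alpha,\beta)$ and the absence of linear terms implicitly require $\int x\,\diff\alpha(x)=0$ and $\int y\,\diff\beta(y)=0$. This is the normalization of \cref{def:gw_embeddings}, and it is how the lemma is used later. I will make this centering assumption explicit at the start. This costs no generality: $\Loss(\pi)$ is invariant under translating $\alpha$ and $\beta$, so the general statement is obtained by replacing $x,y$ with $x-x_\alpha$, $y-y_\beta$ in the right-hand side.

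\textbf{Step 1: split the square.} I will write
\[(\norm{x-x'}^2-\norm{y-y'}^2)^2=\norm{x-x'}^4+\norm{y-y'}^4-2\norm{x-x'}^2\norm{y-y'}^2.\]
Integrated against $\diff\pi(x,y)\diff\pi(x',y')$, the first two terms depend only on the marginals. They give exactly the two quartic terms of $C(\alpha,\beta)$, which are finite because the supports are compact.

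\textbf{Step 2: expand the cross term.} Using $\norm{x-x'}^2=\norm{x}^2+\norm{x'}^2-2\scal{x}{x'}$ and the analogous identity for $y$, the product $\norm{x-x'}^2\norm{y-y'}^2$ splits into nine terms. I will integrate each against $\pi\otimes\pi$ and sort them into four groups.
\begin{itemize}
\item The terms $\norm{x}^2\norm{y}^2$ and $\norm{x'}^2\norm{y'}^2$ each contribute $\int\norm{x}^2\norm{y}^2\diff\pi$.
\item The terms $\norm{x}^2\norm{y'}^2$ and $\norm{x'}^2\norm{y}^2$ each contribute $\int\norm{x}^2\diff\alpha\int\norm{y}^2\diff\beta$.
\item The four mixed terms, such as $\norm{x}^2\scal{y}{y'}$, are linear in $y'$ (or in $x'$). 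They therefore factor through $\int y'\diff\beta(y')=0$ (or $\int x'\diff\alpha(x')=0$) and vanish. This is the only place centering is used.
\item The last term $4\scal{x}{x'}\scal{y}{y'}$ is handled by the HS identity recalled above, $\scal{x}{x'}\scal{y}{y'}=\scal{xy^\top}{x'y'^\top}_{\HS}$. Its double integral is $\norm{\int xy^\top\diff\pi}_{\HS}^2$.
\end{itemize}

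\textbf{Step 3: collect terms.} Multiplying the cross term by $-2$ gives
\[-4\int\norm{x}^2\norm{y}^2\diff\pi-4\int\norm{x}^2\diff\alpha\int\norm{y}^2\diff\beta-8\,\norm{\int xy^\top\diff\pi}_{\HS}^2.\]
The middle term joins $C(\alpha,\beta)$, which yields the claimed formula.

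\textbf{Main obstacle.} The only delicate point is the last term in infinite dimensions. There I must justify that $\int xy^\top\diff\pi$ is a well-defined HS operator (a Bochner integral), since $\norm{xy^\top}_{\HS}=\norm{x}\norm{y}$ is bounded on the compact support. I must also justify that the HS inner product can be passed through the double integral, either by Fubini or by continuity of the inner product. Everything else is bookkeeping.
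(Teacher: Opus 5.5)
Your proposal is correct and follows the paper's proof essentially step for step. It uses the same split of the square, the same expansion of $\norm{x-x'}^2\norm{y-y'}^2$, the identity $\scal{x}{x'}\scal{y}{y'}=\scal{xy^\top}{x'y'^\top}_{\HS}$ with a Bochner-integral justification, and the same cancellation of the mixed terms by centering. You are also right that centering is genuinely needed: the paper's proof invokes it (``since $\alpha$ is centered'') without stating it in the lemma, so making it an explicit hypothesis is the correct reading.
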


\begin{proof}
We follow the same computations as \cite{zhang2024gromov}.
Let $\pi \in \coupling{\alpha}{\beta}$. We first develop:
\begin{equation*}
 \Loss(\pi) = \int \norm{x - x'}^4\diff \pi(x,y) \diff\pi(x',y') + \int \norm{y - y'}^4 \diff \pi(x,y) \diff\pi(x',y') - 2 \int \norm{x - x'}^2 \norm{y - y'}^2 \diff \pi(x,y) \diff\pi(x',y').
\end{equation*}
Since $\pi \in \coupling{\alpha}{\beta}$, we have $\int f(x) \diff\pi(x, y) = \int f \cdot \alpha$ and $\int g(y) \diff\pi(x, y) = \int g \cdot \beta$ for any measurable functions $f$ and $g$ on $\Hilb_X$ and $\Hilb_Y$.
Therefore:
\begin{gather*}
\int \norm{x - x'}^4\diff \pi(x,y) \diff\pi(x',y') = \int \norm{x - x'}^4 \diff\alpha(x)\diff\alpha(x'), \\
\int \norm{y - y'}^4\diff \pi(x,y) \diff\pi(x',y') = \int \norm{y - y'}^4 \diff\beta(y)\diff\beta(y').
\end{gather*}
Moreover, for any $x,x' \in \Hilb_X$ and $y,y' \in \Hilb_Y$, and using the equality $\scal{x}{x'}\scal{y}{y'} = \scal{x y^{\top}}{x' y'^{\top}}_{\HS}$:
\begin{multline}
\label{eq:egwloss_bigdecomp}
    \norm{x-x'}^2 \norm{y-y'}^2 = \norm{x}^2 \norm{y}^2 +  \norm{x'}^2 \norm{y'}^2 + \norm{x'}^2 \norm{y}^2 + \norm{x}^2 \norm{y'}^2 + 4  \scal{x y^{\top}}{x' y'^{\top}}_{\HS}\\ - 2 \left( \scal{x}{x'}(\norm{y}^2 + \norm{y'}^2) + \scal{y}{y'}(\norm{x}^2 + \norm{x'}^2) \right).
\end{multline}
The four first terms can be simplified as follow:
\begin{gather*}
\int \norm{x}^2 \norm{y'}^2  \diff\pi(x,y) \diff\pi(x',y') = \int \norm{x'}^2 \norm{y}^2  \diff\pi(x,y) \diff\pi(x',y') = \int \norm{x}^2 \diff\alpha(x) \int \norm{y}^2 \diff\beta(y), \\
\int \norm{x}^2 \norm{y}^2  \diff\pi(x,y) \diff\pi(x',y') = \int \norm{x'}^2 \norm{y'}^2  \diff\pi(x,y) \diff\pi(x',y') = \int \norm{x}^2 \norm{y}^2 \diff\pi(x,y).
\end{gather*}
Moreover, by bilinearity of $\scal{\cdot}{\cdot}_{\HS}$:
\begin{equation*}
    \int \scal{x y^{\top}}{x' y'^{\top}}_{\HS} \diff\pi(x,y) \diff\pi(x',y') = \scal{\int x  y^{\top} \diff\pi(x,y)}{\int x' y'^{\top} \diff\pi(x',y')}_{\HS} = \norm{\int x y^{\top} \diff\pi(x,y)}_{\HS}^2.
\end{equation*}
Note that by the Cauchy-Scharwz inequality, this (Bochner) integral is well-defined.
Finally, since $\alpha$ is centered:
\begin{equation*}
\int \scal{x}{x'}\norm{y}^2  \diff\pi(x,y)\diff\pi(x',y') = \int \scal{x}{\int x' \diff\alpha(x')}\norm{y}^2 \diff\pi(x,y) = 0.
\end{equation*}
Using the same computations on the other terms, we obtain:
\begin{equation*}
    \int \left( \scal{x}{x'}(\norm{y}^2 + \norm{y'}^2) + \scal{y}{y'}(\norm{x}^2 + \norm{x'}^2) \right)   \diff\pi(x,y)\diff\pi(x',y') = 0.
\end{equation*}
Putting everything together in \cref{eq:egwloss_bigdecomp}, we finally obtain the desired formula.
\end{proof}

% \entropicgwcntlinearized*
\renewcommand{\thetheorem}{\ref*{prop:entropic_gw_cnt_linearized}}
\begin{theorem}
Let $\alpha \in \prob{\X}$
and
$\beta \in \prob{\Y}$ 
be two probability distributions with compact supports on topological spaces $\X$ and $\Y$,
endowed with definite CNT costs $c_\X$ and $c_\Y$.
Let 
$\Phi(x)=(\varphi(x), \tfrac{1}{2}\|\varphi(x)\|^2)$
and
$\Psi(y)=(\psi(y), \tfrac{1}{2}\|\psi(y)\|^2)$
denote their respective GW-embeddings, as in \cref{def:gw_embeddings}.
Then, for any temperature $\varepsilon \geqslant 0$,
the EGW problem of Eq.~\eqref{eq:entropic_gromov_wasserstein} is equivalent to:
\begin{equation*}
     GW_{\varepsilon}(\alpha, \beta) =  C(\alpha, \beta) +  8 
     \min_{\Gamma \in \Hspace} 
     \min_{\pi \in \coupling{\alpha}{\beta}}
     \mathcal{F}(\Gamma, \pi)~,
\end{equation*}
where $C(\alpha, \beta)$ is an additive constant and:
\begin{equation*}
\mathcal{F}(\Gamma, \pi)
~:=~
\norm{\Gamma}_{\HS}^2
~+~
\varepsilon \KL{\pi}{\alpha \otimes \beta}
~-~2
\int \scal{\overline{\Gamma}}{\Phi(x) \Psi(y)^{\top}}_{\HS} \diff \pi(x,y)~.
\end{equation*}
\end{theorem}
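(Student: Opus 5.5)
Note first that the restated version puts $\varepsilon$ in front of the KL term, while the main-text statement has $\varepsilon/8$. Since the EGW regularizer $\varepsilon\,\KL{\pi}{\alpha\otimes\beta}$ ends up multiplied by the factor $8$ outside the minimum, the consistent normalization is $\varepsilon/8$ inside $\mathcal{F}$. I will prove the statement with that normalization.

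The plan is to transport everything into the Hilbert embedding spaces and then apply \cref{prop:cnt_gw_developed}.

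\textbf{Step 1: reduce to squared Hilbert norms.} By \cref{theorem:cnt_embedding} and \cref{def:gw_embeddings}, we have $c_\X(x,x')=\norm{\varphi(x)-\varphi(x')}^2$ and $c_\Y(y,y')=\norm{\psi(y)-\psi(y')}^2$, with $\varphi_\#\alpha$ and $\psi_\#\beta$ centered. The maps $\varphi,\psi$ are injective because the costs are definite. Hence $\pi\mapsto(\varphi\times\psi)_\#\pi$ is a bijection from $\coupling{\alpha}{\beta}$ onto $\coupling{\varphi_\#\alpha}{\psi_\#\beta}$ that preserves both terms of the objective:
\begin{itemize}
\item the GW loss, computed with squared Hilbert norms on the pushforwards;
\item the KL term, since $\KL{\cdot}{\cdot}$ is invariant under a measurable injection applied to both arguments.
\end{itemize}
Compactness of the supports and continuity of $\varphi$ make the pushforwards compactly supported.

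\textbf{Step 2: expand the loss.} Applying \cref{prop:cnt_gw_developed} gives
\begin{equation*}
\Loss(\pi)=C(\alpha,\beta)-8\Big\|\int\varphi(x)\psi(y)^\top\diff\pi\Big\|_{\HS}^2-4\int\norm{\varphi(x)}^2\norm{\psi(y)}^2\diff\pi .
\end{equation*}

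\textbf{Step 3: linearize the concave square.} I use the variational identity
\begin{equation*}
-\norm{M}_{\HS}^2=\min_{\Gamma\in\Hspace}\big(\norm{\Gamma}_{\HS}^2-2\scal{\Gamma}{M}_{\HS}\big),
\end{equation*}
which is attained at $\Gamma=M$ by completing the square. Take $M=\int\varphi\psi^\top\diff\pi$; this Bochner integral lies in $\Hspace$ because $\norm{\varphi(x)\psi(y)^\top}_{\HS}=\norm{\varphi(x)}\norm{\psi(y)}$ is bounded on the supports. Linearity of the integral then gives $\scal{\Gamma}{M}_{\HS}=\int\scal{\Gamma}{\varphi(x)\psi(y)^\top}_{\HS}\diff\pi$.

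\textbf{Step 4: absorb the remaining term.} From the block structure of $\overline{\Gamma}$,
\begin{equation*}
\scal{\overline{\Gamma}}{\Phi(x)\Psi(y)^\top}_{\HS}=\scal{\Gamma}{\varphi(x)\psi(y)^\top}_{\HS}+\tfrac14\norm{\varphi(x)}^2\norm{\psi(y)}^2 .
\end{equation*}
Multiplying by $-2\cdot 8=-16$ turns the last term into $-4\norm{\varphi}^2\norm{\psi}^2$, which matches the third term of the expansion exactly. Adding $\varepsilon\,\KL{\pi}{\alpha\otimes\beta}=8\cdot(\varepsilon/8)\,\KL{\pi}{\alpha\otimes\beta}$ then yields
\begin{equation*}
\Loss(\pi)+\varepsilon\,\KL{\pi}{\alpha\otimes\beta}=C(\alpha,\beta)+8\min_{\Gamma\in\Hspace}\mathcal{F}(\Gamma,\pi).
\end{equation*}

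\textbf{Step 5: swap the minima.} Minimizing over $\pi\in\coupling{\alpha}{\beta}$ and exchanging $\min_\pi\min_\Gamma$, which is always allowed, gives the claimed formula for $\GW_\varepsilon(\alpha,\beta)$.

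\textbf{Main obstacle.} The delicate part is the infinite-dimensional bookkeeping. One must justify the Bochner integrability and the interchange of the HS inner product with $\int\diff\pi$. One must also make sure \cref{prop:cnt_gw_developed} applies to $\varphi_\#\alpha,\psi_\#\beta$, which live in possibly non-separable Hilbert spaces; restricting to the closed span of the supports handles this. Finally, the reported value is only an infimum unless minimizers exist. Existence of a minimizer for $\pi$ follows from weak compactness of $\coupling{\alpha}{\beta}$ and lower semicontinuity of KL. For $\varepsilon=0$ one only needs continuity of the loss under weak convergence, which holds on compact supports.
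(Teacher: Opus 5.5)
Your proposal is correct and is essentially the paper's own argument: you expand $\Loss(\pi)$ on the centered embeddings via \cref{prop:cnt_gw_developed}, replace $-8\norm{\int \varphi(x)\psi(y)^{\top}\diff\pi(x,y)}_{\HS}^2$ by the quadratic variational identity minimized at $\Gamma = \int \varphi(x)\psi(y)^{\top}\diff\pi(x,y)$, and absorb the $-4\int\norm{\varphi(x)}^2\norm{\psi(y)}^2\diff\pi(x,y)$ term through the $\tfrac{1}{2}\norm{\cdot}^2$ coordinates of $\overline{\Gamma}$ (your Step~1 pushforward is harmless but not needed, since every integrand already factors through $\varphi$ and $\psi$). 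You are also right that the KL weight inside $\mathcal{F}$ must be $\varepsilon/8$, as in the main-text statement: with weight $\varepsilon$, the right-hand side would equal $\GW_{8\varepsilon}(\alpha,\beta)$ rather than $\GW_{\varepsilon}(\alpha,\beta)$.
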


\begin{proof}[Proof of \cref{prop:entropic_gw_cnt_linearized,corrolary:egw_icp_gamma,corrolary:egw_icp_pi}]
This is a consequence of \cref{prop:cnt_gw_developed} combined with the following equality:
\begin{equation*}
    - \norm{ \int x  y^{\top} \diff\pi(x,y)}_{\HS}^2 =  \min_{\Gamma \in \Hspace} \left( \norm{\Gamma}_{\HS}^2 - 2 \scal{\Gamma}{\int x  y^{\top} \diff\pi(x,y)}_{\HS} \right),
\end{equation*}
since $f: \Gamma \mapsto \norm{\Gamma}^2 - 2\scal{\Gamma}{\int x y^{\top} \diff\pi(x,y)}_{\HS}$ is a quadratic function with minimum attained at $\Gamma = \int x y^{\top} \diff\pi(x,y)$.
We only need to apply this formula to the GW-embeddings $\Phi(x)$ and $\Psi(y)$ to obtain the results (\cref{corrolary:egw_icp_gamma} being a consequence of the convexity of $f$).
Finally, \cref{corrolary:egw_icp_pi} is a direct reformulation of \cref{prop:entropic_gw_cnt_linearized} using the fact that:
\begin{equation*}
    \scal{\overline{\Gamma}}{\Phi(x) \Psi(y)^{\top}}_{\HS} = \scal{\Phi(x)}{\overline{\Gamma}(\Psi(y))}_{\Hilb_\X} = \scal{\overline{\Gamma}^{\top}(\Phi(x))}{\Psi(y)}_{\Hilb_\Y}.
\end{equation*}
\end{proof}

\begin{remark}
\label{remark:egw_otherform}
In the proofs below, we will also use the following, equivalent form of \cref{prop:entropic_gw_cnt_linearized}:
    \begin{equation}
    \label{eq:egw_otherform}
        \GW_{\varepsilon}(\alpha, \beta) = C(\push{\varphi}{\alpha}, \push{\psi}{\beta}) + \min_{\Gamma \in \Hspace} \left( 8 \cdot \norm{\Gamma}^2_{\HS} + \OT^{\Gamma}_{\varepsilon}(\push{\varphi}{\alpha}, \push{\psi}{\beta}) \right), 
    \end{equation}
    where $\OT^\Gamma_{\varepsilon}$ is the entropic optimal transport for the cost $c_\Gamma(x,y) = - 16 \scal{\Gamma}{x y^T}_{\HS} - 4 \norm{x}^2 \norm{y}^2$.
    This corresponds to the form in which \cite{rioux2024entropic,zhang2024gromov} stated the dual formula for squared Euclidean costs.
\end{remark}
%\egwicppi*

\subsection{Proofs of \cref{subsection:entropic_bias}}

\cntotproperties*
\begin{proof}
    The main result in \cite{feydy2019interpolating} is that if the cost is such that $e^{-c(x,y)/\varepsilon}$ is a positive definite kernel for all $\varepsilon$, then $\SOT_\varepsilon \geq 0$ for all $\varepsilon$. When the cost $c$ is CND, this condition is satisfied. 
The converse implication follows by taking the limit when $\varepsilon \to +\infty$. Indeed, we show below that 
$\lim_{\varepsilon \to +\infty} \OT_\varepsilon(\alpha,\beta) = \int_{\mathcal X \times \mathcal Y} c(x,y) \alpha \otimes \beta$ which implies that 
\begin{equation}\SOT_\varepsilon(\alpha,\beta) = \int_{\mathcal X \times \mathcal Y} c(x,y) \diff  \alpha(x) \diff \beta(y) - \frac{1}{2}\left(\int_{\mathcal X \times \mathcal X} c(x,x') \diff  \alpha(x) \diff \alpha(x') + \int_{\mathcal Y \times \mathcal Y} c(y,y) \diff \beta(y) \otimes \diff \beta(y')\right)\,.
\end{equation}
The right-hand side is the maximum mean discrepancy between the two measures $\alpha,\beta$ with respect to the cost $-\frac{1}{2}c(x,y)$. Since this quantity is nonnegative for all differences of probability measures, it implies that the cost $-\frac{1}{2} c(x,y)$ is a conditionally positive kernel.

To show that the entropic  transport cost limit when $\varepsilon \to +\infty$, we remark that, by testing the corresponding variational problem with $\pi(x,y) = \alpha(x) \otimes \beta(y)$, the KL term vanishes and therefore
\begin{equation}    \OT_\varepsilon(\alpha,\beta )\leq \int_{\mathcal X \times \mathcal Y} c(x,y) \diff  \alpha(x) \diff \beta(y)\,.
\end{equation}
Since the cost is nonnegative (and thus the transport cost), we have
\begin{equation}
    \KL{\pi}{\alpha \otimes \beta} \leq \frac{1}{\varepsilon} \int_{\mathcal X \times \mathcal Y} c(x,y) \diff  \alpha(x) \diff \beta(y)\,.
\end{equation}
By Pinsker's inequality, this implies the convergence of $\pi_\varepsilon(x,y)$ to $\alpha \otimes \beta$ w.r.t. the TV norm. Consequently, when $\varepsilon \to +\infty$, we have $ \int_{\mathcal X \times \mathcal Y} c(x,y) \diff  \alpha(x) \diff \beta(y) + o(1)  \leq \OT_\varepsilon(\alpha,\beta)$. 
\end{proof}

\begin{restatable}{lemma}{continuityentropicgwcnt}
\label{prop:continuity:entropic_gw_cnt2}
When the base costs are CNT, the operator $\func{\GW_{\varepsilon}}{\prob{\X}\times\prob{\Y}}{\R}$ is continuous for the weak convergence of measures.
\end{restatable}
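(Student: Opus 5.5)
We take the dual formula of \cref{remark:egw_otherform} as the starting point, since it turns $\GW_\varepsilon$ into an infimum over $\Gamma$ of entropic OT values, each of which is weakly continuous. Fix weakly convergent sequences $\alpha_n \rightharpoonup \alpha$ and $\beta_n \rightharpoonup \beta$. The supports are compact, so we may assume they all lie in fixed compact sets $K_\X \subset \X$ and $K_\Y \subset \Y$. On these sets the costs $c_\X, c_\Y$ are continuous and bounded.

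\textbf{Step 1: embedding-free bounds.} We avoid centered embeddings that move with $n$. Instead we use the kernel trick of \cref{appendix:subsection:detailed_kernel_impl}: all quantities can be written through a kernel $k$ obtained from \cref{prop:cnt_positive_kernel} with a fixed base point $x_0$, and this kernel is continuous and bounded on $K_\X \times K_\X$. The centering with respect to $\alpha_n$ only involves integrals of $k$ against $\alpha_n$, and these converge by weak convergence. Equivalently, we work directly with the primal objective $\Loss(\pi) + \varepsilon\KL{\pi}{\alpha\otimes\beta}$. Its integrand $(c_\X - c_\Y)^2$ is continuous and bounded on $(K_\X\times K_\Y)^2$. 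The constant $C(\alpha_n,\beta_n)$ is a finite combination of double integrals of continuous functions of $c_\X, c_\Y$, so it converges to $C(\alpha,\beta)$.

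\textbf{Step 2: reduction to uniform continuity of $\OT_\varepsilon$ over a compact family of costs.} Write $\GW_\varepsilon(\alpha,\beta) = C + \min_\Gamma G_\Gamma(\alpha,\beta)$ with $G_\Gamma = 8\norm{\Gamma}^2_{\HS} + \OT^\Gamma_\varepsilon$.

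First, the minimizing $\Gamma$ lies in a bounded set. By \cref{corrolary:egw_icp_gamma}, $\Gamma^\star = \int \varphi\psi^\top \diff\pi$, whose norm is bounded by $\sup\norm{\varphi}\sup\norm{\psi}$, and this bound is uniform in $n$ thanks to Step 1.

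Second, each cost $c_\Gamma(x,y) = -16\scal{\Gamma}{\varphi(x)\psi(y)^\top} - 4\norm{\varphi(x)}^2\norm{\psi(y)}^2$ is continuous. In kernel form, with $\Gamma = \int \varphi\psi^\top \diff\pi'$, it reads $-16\int k_\X(x,x')k_\Y(y,y')\diff\pi'(x',y') - \dots$. The family of such costs, parameterized by couplings $\pi'$, is uniformly bounded and equicontinuous on $K_\X\times K_\Y$.

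We then use the classical fact that $\OT_\varepsilon$ is weakly continuous jointly in $(\alpha,\beta)$, uniformly over a uniformly bounded, equicontinuous family of costs. This follows from the Sinkhorn potentials, which inherit the modulus of continuity of the cost and are uniformly bounded (see \cite{feydy2019interpolating}), combined with the dual formulation of \cref{appendix:sinkhorn}. A uniform limit of infima is the infimum of the limits, which gives $\GW_\varepsilon(\alpha_n,\beta_n)\to\GW_\varepsilon(\alpha,\beta)$.

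\textbf{Main obstacle.} The difficulty is that the centering, and hence $\varphi,\psi$, depend on $\alpha_n,\beta_n$. I would handle this either through the fixed-base-point kernel together with translation invariance, or, more simply, by bypassing the dual formula altogether with a direct primal argument.

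\emph{Upper semicontinuity.} Take an optimal $\pi$ for $(\alpha,\beta)$ and build $\pi_n \in \coupling{\alpha_n}{\beta_n}$ with $\pi_n\rightharpoonup\pi$ and $\KL{\pi_n}{\alpha_n\otimes\beta_n}\to\KL{\pi}{\alpha\otimes\beta}$. One way is to use the density $e^{(f+g-c)/\varepsilon}$ with $c$ the gradient cost at $\pi$, then apply a Sinkhorn rescaling for the new marginals.

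\emph{Lower semicontinuity.} Use tightness of the couplings, lower semicontinuity of KL, and continuity of $\Loss$.

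Constructing a recovery sequence with converging entropy is the delicate step. This is why I would prefer the dual route, where the entropy is absorbed into $\OT_\varepsilon$ and continuity of $\OT_\varepsilon$ comes for free from the equicontinuity of the Sinkhorn potentials.
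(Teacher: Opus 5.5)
Your proof is correct in substance but organized differently from the paper's. The paper splits the two inequalities between the two formulations. Upper semicontinuity comes from the dual formula of \cref{remark:egw_otherform}: for each fixed $\Gamma$, $c_\Gamma$ is Lipschitz on the compact domain, so $\OT^{\Gamma}_{\varepsilon}$ is weakly continuous, and an infimum of continuous functions is upper semicontinuous. Lower semicontinuity comes from the primal problem exactly as in your sketch: optimal $\pi_n$, weak compactness of couplings, $\pi_m\otimes\pi_m\rightharpoonup\pi\otimes\pi$, continuity of the integrand of $\Loss$, and lower semicontinuity of KL. Mixing the viewpoints this way makes the ``delicate'' recovery sequence unnecessary. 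The only direction that would need it is supplied by the dual, where pointwise continuity in $(\alpha,\beta)$ for each fixed $\Gamma$ suffices.

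Your fully dual route gets both inequalities at once but needs two extra ingredients. The first is the a priori bound $\norm{\Gamma^\star(\pi)}_{\HS}\le\sup\norm{\varphi}\sup\norm{\psi}$, which legitimately lets you restrict the infimum to a fixed ball $B$, since $\mathcal{F}(\Gamma,\pi)\ge\mathcal{F}(\Gamma^\star(\pi),\pi)$. The second is continuity of $\OT_\varepsilon$ uniformly over $\{c_\Gamma:\Gamma\in B\}$, which you assert rather than prove. The quickest justification: $c\mapsto\OT^c_\varepsilon(\mu,\nu)$ is $1$-Lipschitz for the sup norm uniformly in $(\mu,\nu)$, and this family of costs is totally bounded by Arzel\`a--Ascoli, so pointwise continuity upgrades to uniform continuity. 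The same Lipschitz bound absorbs the moving centering, since $\int\varphi\,\diff\alpha_n\to\int\varphi\,\diff\alpha$ in norm.

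In exchange, your route makes explicit that the centered embeddings depend on $(\alpha_n,\beta_n)$, a point the paper's upper semicontinuity step passes over silently. It also yields a modulus of continuity for $\GW_\varepsilon$ inherited from that of $\OT_\varepsilon$. The paper's argument is shorter but purely qualitative.

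One correction: compactness of each support does not let you place all the $\alpha_n,\beta_n$ in fixed compact sets, and the statement fails without this. On $\X=\Y=\R$ with squared costs, $\alpha_n=(1-\tfrac1n)\delta_0+\tfrac1n\delta_n\rightharpoonup\delta_0$, yet $\GW_\varepsilon(\alpha_n,\delta_0)=2(1-\tfrac1n)n^3\to\infty$ while $\GW_\varepsilon(\delta_0,\delta_0)=0$. You need $\X$ and $\Y$ themselves compact, which is what the paper implicitly assumes when it reduces to compact subsets of Hilbert spaces.
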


\begin{proof}
Since the embeddings provided by \cref{theorem:cnt_embedding} are continuous, it is sufficient to prove the result when $\X$ and $\Y$ are compact subsets of two Hilbert spaces and when $c_\X, c_\Y$ are the corresponding squared Hilbert norms. 
We will proceed it by showing both the upper and lower semi-continuity of $\GW_{\varepsilon}$ for the weak convergence of measures.

For every $\Gamma \in \Hspace$, the function $x, y \longmapsto c_\Gamma(x,y)$ is Lipschitz on $\X\times \Y$.
Therefore, the function $\alpha, \beta \mapsto \OT^{\Gamma}_{\varepsilon}(\alpha, \beta)$ is continuous for the weak convergence \citep[Proposition 2]{feydy2019interpolating}.
Since $GW_{\varepsilon}(\alpha, \beta)$ is the infimum of continuous functions, it is upper semi-continuous for the weak convergence.

Conversely, let $\alpha_n \rightharpoonup \alpha \in \prob{\X}$ and $\beta_n \rightharpoonup \beta \in \prob{\Y}$. 
Let $\pi_n \in \coupling{\alpha_n}{\beta_n}$ be a sequence of optimal plans for $\GW_{\varepsilon}(\alpha_n, \beta_n)$. 
By compacity of $\coupling{\alpha_n}{\beta_n}$ for the weak convergence, there is an extraction $(\alpha_m, \beta_m)$ of $(\alpha_n, \beta_n)$ such that $\pi_m \rightharpoonup \pi \in \coupling{\alpha}{\beta}$,
and $\pi_m \otimes \pi_m \rightharpoonup \pi \otimes \pi$.
 By continuity of $x,x',y,y' \mapsto (\norm{x-x'}^2 - \norm{y-y'}^2)^2$: 
 \begin{equation*}
     \lim_{m \rightarrow + \infty} \int \left(\norm{x-x'}^2 - \norm{y-y'}^2 \right)^2 \diff \pi_m(x,y) \diff \pi_m(x',y') = \int \left(\norm{x-x'}^2 - \norm{y-y'}^2 \right)^{2} \diff \pi(x,y) \diff \pi(x',y').
 \end{equation*}
 By lower semi-continuity of $\mu, \nu \rightarrow \KL{\mu}{\nu}$, we also have $\liminf{\KL{\pi_m}{\alpha_m \otimes \beta_m}} \geq \KL{\pi}{\alpha \otimes \beta}$, so: 
 \begin{equation*}
     \liminf{\GW_{\varepsilon}(\alpha_m, \beta_m)}\geq  \int \left(\norm{x-x'}^2 - \norm{y-y'}^2 \right)^2 \diff \pi(x,y) \diff \pi(x',y') + \varepsilon \KL{\pi}{\alpha \otimes \beta}.
 \end{equation*}
By definition of EGW, the right term is always upper-bounded by $\GW_{\varepsilon}(\alpha, \beta)$.
Hence, any sequence of $\GW_{\varepsilon}(\alpha_n, \beta_n)$ contain a subsequence whose limit inferior is larger than $GW_{\varepsilon}(\alpha, \beta)$, which proves $\liminf{\GW_{\varepsilon}(\alpha_n, \beta_n)} \geq \GW_{\varepsilon}(\alpha, \beta)$ and the lower-semi continuity of $\GW_{\varepsilon}$.
\end{proof}

\cntsgwproperties*
\begin{proof}
From \citep[Proposition 4.4]{van2003probability}, we know that the set of finitely supported probability measures over $\X$ (resp. $\Y$) is dense in $\prob{\X}$ (resp. $\prob{\Y}$).
Therefore, we can restrict to the case where $\alpha$ and $\beta$ have finite support, and conclude on the general case using the continuity of EGW stated in \cref{prop:continuity:entropic_gw_cnt2}.
Since finite subsets of $\Hilb_\X$ and $\Hilb_\Y$ belong to finite-dimensional subspaces of $\Hilb_\X$ and $\Hilb_\Y$, we can further restrict ourselves to measures lying on finite-dimensional euclidean spaces: from now, we set $\alpha \in \prob{\R^\DD}$ and $\beta \in \prob{\R^{\EE}}$ as two centered measures for the squared euclidean norm, with $c_\X, c_\Y$ the squared Euclidean norms of $\R^\DD$ and $\R^{\EE}$.

In finite dimension, Hilbert-Schmidt operators correspond to matrices of $\R^{\DD \times \EE}$: therefore, each $\Gamma \in \Hspace$ admits a singular value decomposition (SVD) of the form $\Gamma = U^T \Sigma V$, where $\Sigma \in \Diag{\RR}$ is diagonal with positive coefficients and $U \in \Orth{\DD,\RR}, V \in \Orth{\EE,\RR}$ are orthogonal matrices (with $\RR \geq \max(\DD, \EE)$). 
\cref{eq:egw_otherform} rewrites as:
    \begin{equation}
    \label{eq:svd_entropicgw}
        GW_{\varepsilon}(\alpha, \beta) = C(\alpha, \beta) +  \min_{\Sigma \in \Diag{\RR}} ~\left[~~~8 \cdot \norm{\Sigma}_F^2 + \min_{U \in \Orth{\DD,\RR}, V \in \Orth{\EE,\RR}}\text{OT}_{\varepsilon}^\Sigma(U_{\sharp} \alpha, V_{\sharp}\beta)~~~\right],
    \end{equation}
where $\text{OT}_{\varepsilon}^\Sigma(\alpha, \beta)$ is the EOT for the cost $c_\Sigma(x,y) = - 4 \norm{x}^2\norm{y}^2 - 16 x^T \Sigma y$.
Denoting by $(\Sigma^*,U^*,V^*)$ the optimal matrices in \cref{eq:svd_entropicgw} (and using the fact that $(\Sigma^*,U^*,U^*)$ and $(\Sigma^*,V^*,V^*)$ are in the feasible set of $\GW_{\varepsilon}(\alpha,\alpha)$ and $\GW_{\varepsilon}(\beta,\beta)$ for the new formulation of \cref{eq:svd_entropicgw}) we obtain:
\begin{equation}
\label{prop:inequalities_sgw_proof}
    \begin{split}
    \GW_{\varepsilon}(\alpha, \beta) &= C(\alpha, \beta) + 8 \cdot \norm{\Sigma^*}_F^2 + \OT_{\varepsilon}^{\Sigma^*}(U^*_{\sharp} \alpha, V^*_{\sharp}\beta), \\
    \GW_{\varepsilon}(\alpha, \alpha) &\leq C(\alpha, \alpha) + 8 \cdot \norm{\Sigma^*}_F^2 + \OT_{\varepsilon}^{\Sigma^*}(U^*_{\sharp} \alpha, U^*_{\sharp} \alpha), \\
    \GW_{\varepsilon}(\beta, \beta) &\leq C(\beta, \beta) + 8 \cdot \norm{\Sigma^*}_F^2 + \OT_{\varepsilon}^{\Sigma^*}(V^*_{\sharp} \beta, V^*_{\sharp} \beta).
    \end{split}
\end{equation}
Writing $\SOT_{\varepsilon}^{\Sigma^*}(U^*_\sharp\alpha, V^*_\sharp\beta) = \OT_{\varepsilon}^{\Sigma^*}(U^*_{\sharp} \alpha, V^*_{\sharp}\beta) - \frac{1}{2}( \OT_{\varepsilon}^{\Sigma^*}(U^*_{\sharp} \alpha, U^*_{\sharp} \alpha) + \OT_{\varepsilon}^{\Sigma^*}(V^*_{\sharp} \beta, V^*_{\sharp} \beta))$, this implies:
\begin{equation*}
    \SGW_{\varepsilon}(\alpha, \beta) \geq C(\alpha, \beta) - \frac{1}{2}(C(\alpha, \alpha) + C(\beta, \beta)) + S_{\varepsilon}^{\Sigma^*}(U^*_\sharp\alpha, V^*_\sharp\beta).
\end{equation*}
The positivity of the constant $C(\alpha, \beta) - \frac{1}{2} (C(\alpha, \alpha) + C(\beta, \beta))$ comes from the following identity:
\begin{equation*}
\begin{split}
    C(\alpha, \beta) - \frac{1}{2}(C(\alpha, \alpha) + C(\beta, \beta)) &= 
     2 \left( \int \norm{x}^2 \diff\alpha(x) \right)^2 + 2 \left( \int \norm{y}^2 \diff\beta(y) \right)^2 - 4 \int \norm{x}^2 \diff\alpha(x) \int \norm{y}^2 \diff\beta(y)  \\
     &= 2 \left( \norm{x}^2 \diff\alpha(x) - \int \norm{y}^2 \diff\beta(y) \right)^2 \geq 0.
\end{split}
\end{equation*}
Finally, that $\SOT_{\varepsilon}^{\Sigma^*}(U^*_\sharp\alpha, V^*_\sharp\beta)$ is positive by recalling that $- c_{\Sigma^*}$ is a scalar product of GW-embeddings:
\begin{equation*}
    - c_{\Sigma^*}(x,y) = 4 \norm{x}^2\norm{y}^2 + 16 x^T \Sigma^* y = \scal{(2 \norm{x}^2, 4 \sqrt{\Sigma^*} x)}{(2 \norm{y}^2, 4 \sqrt{\Sigma^*} y)}.
\end{equation*}
which implies the positivity of the kernel $k_{\varepsilon}(x,y) = \exp(-c_{\Sigma^*}(x,y)/\varepsilon)$. 
We are in the setting of \citep[Theorem 1]{feydy2019interpolating}, which proves the positivity of $\SOT_{\varepsilon}^{\Sigma^*}(U^*_\sharp\alpha, V^*_\sharp\beta)$. 
The positivity of $\SGW_{\varepsilon}(\alpha, \beta)$ follows.
\end{proof}

\begin{remark}
\label{remark:universality_kernel}
The hypotheses of \cite{feydy2019interpolating} are not completely satisfied, since the kernel $k_{\varepsilon}$ is not necessarily universal.
However, the universality of $k_{\varepsilon}$ is only used to prove the definiteness of the Sinkhorn divergence, and the positivity of $k_{\varepsilon}$ is sufficient to show the positivity of $\SOT_{\varepsilon}^{\Sigma^*}(U^*_\sharp\alpha, V^*_\sharp\beta)$. 
It also shows that the lack of definiteness of EGW (evidenced in \cref{prop:sgw_counterex}) is a consequence of the non-universality of $k_{\varepsilon}$, which happens when the embedding $x \mapsto (2 \norm{x}^2, 4 \sqrt{\Sigma^*} x)$ is non-injective. 
In particular, if $\Sigma^*$ is full-rank, then the kernel $e^{((4\|x\|^2\| y \|^2 +16 x^{\top}\Sigma^* y)/\varepsilon}$ is universal on compact domains (since polynomial functions are contained in the corresponding RKHS), and thus $\SOT_{\varepsilon}^{\Sigma^*}$ is definite: $\SOT_{\varepsilon}^{\Sigma^*}(U^*_\sharp\alpha, V^*_\sharp\beta) = 0$ if and only if $U^*_\sharp\alpha = V^*_\sharp\beta$, and the latter implies that $\alpha$ and $\beta$ are isometric since $U^*$ and $V^*$ are orthogonal matrices. 
\end{remark}

\bigskip
\begin{restatable}{proposition}{sgwcounterex}
\label{prop:sgw_counterex}
Let $\alpha$ and $\beta$ be uniform measures on the spheres $\Sphere{\DD}$ and $\Sphere{\EE}$ of dimensions $\DD \neq \EE$, while $c_\X$ and $c_\Y$ are the squared norms of $\R^\DD$ and $\R^{\EE}$, respectively. Then,
$\alpha$ and $\beta$ are not isometric but there exists an $\varepsilon_0 > 0$ such that $\SGW_{\varepsilon}(\alpha, \beta) = 0$ for every $\varepsilon > \varepsilon_0$.    
\end{restatable}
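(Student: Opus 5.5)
The plan has two parts. First, $\alpha$ and $\beta$ are not isometric. Second, for large $\varepsilon$ every EGW problem involved is solved by the independent coupling, so $\SGW_\varepsilon$ reduces to the marginal constants of \cref{prop:cnt_gw_developed}, and these cancel.

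\emph{Non-isometry.} I would take both spheres to be unit spheres centered at the origin, $\Sphere{\DD}\subset\R^{\DD+1}$ and $\Sphere{\EE}\subset\R^{\EE+1}$, so both measures are centered. An isometry $I$ pushing $\alpha$ onto $\beta$ preserves all squared distances. Since both measures are centered, the centered Gram kernel $k(x,x')=\langle x,x'\rangle$ is determined by squared distances through the formula of \cref{prop:cnt_positive_kernel}, centered as in \cref{alg:kernel}. Hence the integral operators with kernels $\langle x,x'\rangle$ in $L^2(\alpha)$ and in $L^2(\beta)$ would be unitarily equivalent via $I$. Their nonzero spectra are those of the covariance matrices $\Sigma_\alpha=\tfrac{1}{\DD+1}\mathrm{Id}_{\DD+1}$ and $\Sigma_\beta=\tfrac{1}{\EE+1}\mathrm{Id}_{\EE+1}$. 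These have ranks $\DD+1\neq\EE+1$, which is a contradiction.

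\emph{Collapse to the product coupling.} By \cref{prop:cnt_gw_developed}, and since $\norm{x}=\norm{y}=1$ on the supports, every $\pi\in\coupling{\alpha}{\beta}$ satisfies
\begin{equation*}
\Loss(\pi)=C(\alpha,\beta)-4-8\norm{M_\pi}_{\HS}^2, \qquad M_\pi:=\int x y^{\top}\diff\pi(x,y).
\end{equation*}
Both marginals are centered, so $M_{\alpha\otimes\beta}=0$, and therefore $M_\pi=\int xy^\top\diff(\pi-\alpha\otimes\beta)$. Because $\norm{xy^\top}_{\HS}=1$ on the supports, $\norm{M_\pi}_{\HS}\le\norm{\pi-\alpha\otimes\beta}_{TV}$. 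Pinsker's inequality then gives $\norm{M_\pi}_{\HS}^2\le 2\,\KL{\pi}{\alpha\otimes\beta}$, the constant depending on the TV convention used. Consequently
\begin{equation*}
\Loss(\pi)+\varepsilon\KL{\pi}{\alpha\otimes\beta}\;\ge\; C(\alpha,\beta)-4+(\varepsilon-16)\KL{\pi}{\alpha\otimes\beta}\;\ge\;\Loss(\alpha\otimes\beta)
\end{equation*}
whenever $\varepsilon\ge\varepsilon_0:=16$. So $\GW_\varepsilon(\alpha,\beta)=C(\alpha,\beta)-4$. The same argument applies verbatim to the pairs $(\alpha,\alpha)$ and $(\beta,\beta)$, giving $\GW_\varepsilon(\alpha,\alpha)=C(\alpha,\alpha)-4$ and $\GW_\varepsilon(\beta,\beta)=C(\beta,\beta)-4$.

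\emph{Conclusion.} The $-4$ terms cancel in $\SGW_\varepsilon$. Using the identity already established in the proof of \cref{prop:cnt_sgw_properties},
\begin{equation*}
\SGW_\varepsilon(\alpha,\beta)=C(\alpha,\beta)-\tfrac12\big(C(\alpha,\alpha)+C(\beta,\beta)\big)=2\Big(\int\norm{x}^2\diff\alpha-\int\norm{y}^2\diff\beta\Big)^2=2(1-1)^2=0 .
\end{equation*}

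The main obstacle is the middle step: getting a quantitative bound of $\norm{M_\pi}_{\HS}^2$ by the KL divergence, which is what makes the entropic term dominate the concave quadratic part. The Pinsker route above is what I would try first; the only care needed is tracking the constants, which affect $\varepsilon_0$ but not the conclusion.
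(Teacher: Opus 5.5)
Your proof is correct, but the key step takes a different route from the paper's.

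The paper argues by symmetry. It asserts that $\Loss+\varepsilon\,\mathrm{KL}(\cdot\,\|\,\alpha\otimes\beta)$ is strictly convex on $\coupling{\alpha}{\beta}$ for large $\varepsilon$. The EGW minimizer is then unique, hence invariant under independent rotations of the two marginals, hence equal to $\alpha\otimes\beta$. The vanishing of $\SGW_\varepsilon$ is left as a ``direct computation''.

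You never identify the minimizer through invariance. Instead you compare every coupling directly with $\alpha\otimes\beta$, using \cref{prop:cnt_gw_developed}: on unit spheres $\int\|x\|^2\|y\|^2\,\mathrm{d}\pi=1$, and centering gives $M_{\alpha\otimes\beta}=0$. Pinsker then dominates the concave part $8\|M_\pi\|_{\HS}^2$ by $16\,\mathrm{KL}$. Your constants are right with the total-mass convention for $\|\pi-\alpha\otimes\beta\|_{TV}$.

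What your route buys:
\begin{itemize}
\item It gives an explicit threshold $\varepsilon_0=16$.
\item It replaces the paper's unproved convexity claim, which for continuous measures needs exactly this kind of quantitative estimate. The Cauchy--Schwarz bound $\int n^2/p\ge(\int|n|)^2$ on the second variation of KL gives convexity as soon as $\varepsilon\ge 16$.
\item It uses only centering and a common radius, not rotational symmetry. So the conclusion extends to arbitrary centered measures on unit spheres.
\item Your closing identity $2\big(\int\|x\|^2\,\mathrm{d}\alpha-\int\|y\|^2\,\mathrm{d}\beta\big)^2$ makes explicit that equal radii are necessary, which the statement leaves implicit.
\end{itemize}
Conversely, the paper's symmetry argument identifies the minimizer without using the squared-norm decomposition. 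It would therefore carry over to other bounded rotation-invariant costs once convexity is secured.

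You also prove non-isometry, which the paper only asserts. Your spectral argument works, but a quicker one compares $\iint c_\X^2\,\mathrm{d}\alpha\,\mathrm{d}\alpha=4+4/(\DD+1)$ with its analogue for $\beta$. Finally, you take $\Sphere{\DD}\subset\R^{\DD+1}$, whereas the statement places the sphere in $\R^{\DD}$; this does not affect the argument.
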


\begin{proof}
The function $\pi \mapsto \KL{\pi}{\alpha \otimes \beta}$ is strictly convex: therefore, when $\varepsilon$ is sufficiently large, the function $\Loss_{\varepsilon}:~\pi~\mapsto~\Loss(\pi)~+~\varepsilon \KL{\pi}{\alpha \otimes \beta}$ is strictly convex as well, and $\Loss_{\varepsilon}$ admits a unique minimum on $\coupling{\alpha}{\beta}$.
By choice of $c_\X$ and $c_\Y$, $\Loss_{\varepsilon}(\pi)$ is invariant by rotation of the marginals of $\pi$: its unique minimizer must be radially symmetric, and the only coupling satisfying this constraint is $\alpha \otimes \beta$.

Similarly, when $\varepsilon$ is sufficiently large, the optimal plans of $GW_{\varepsilon}(\alpha,\alpha)$ and  $GW_{\varepsilon}(\beta,\beta)$  are $\alpha \otimes \alpha$ and $\beta \otimes \beta$. 
Using these optimal plans, the value of  $GW_{\varepsilon}(\alpha,\beta) - \frac{1}{2} ( GW_{\varepsilon}(\alpha,\alpha) + GW_{\varepsilon}(\beta,\beta))$ can be explicitly determined, and a direct computation shows that $\SGW_{\varepsilon}(\alpha, \beta) = 0$.
\end{proof}

\bigskip

\sgwdefinitenessquad*
\begin{proof}
To prove the result, we proceed by contraposition.
Let $\alpha \in \prob{\R^\DD}$, and let assume the existence of $\beta \in \prob{\R^{\EE}}$ non isometric to  $\alpha$ such that $\SGW_{\varepsilon}(\alpha, \beta) = 0$.
We will show that it implies an inequality involving $\varepsilon$ and the smallest eigenvalue of $\Sigma_{\alpha} = \int xx^T \diff\alpha$.

For the nullity of SGW to hold, the inequalities of \cref{prop:inequalities_sgw_proof} must be equalities: if $\Gamma^*$ is the optimal dual matrix for $\GW_{\varepsilon}(\alpha, \beta)$, denoting by $\Gamma^* = U^{*T} \Sigma^* V^*$ its singular value decomposition, we must have:
\begin{equation*}
\GW_{\varepsilon}(\alpha, \alpha) = C(\alpha, \alpha) + 8 \cdot \norm{\Sigma^*}_F^2 + \OT_{\varepsilon}^{\Sigma^*}(U^*_{\sharp} \alpha, U^*_{\sharp} \alpha),
\end{equation*}
and $\Gamma_{\alpha}^* = U^{*T} \Sigma^* U^*$ must be an optimal dual matrix for $\GW_{\varepsilon}(\alpha, \alpha)$.

From \cref{prop:cnt_sgw_properties}, $\SGW_{\varepsilon}(\alpha, \beta) = 0$ also implies $\SOT_{\varepsilon}^{\Sigma^*}(U^*_\sharp \alpha, V^*_\sharp \beta) = 0$.
As explained in \cref{remark:universality_kernel}, if $\Sigma^*$ is a positive definite matrix, then the nullity of $\SGW_{\varepsilon}$ would mean the isometry of $\alpha$ and $\beta$.
Therefore, $\Sigma^*$ cannot be definite: the matrix $\Sigma^*$ must have a null eigenvalue, and the matrix $\Gamma_{\alpha}^*$ has at least one null eigenvalue as well.

Meanwhile, \citep[Proposition 1]{zhang2024gromov} provides an estimate of the difference between the unregularized and entropic GW values, giving a bound of the form:
\begin{equation*}
\GW_{\varepsilon}(\alpha, \alpha) = \abs{ \GW_{\varepsilon}(\alpha, \alpha) - \GW(\alpha, \alpha)} \leq \widetilde{C}(\DD, R) \cdot \varepsilon \max(1, \log(1/\varepsilon)).
\end{equation*}
In the original result of \cite{zhang2024gromov}, the constant depends on the moments of $\alpha$ rather than its radius; however, since $\alpha$ is bounded, we can bound its moments by functions of $R$ and simply make the constant depend on $\DD$ and $R$.

Let $\pi^*$ denote the optimal plan of $\GW_{\varepsilon}(\alpha, \alpha)$ associated to the auto-correlation matrix $\Gamma_{\alpha}^*$. Using \cref{prop:cnt_gw_developed}, we get:
\begin{equation*}
\GW_{\varepsilon}(\alpha, \alpha) = C(\alpha, \alpha) - 8 \cdot \norm{ \int xx'^T \diff\pi^*}_F^2 - 4 \cdot \int \norm{x}^2\norm{x'}^2 \diff\pi^* \leq \widetilde{C}(\DD, R) \cdot \varepsilon \max(1, \log(1/\varepsilon)).
\end{equation*}
A direct computation shows that $C(\alpha, \alpha) = 8 \norm{\int x x^T \diff\alpha}_F^2 + 4 \int \norm{x}^4 \diff\alpha$, and:
\begin{equation*}
    8 \cdot \left(\norm{\int x x^T \diff\alpha}_F^2 - \norm{\int x x'^T \diff\pi^*}_F^2 \right) + 4 \cdot \left(\int \norm{x}^4 \diff\alpha - \int \norm{x}^2\norm{x'}^2 \diff\pi^* \right) \leq \widetilde{C}(\DD, R) \cdot \varepsilon \max(1, \log(1/\varepsilon)).
\end{equation*}
By the Cauchy-Schwartz inequality, $\int \norm{x}^2\norm{x'}^2 \diff\pi^* \leq \int \norm{x}^4 \diff\alpha$, and we finally obtain:
\begin{equation}
\label{eq:gwaa_eps_variation}
    8 \cdot \left( \norm{\Sigma_{\alpha}}_F^2 - \norm{\Gamma_{\alpha}^*}_F^2 \right) \leq \widetilde{C}(\DD, R) \cdot \varepsilon \max(1, \log(1/\varepsilon)).
\end{equation}
Moreover, $\Sigma_{\alpha}$ and $\Gamma_{\alpha}^*$ are symmetric semi-definite matrices.
Therefore, given $(e_i)$ an orthonormal eigenvector basis of $\Gamma_{\alpha}^*$ (by increasing eigenvalue), we have:
    \begin{equation*}
        \norm{\Gamma_{\alpha}^*}_F^2 = \sum_{i=1}^\DD (e_i^T \Gamma_{\alpha}^* e_i)^2 \quad \text{ and } \quad \norm{\Sigma_{\alpha}}_F^2 \geq \sum_{i=1}^\DD (e_i^T \Sigma_{\alpha} e_i)^2.
    \end{equation*}
 From the rearranging inequality, we also have:
\begin{equation*}
\text{For any } i \geq 1, \quad e_i^T \Gamma_{\alpha}^* e_i = \int (e_i^T x) (e_i^T x') \diff\pi \leq \int (e_i^T x) (e_i^T x) \diff\alpha = e_i^T \Sigma_{\alpha} e_i.
\end{equation*}
Finally, since $\Gamma_{\alpha}^*$ admits at least one null eigenvalue, we have $e_1^T \Gamma_{\alpha}^* e_1 = 0$, and $e_1^T \Sigma_{\alpha} e_1 \geq \lambda_\alpha$ where $\lambda_\alpha$ is the smallest eigenvalue of $\Sigma_{\alpha}$.
Putting everything together, we obtain the following inequality:
\begin{equation*}
    \norm{\Gamma_{\alpha}^*}_F^2 = \sum_{i=2}^\DD (e_i^T \Gamma_{\alpha}^* e_i)^2 \leq \sum_{i=2}^\DD (e_i^T \Sigma_{\alpha} e_i)^2 \leq \norm{\Sigma_{\alpha}}_F^2 - (e_1^T \Sigma_{\alpha} e_1)^2 \leq \norm{\Sigma_{\alpha}}_F^2 - \lambda_\alpha^2.
\end{equation*}

Plugging this result in \cref{eq:gwaa_eps_variation}, we finally prove that $ \varepsilon \max(1, \log(1/\varepsilon)) \geq C(\DD,R) \lambda_\alpha^2$ (with $C(\DD, R) =  8 / \widetilde{C}(\DD, R)$).
Applying the same proof on $\beta$ yields $ \varepsilon \max(1, \log(1/\varepsilon)) \geq C(\EE, R')  \lambda_\beta^2$, proving the contraposition of \cref{prop:definiteness_cnt_quad}.
\end{proof}

\subsection{Proofs of \cref{subsection:optim_bias}}

\equivalencepgd* 
\begin{proof}
To simplify the notations, we identify the base points $x \in \X$ and $y \in \Y$ with their GW-embeddings $\phi(x)$ and $\psi(y)$.

The gradient of $\Loss$ at any $\tilde{\pi} \in \meas{\X \times \Y}$ is $\grad_{\tilde{\pi}}{\Loss}: x,y \mapsto 2 \cdot \int \left( c_\X(x,x')-c_\Y(y,y') \right)^2 \diff\tilde{\pi}(x',y')$.
Therefore:
\begin{equation*}
    \text{For all } \pi, \tilde{\pi} \in  \meas{X \times Y}, \quad \int \grad_{\tilde{\pi}}\Loss \cdot \diff\pi = 2 \cdot \int \left( c_X(x,x')-c_Y(y,y') \right)^2 \diff\pi(x,y) \diff\tilde{\pi}(x',y'). 
\end{equation*}
When both $\pi$ and $\tilde{\pi}$ belong to the feasible set $\coupling{\alpha}{\beta}$, the computations of \cref{prop:cnt_gw_developed} can be adapted to show that:
\begin{equation*}
     \int \grad_{\tilde{\pi}}\Loss \cdot \diff\pi = 2 \cdot C(\alpha,\beta) - 16 \cdot \scal{\int x  y^{\top} \diff\pi}{\int x' y'^{\top}  \diff\tilde{\pi}}_{\HS} - 4 \int  \norm{x}^2\norm{y}^2 \cdot \diff\pi(x,y)- 4 \int  \norm{x}^2\norm{y}^2 \cdot \diff\tilde{\pi}(x,y),
\end{equation*}
where the last term is independent of $\pi$.
Therefore, for any $t > 0$, since $\Gamma_{t} = \int x y^{\top} \diff \pi_t$ and $\pi_{t+1} = \argmin_{\pi \in \coupling{\alpha}{\beta}} \OT^{\Gamma_t}_{\varepsilon}(\alpha, \beta)$ (with $\OT^{\Gamma_t}_{\varepsilon}$ defined in \cref{remark:egw_otherform}), we have:
\begin{equation*}
\begin{split}
        \pi_{t+1} &= \argmin_{\pi \in \coupling{\alpha}{\beta}} \left( - 16 \cdot \scal{\int x  y^{\top} \diff\pi}{\Gamma_t}_{\HS} - 4 \int  \norm{x}^2\norm{y}^2 \cdot \diff\pi(x,y) + \varepsilon\KL{\pi}{\alpha \otimes \beta} \right) \\
        &=  \argmin_{\pi \in \coupling{\alpha}{\beta}} \left( 
        \int \grad_{\pi_t}\Loss \cdot \diff\pi  + \varepsilon\KL{\pi}{\alpha \otimes \beta} \right),
\end{split}
\end{equation*}
which proves the desired equivalence.
\end{proof}

\equivalencedualgrad*

\begin{proof}
We first explicit the dual function:
\begin{equation*}
    \text{For all } \Gamma \in \Hspace, \quad \Dualfun_{\varepsilon}(\Gamma) = \norm{\Gamma}^2_{\HS} + (1/8) \cdot \OT_{\varepsilon}^{\Gamma}(\alpha, \beta).
\end{equation*}
We use \citep[Proposition 2]{rioux2024entropic} which proves the smoothness of $\Dualfun_{\varepsilon}$ and a formula for $\grad\Dualfun_{\varepsilon}(\Gamma)$ in the finite dimensional case. 
Their proof remains valid in the Hilbert setting\footnote{Indeed, the first term is quadratic, therefore smooth. The proof in \citep[Proposition 2]{rioux2024entropic} uses the implicit function theorem on $\Gamma$, which applies in this infinite dimensional setting of Hilbert space of Hilbert-Schmidt operators.}; $\grad\Dualfun_{\varepsilon}(\Gamma)$ exists, is continuous, and satisfies:
\begin{equation*}
\text{For all } \Gamma \in \Hspace, \quad 
    \grad\Dualfun_{\varepsilon}(\Gamma) = 2 \cdot \Gamma - 2 \cdot \int x y^{\top} \diff\pi_{\Gamma} \quad \text{ where } \quad \pi_{\Gamma} = \argmin_{\pi}\OT^{\Gamma}_{\varepsilon} (\alpha, \beta).
\end{equation*}
As a consequence, since $\pi_{t+1} = \argmin_{\pi} \OT^{\Gamma_t}_{\varepsilon}(\alpha, \beta)$ and $\Gamma_{t+1} = \int x y^{\top} \diff\pi_{t+1}$ for all $t > 0$, the previous equation becomes:
\begin{equation*}
     \grad\Dualfun_{\varepsilon}(\Gamma_t) = 2 \cdot \Gamma_t - 2 \cdot \Gamma_{t+1},
\end{equation*}
which is equivalent to the desired equation.
\end{proof}

\bigskip

\altminconvergencereg*

\begin{proof}
We first write: 
\begin{equation*}
\Dualfun_{\varepsilon}(\Gamma) = \min_{\pi \in \coupling{\alpha}{\beta}} f_{\varepsilon, \pi}(\Gamma)~, \quad \text{ where } \quad
    f_{\varepsilon, \pi} : \Gamma \longmapsto \norm{\Gamma}_{\HS}^2 - 2 \scal{\Gamma}{\int x y^{\top} \diff\pi}_{\HS} + C(\pi),
\end{equation*}
with $C(\pi)$ a constant independent of $\Gamma$.
Moreover, $\Dualfun_{\varepsilon}(\Gamma_t) = f_{\varepsilon, \pi_{t+1}}(\Gamma_t)$ and $\Gamma_{t+1} = \int x y^{\top} \diff\pi_{t+1}$.
Therefore:
\begin{equation*}
\begin{split}
\Dualfun_{\varepsilon}(\Gamma_t) - \Dualfun_{\varepsilon}(\Gamma_{t+1}) &\geq  f_{\varepsilon,\pi_{t+1}}(\Gamma_t) -  f_{\varepsilon,\pi_{t+1}}(\Gamma_{t+1}) \\
& \geq (\norm{\Gamma_{t}}_{\HS}^2 - \norm{\Gamma_{t+1}}_{\HS}^2) - 2 (\scal{\Gamma_t}{\Gamma_{t+1}}_{\HS} - \scal{\Gamma_{t+1}}{\Gamma_{t+1}}_{\HS}) \\
& \geq \norm{\Gamma_{t+1}}_{\HS}^2 + \norm{\Gamma_{t}}_{\HS}^2 - 2 \scal{\Gamma_t}{\Gamma_{t+1}}_{\HS} \\
& \geq \norm{\Gamma_t - \Gamma_{t+1}}_{\HS}^2.
\end{split}
\end{equation*}

By summation, we obtain:
\begin{equation}
    \sum_{k=0}^{t-1} \norm{\Gamma_k - \Gamma_{k+1}}_{\HS}^2 \leq \Dualfun_{\varepsilon}(\Gamma_0) - \Dualfun_{\varepsilon}(\Gamma_{t}).
\end{equation}
Since the right term converges to a finite value, we must have $\norm{\Gamma_k - \Gamma_{k+1}}_{\HS} \rightarrow 0$.
Since $\Gamma_{t+1} - \Gamma_{t} = \frac{1}{2} \grad\Dualfun_{\varepsilon}(\Gamma_t)$, we also have $\grad\Dualfun_{\varepsilon}(\Gamma_t) \rightarrow 0$: by continuity of $\grad\Dualfun_{\varepsilon}(\Gamma_t)$, all sublimits of $(\Gamma_t)$ are critical points of $\Dualfun_{\varepsilon}$.
\end{proof}

\begin{remark}
The limit $\norm{\Gamma_k - \Gamma_{k+1}}_{\HS} \rightarrow 0$ implies that the set of subsequential limits of $(\Gamma_t)$ is connected: in particular, if the set of critical points of $\Dualfun_{\varepsilon}$ is discrete, then the sequence $(\Gamma_t)$ admits a unique limit.
Moreover, since $\Gamma_{t+1} - \Gamma_{t} = \frac{1}{2} \grad\Dualfun_{\varepsilon}(\Gamma_t)$, we also have a bound on the gradient norms:
\begin{equation*}
   \sum_{k=0}^{t-1} \norm{\grad\Dualfun_{\varepsilon}(\Gamma_k)}_{\HS}^2 \leq \tfrac{1}{2} \cdot (\Dualfun_{\varepsilon}(\Gamma_0) - \Dualfun_{\varepsilon}(\Gamma_t)).
\end{equation*}
\end{remark}

\newpage

\section{Additional Experiments}
\label{appendix:additional_expes}

\subsection{Detailed Analysis of Solver Convergence}
\label{subsection:appendix:solver_cvg}
We now explore convergence properties of EGW solvers.
In particular, we thoroughly investigate the differences between classical and kernel-based implementations by focusing on \textsc{Entropic-GW} \cite{peyre2016gromov} and \textsc{Kernel-GW} (\cref{alg:kernel_gw}), which are both exact EGW solvers.
We mainly show that convergence speed is significantly increased by using symmetrized Sinkhorn (\cref{alg:sinkhorn_symm}) rather than standard Sinkhorn in EGW solvers.
Symmetrized Sinkhorn with \textsc{Entropic-GW} also introduce instabilities that are greatly attenuated with \textsc{Kernel-GW}.
Finally, adaptive Sinkhorn (\cref{alg:kernel_gw_adaptive}) allows symmetrized \textsc{Kernel-GW} to converge without requiring to tune $\ninner$ manually, whereas other variants does not benefit from adaptive scheduling.
Therefore, combining both symmetrization and kernelization is the best option to fasten convergence while keeping stable optimization, explaining the speed-up of \textsc{CNT-GW} over \textsc{Quadratic-LowRank-GW}.

\subsubsection{Visualization of Convergence Properties}

\paragraph{Case of Non-Convergence for Non-CNT Costs.} In \cref{fig:gradientbias}, we illustrate a counter-example to the convergence of \textsc{Entropic-GW}
For the cost function $c_X, c_Y = \norm{\cdot}^6$ (which is not conditionally of negative type), we identify a configuration where the algorithm oscillates indefinitely between two distinct states, which we guarantee to never happen with CNT costs.

\begin{figure*}[h]
\centering
\includegraphics[width=\textwidth]{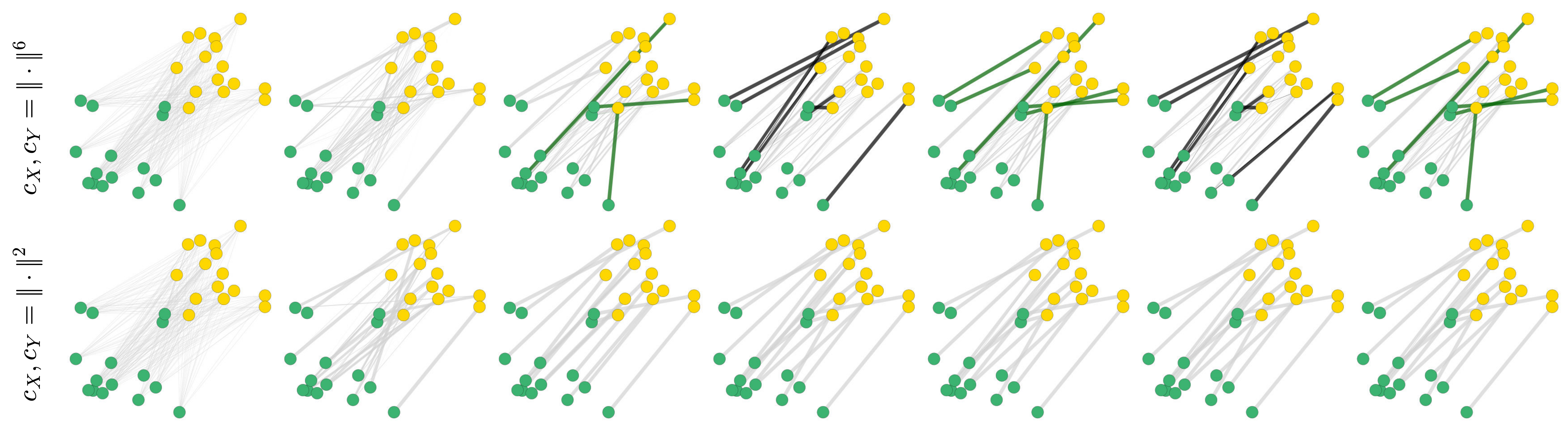} 
    \caption{Successive transport plans obtained by \textsc{Entropic-GW}. For non CNT costs (e.g., $c_X,c_Y = \norm{\cdot}^6$), the algorithm may oscillate indefinitely between distinct transport plans.
    We prove that this cannot occur when the costs are CNT (e.g., $c_X,c_Y = \norm{\cdot}^2$).}
    \label{fig:gradientbias}
\end{figure*}

\paragraph{Convergence Trajectories.} We visualize the impact of Sinkhorn symmetrization and the transition from \textsc{EntropicGW} to \textsc{Kernel-GW} by plotting optimization trajectories in the dual space $\Hspace$.
We take squared norm costs and measures in $\R$ and $\R^2$ so that $\Hspace = \R^{1 \times 2}$ can be displayed in 2D, and we represent the optimization trajectories for different numbers of Sinkhorn iterations $\ninner$.
As shown in \cref{fig:convergencebias}, standard Sinkhorn does not follow the true dual gradient: solver steps are systematically oriented towards the origin, indicating a bias towards the trivial plan $\alpha \otimes \beta$.
Symmetrizing Sinkhorn solves this problem, and the solver follows a straight trajectory towards the optimum.
When further reducing $\ninner$, an additional bias appears: the solver does not converge to the EGW minimum but stabilizes around a value biased towards the origin.
\textsc{Kernel-GW} corrects this effect: when $\ninner$ is small, Sinkhorn steps remain noisy but gravitate around the true optimum.

\begin{figure*}[h]
\begin{subfigure}[b]{0.165\linewidth}
        \centering
        \includegraphics[width=\linewidth]{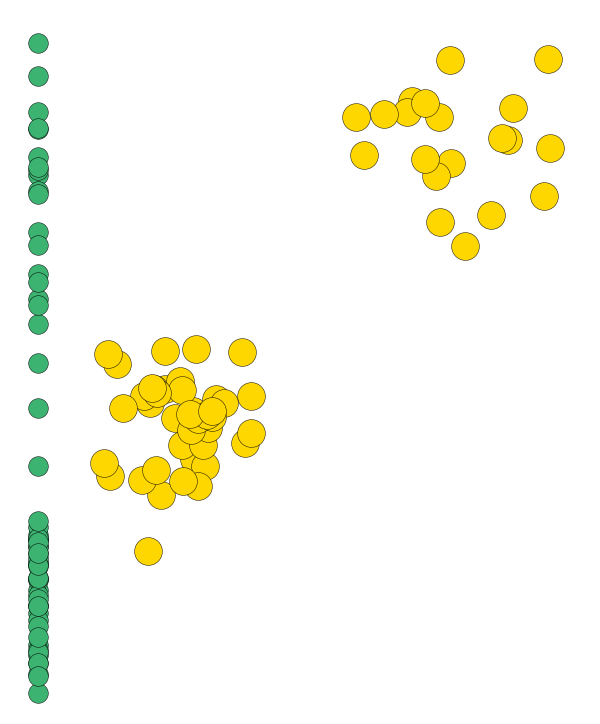}
        \caption{Inputs $\alpha$ and $\beta$}
        \label{subfig:convergencebias_inputs}
    \end{subfigure}
    \centering
    \begin{subfigure}[b]{0.412\linewidth}
        \centering
        \includegraphics[width=0.494\linewidth]{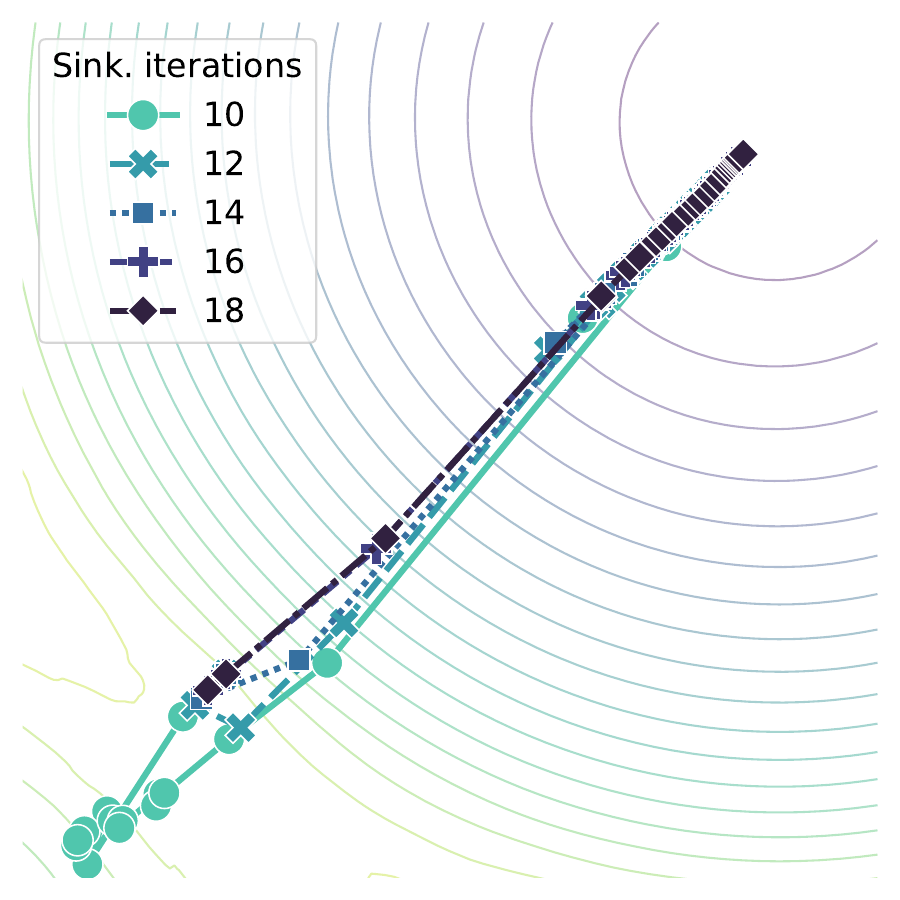}
        \includegraphics[width=0.494\linewidth]{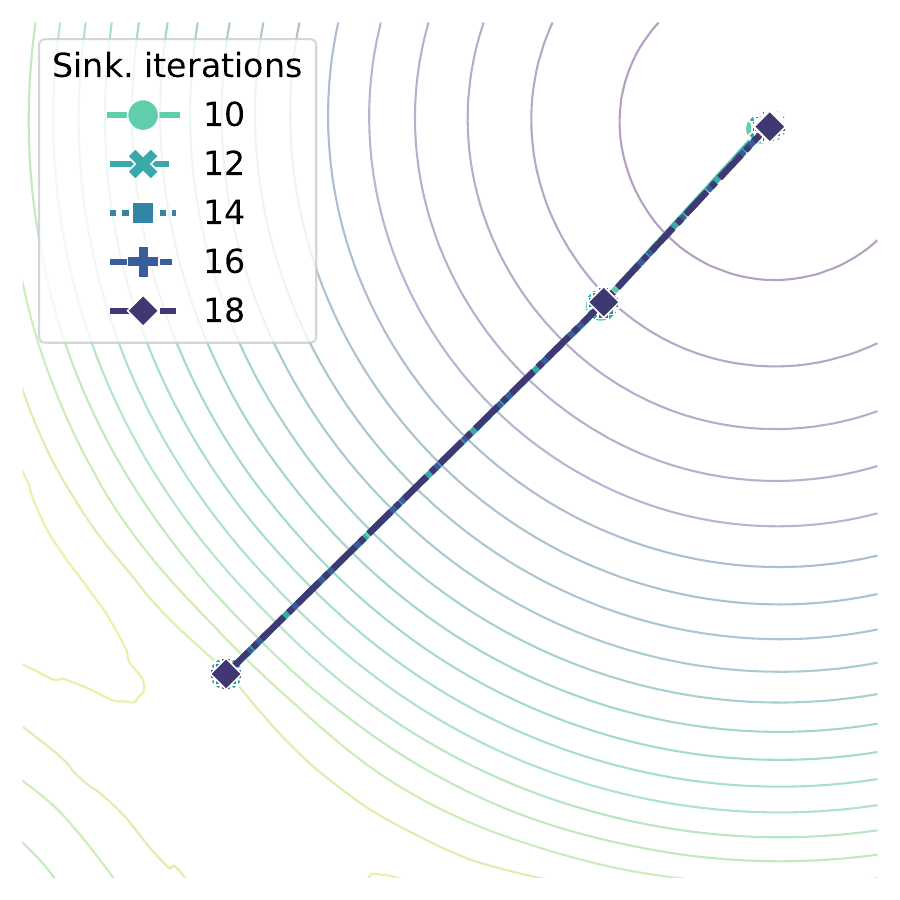}
        \caption{Non-symmetrized \textit{(left)} vs Symmetrized \textit{(right)}}
        \label{subfig:convergencebias_symm}
    \end{subfigure}
    \hfill
    \begin{subfigure}[b]{0.412\linewidth}
        \centering
        \includegraphics[width=0.494\linewidth]{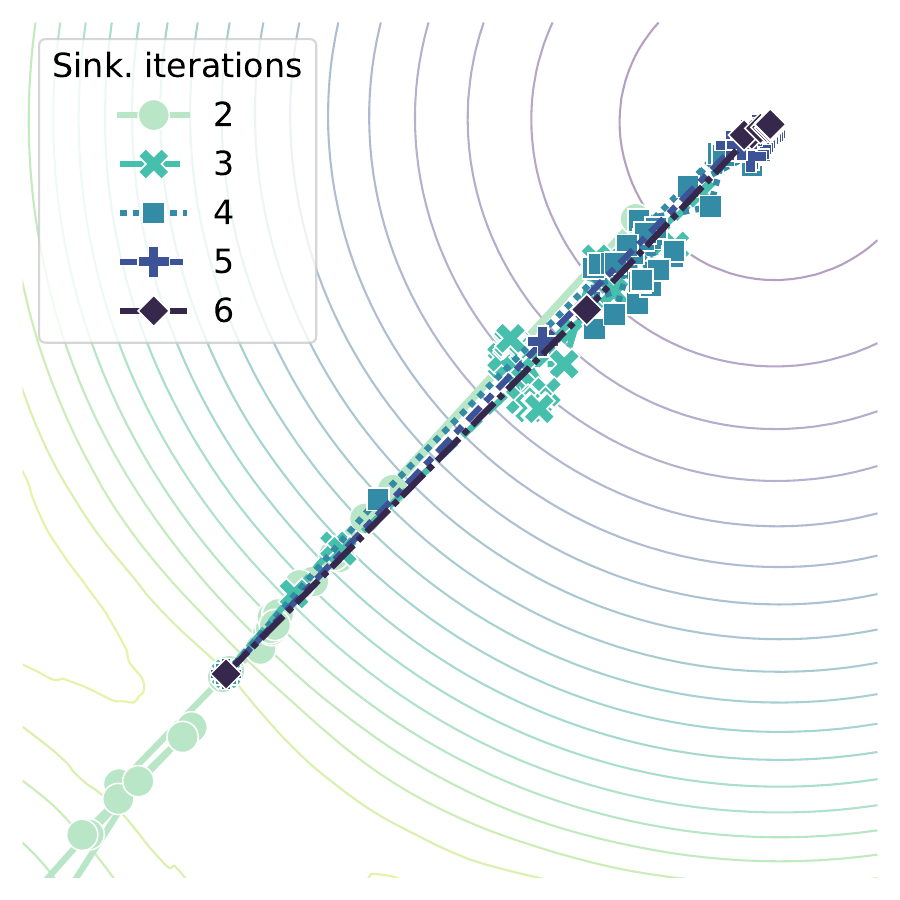}
        \includegraphics[width=0.494\linewidth]{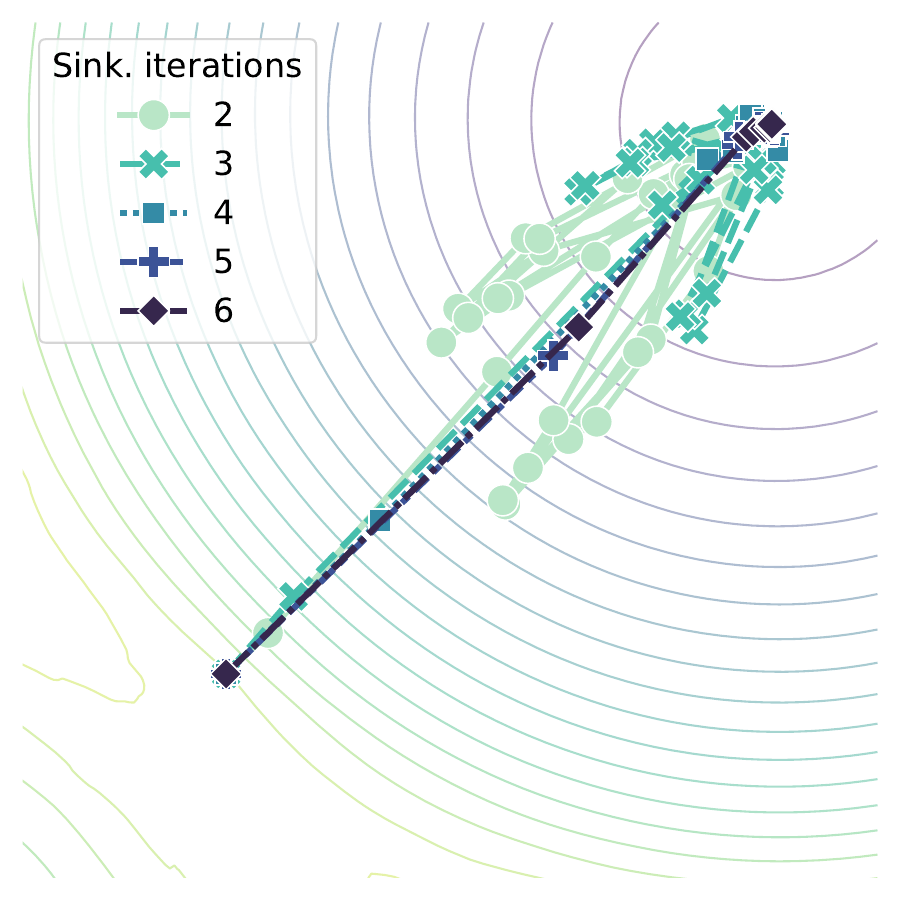}
        \caption{\textsc{Entropic-GW} \textit{(left)} vs \textsc{Kernel-GW} \textit{(right)}}
        \label{subfig:convergencebias_primaldual}
    \end{subfigure}

    \caption{Optimization steps obtained with EGW solvers in different configurations. The dual landscape $\Gamma \mapsto \Dualfun_{\varepsilon}(\Gamma)$ is plot in the background.
    Solvers are initialized at $\pi = \alpha \otimes \beta$, i.e. $\Gamma = 0$. In the right plot, both solvers use symmetrized Sinkhorn.}
    \label{fig:convergencebias}
\end{figure*}

\subsubsection{Convergence Speed Evaluation} 

To evaluate the practical impact of symmetrization and kernelization on convergence, we study the convergence speed of \textsc{Entropic-GW} and \textsc{Kernel-GW} on different shapes, with standard and symmetrized Sinkhorn.
The curves of this section were obtained by sampling $\NN = 2,000$ points uniformly from the horse shape of the SMAL dataset \cite{Zuffi_CVPR_2017}; we record elapsed time and EGW loss at each solving step.
The curves are averaged over $20$ random samplings (except for \cref{fig:ablation} and \cref{fig:smaller_sinkiters} were $50$ samples were computed).
We plot the results for different number of Sinkhorn iterations $\ninner$.
We also evaluate solvers with adaptive scheduling: starting from a small number of iterations, $n_{\text{inner}}^0 = 5$, we double its value each time the estimated EGW loss increases from one step to the other.
We compare \textsc{Entropic-GW} and \textsc{Kernel-GW}, as well as \textsc{Quadratic-LowRank-GW} and \textsc{CNT-GW}. 
Results are given in \cref{fig:ablation}.

Here are our key findings. Methods using standard Sinkhorn updates converge slowly; when the iteration budget $\ninner$ is low, they fail to reach the true global minimum.
Applying symmetrization to \textsc{Entropic-GW} accelerates convergence but introduces an instability: the solver tends to diverge for small $\ninner$ and does not benefit effectively from adaptive scheduling.
With standard Sinkhorn, \textsc{Kernel-GW} is slightly more stable than \textsc{Entropic-GW}, as the spikes observed at the beginning of the curves are attenuated; however, it does not make any meaningful difference on convergence time.
Finally, the combination of \textsc{Kernel-GW} with symmetrized Sinkhorn provides the most robust performance. 
This configuration accelerates and stabilizes convergence across all values of $\ninner$, allowing the adaptive scheduling to function optimally.
We obtain similar conclusions with the approximation methods \textsc{Quadratic-LowRank-GW} and \textsc{CNT-GW}. 

\begin{figure*}[h!]
\centering
\begin{subfigure}{\linewidth}
\centering
\includegraphics[width=\textwidth]{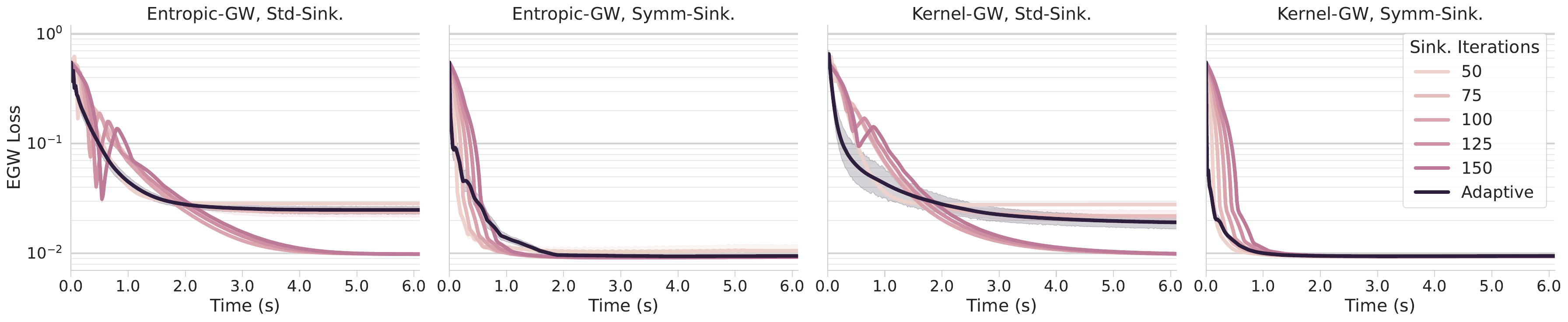} 
\caption{Cubic-time methods: \textsc{Entropic-GW} and \textsc{Kernel-GW}.}
\label{subfig:ablation_distinct_horses}
\end{subfigure}
\begin{subfigure}{\linewidth}
\centering
\includegraphics[width=\textwidth]{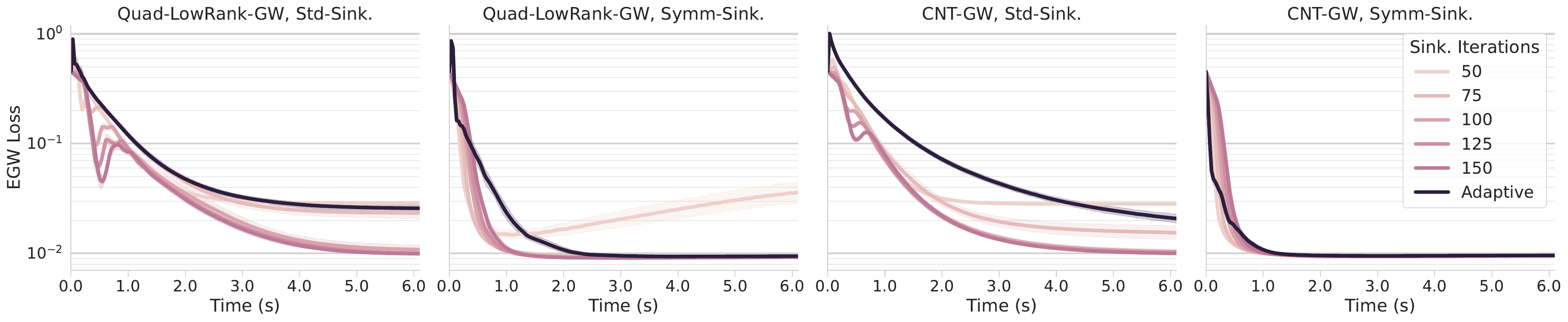} 
\caption{Quadratic-time approximations: \textsc{Quadratic-LowRank-GW} and \textsc{CNT-GW}.}
\end{subfigure}
\caption{Evolution of the EGW loss through time in different configurations, for various numbers of Sinkhorn iterations.}
\label{fig:ablation}
\end{figure*}

\paragraph{Smaller Sinkhorn Iterations Number.} These trends are amplified when we further reduce $\ninner$ (\cref{fig:smaller_sinkiters}).
Notably, the symmetrized versions of \textsc{Entropic-GW} and \textsc{Quadratic-LowRank-GW} exhibit important divergence.
In contrast, \textsc{Kernel-GW} and \textsc{CNT-GW} remain stable for $\ninner \geq 30$.
When $\ninner \leq 20$, instabilities start to appear but final losses remain closer to the true minimum compared to baselines.

\begin{figure*}[ht]
\centering
\begin{subfigure}{\linewidth}
\includegraphics[width=\textwidth]{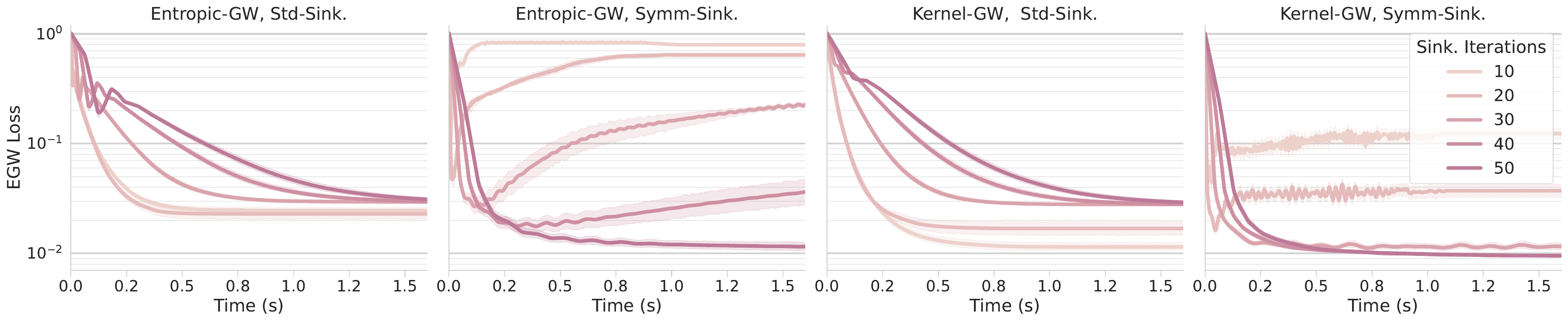} 
\caption{Cubic-time methods: \textsc{Entropic-GW} and \textsc{Kernel-GW}.}
\end{subfigure}
\begin{subfigure}{\linewidth}
\includegraphics[width=\textwidth]{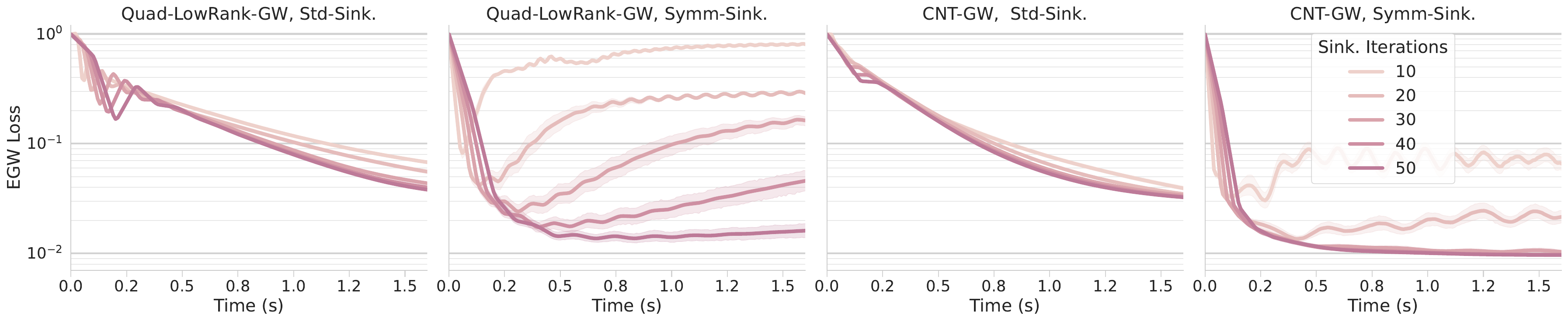} 
\caption{Quadratic-time approximations: \textsc{Quadratic-LowRank-GW} and \textsc{CNT-GW}.}
\end{subfigure}
\caption{Convergence curves for smaller values of $\ninner$.}
\label{fig:smaller_sinkiters}
\end{figure*}

\paragraph{Stopping Criterion for Sinkhorn Iterations.} An alternative to a fixed iteration budget is to terminate the Sinkhorn loop dynamically based on marginal error. Specifically, given a threshold $\tau$, we iterate until the weighted $L_1$ error satisfies:
\begin{equation*}
    \sum_i  \Big\vert \sum_j \pi_{ij} - \alpha_i \Big\vert \cdot \alpha_i < \tau \quad \text{ and } \quad \sum_j  \Big\Vert \sum_i \pi_{ij} - \beta_j \Big\Vert \cdot \beta_j  < \tau.
\end{equation*}
Results in \cref{fig:cvg_stopThr} show that while symmetrizing Sinkhorn still improves convergence, the kernel implementation does not bring noticeable improvements. 
Sinkhorn thresholding also appears to be slower than choosing a fixed number of iterations, and its sensitivity to the choice of $\tau$ makes this criterion more complex to tune.
Therefore, we recommend using a fixed number of iterations, although hybrid strategies (that use both a fixed maximum of iterations and early stopping based on margin error) offer a good alternative.

\begin{figure*}[h!]
\centering
\includegraphics[width=\textwidth]{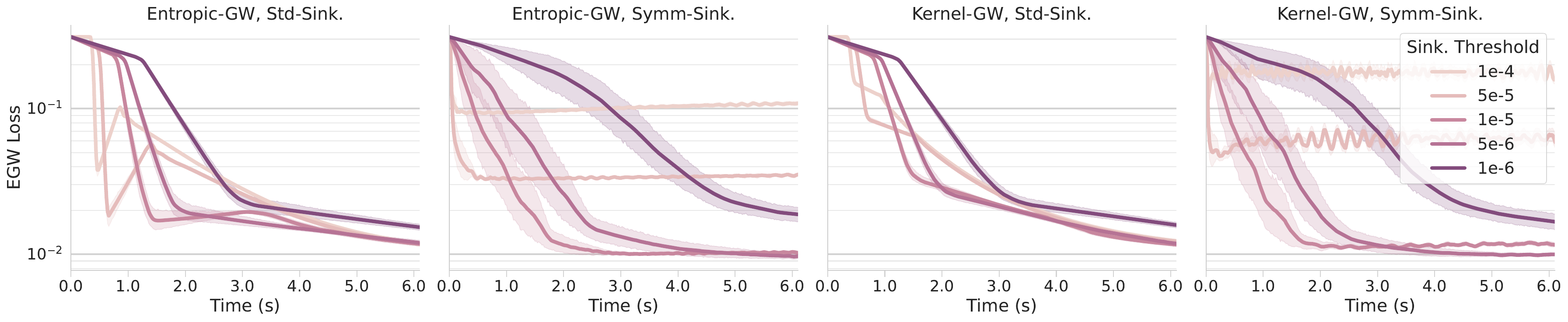} 
\caption{Convergence curves using Sinkhorn convergence thresholding.}
\label{fig:cvg_stopThr}
\end{figure*}

\subsubsection{Comparison with other solvers.}
\label{subsubsection:appendix_other_solvers_comparison}

We provide convergence curves of other EGW baselines, using the same experimental setup as in the previous section.

\paragraph{Sampled-GW.} \cref{subfig:sampledgw} illustrates the behavior of \textsc{Sampled-GW} \cite{kerdoncuff2021sampled} on the horse shapes.
It highlights a fundamental limitation: random sampling induces important fluctuations in the EGW loss, preventing the algorithm from converging to a stationary point.
This variability restricts its utility in applications requiring precise matching.

\paragraph{Dual-GW.} We extend the dual solver of \cite{rioux2024entropic} (originally for squared norms) to CNT costs using our Kernel PCA embeddings.
As shown in \cref{subfig:dualgw}, convergence remains slow, empirically confirming the theoretical convergence rate improvement proven in \cref{prop:altmin_convergence_reg}.

\paragraph{Proximal-GW.} Finally, \cref{subfig:proximalgw} presents results for \textsc{Proximal-GW} \cite{xu2019scalable} -- which outputs exact GW solutions rather than EGW couplings.
We also introduce a kernelized variant, \textsc{ProximalKernel-GW}, based on the same cost matrix as \textsc{Kernel-GW}.
Sinkhorn symmetrization still accelerates convergence in this proximal setting.
However, kernel implementation provides no additional benefit here, and adaptive scheduling fails to converge properly.

\begin{figure*}[ht]
\centering
\begin{subfigure}{0.495\linewidth}
\includegraphics[width=\textwidth]{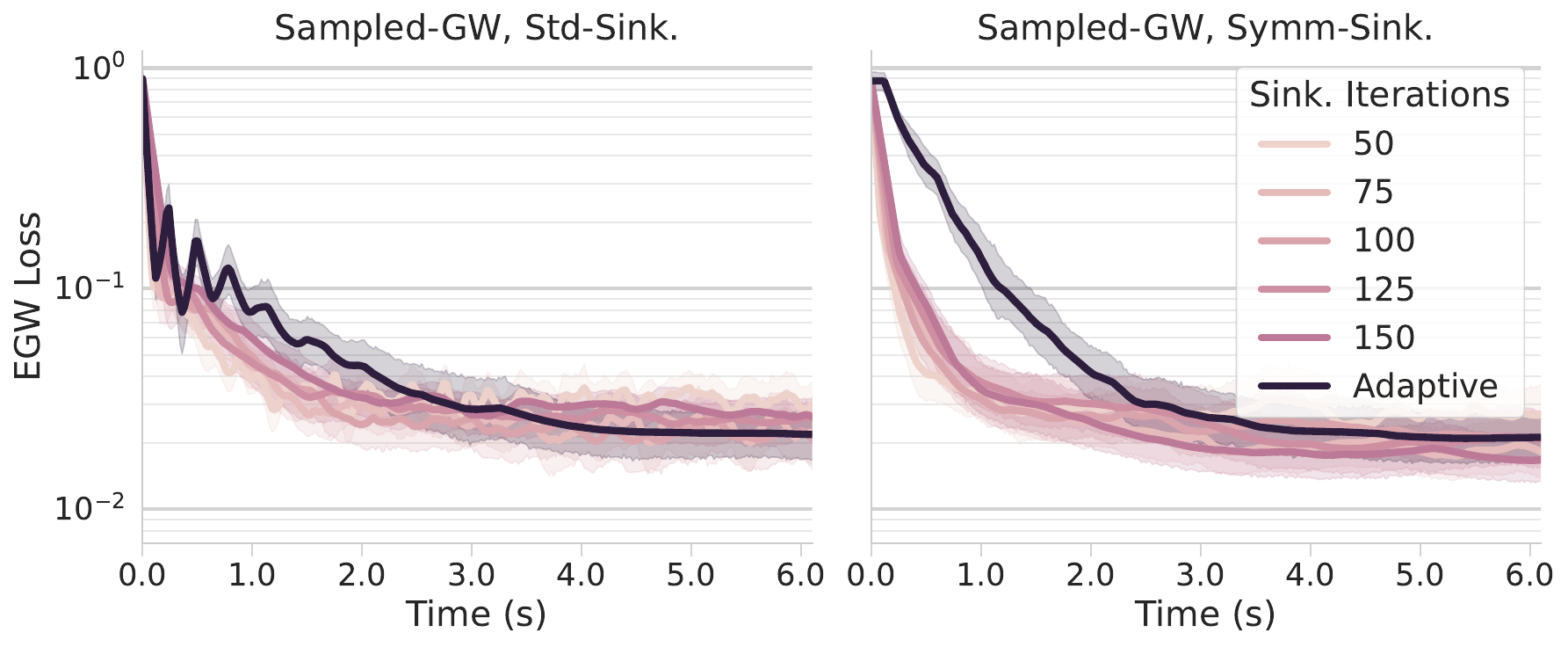} 
\caption{\textsc{Sampled-GW} (using $20$ samples).}
\label{subfig:sampledgw}
\end{subfigure}
\begin{subfigure}{0.495\linewidth}
\includegraphics[width=\textwidth]{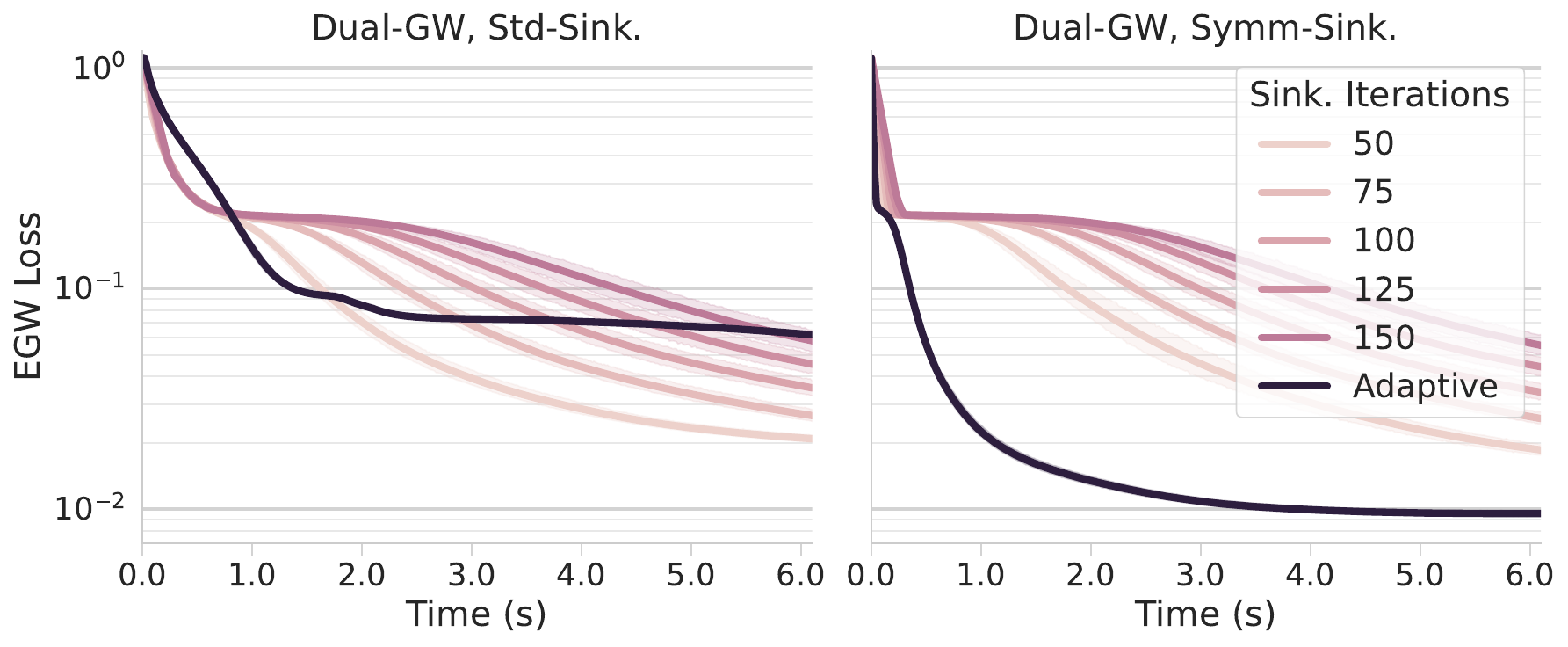} 
\caption{\textsc{Dual-GW} with Kernel PCA (with $D = 20$).}
\label{subfig:dualgw}
\end{subfigure}
\vfill
\begin{subfigure}{\linewidth}
\includegraphics[width=\textwidth]{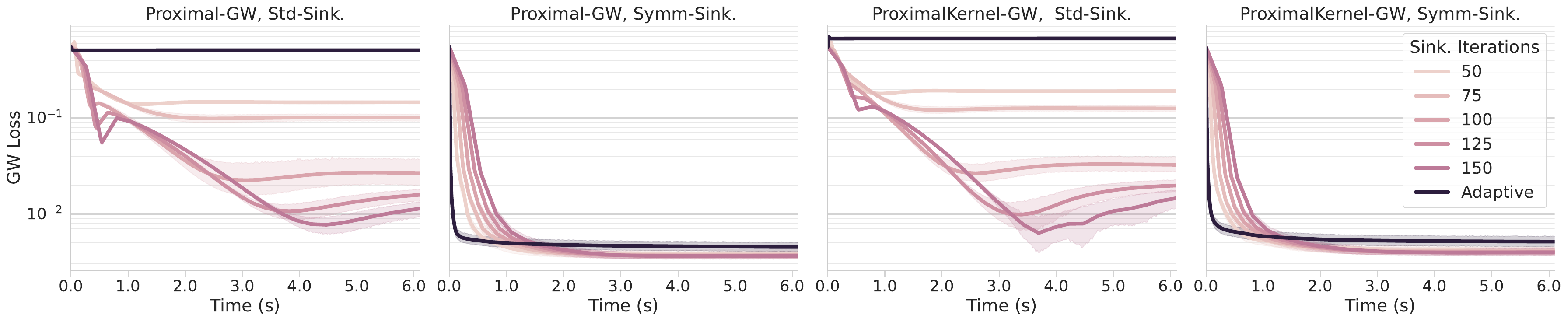} 
\caption{\textsc{Proximal-GW} and \textsc{ProximalKernel-GW}.}
\label{subfig:proximalgw}
\end{subfigure}
\caption{Convergence curves of other solvers with standard or symmetrized Sinkhorn.}
\label{fig:other_solvers}
\end{figure*}

\subsubsection{Convergence analysis on synthetic datasets.} 

\paragraph{Synthetic Shapes.} We evaluate \textsc{Kernel-GW} and \textsc{Entropic-GW} on point clouds sampled from a unit sphere and a regular tetrahedron (\cref{fig:convergence_surfaces}).
The tetrahedra case exhibits the same trend as before.
However, for spheres, symmetrizing Sinkhorn does not improve convergence speed: the large number of symmetries in the input shapes allows standard Sinkhorn methods to converge quickly even without symmetrization.
Yet, run times for symmetrized \textsc{Kernel-GW} are roughly equivalent to non-symmetrized solvers while symmetrized \textsc{Entropic-GW} is highly unstable when the number of iterations is too small.

\begin{figure*}[h!]
\centering
\begin{subfigure}{\linewidth}
\includegraphics[width=\textwidth]{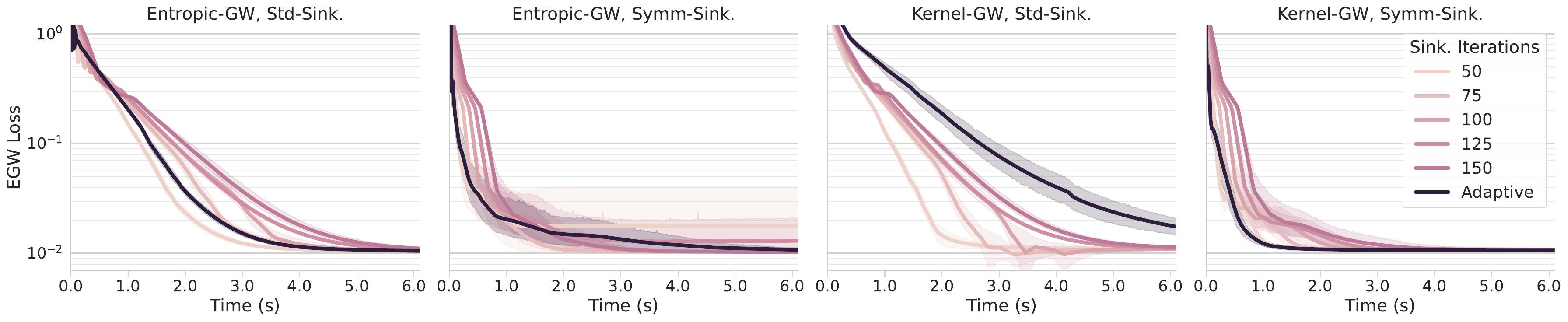} 
\caption{Tetrahedras.}
\label{subfig:cvg_tetrahedras}
\end{subfigure}
\begin{subfigure}{\linewidth}
\includegraphics[width=\textwidth]{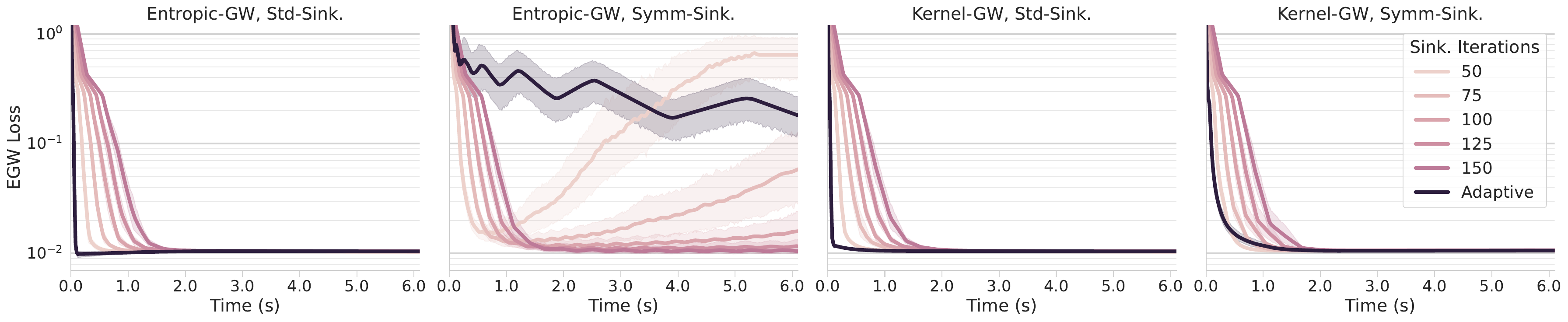} 
\caption{Spheres.}
\label{subfig:cvg_spheres}
\end{subfigure}
\caption{Convergence curves on synthetic surface distributions.}
\label{fig:convergence_surfaces}
\end{figure*}

\paragraph{Gaussian Inputs.} On unstructured Gaussian distributions (\cref{fig:convergence_gaussians}), \textsc{Kernel-GW} with symmetrized Sinkhorn clearly outperforms all other implementations, demonstrating superior stability in the absence of geometric structure.
This is especially true for 3D distributions, where symmetrized \textsc{Entropic-GW} fails to converge even with $\ninner=150$.

\begin{figure*}[h!]
\centering
\begin{subfigure}{\linewidth}
\includegraphics[width=\textwidth]{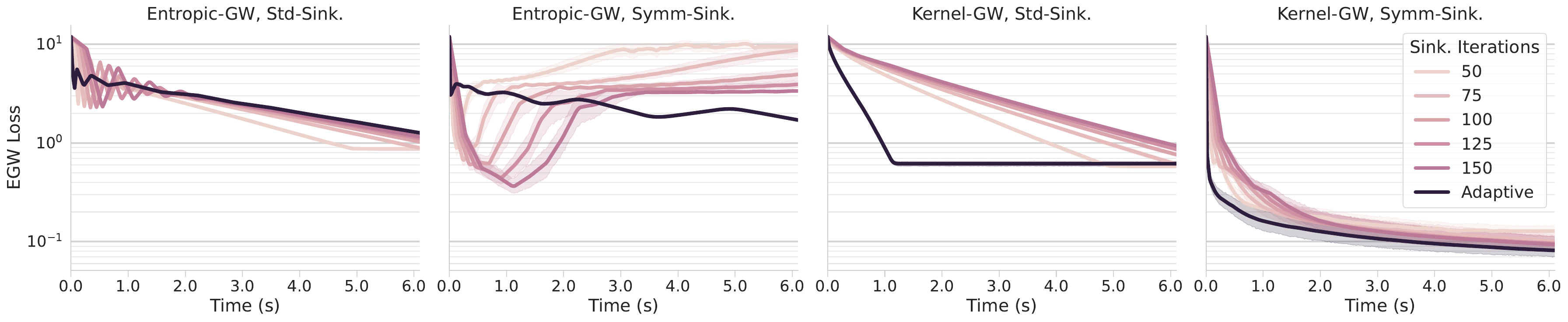} 
\caption{3D Gaussian distributions.}
\label{subfig:cvg_gaussian3d}
\end{subfigure}
\begin{subfigure}{\linewidth}
\includegraphics[width=\textwidth]{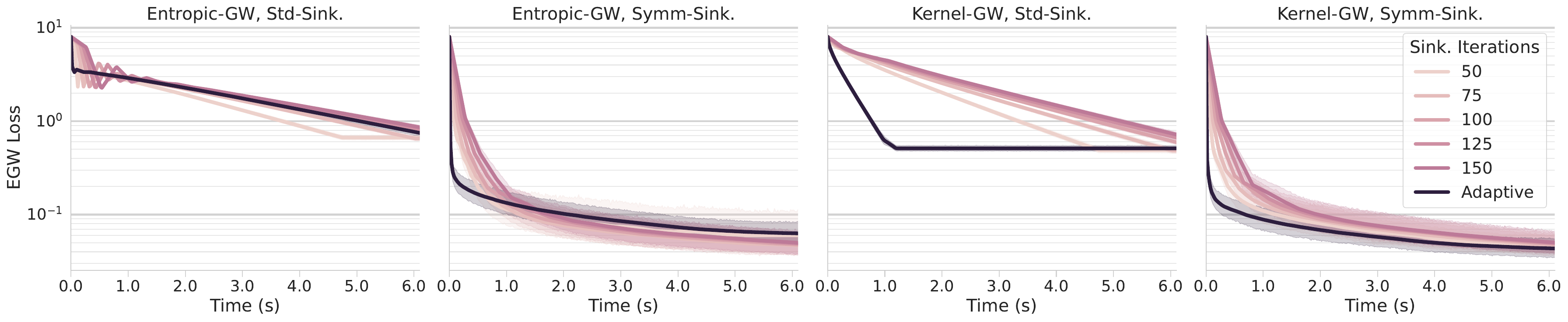} 
\caption{2D Gaussian distributions.}
\label{subfig:cvg_gaussian2d}
\end{subfigure}
\caption{Convergence curves on Gaussian samples.}
\label{fig:convergence_gaussians}
\end{figure*}

\paragraph{Distinct Input Distributions.} We finally identify a scenario where our method is less effective than the baseline.
When the input shapes differ significantly (\cref{fig:convergence_asymmetric}),
particularly when one input is highly symmetric (e.g., \cref{subfig:cvg_tetraspheres}), symmetrized \textsc{Kernel-GW} performs slightly worse than standard \textsc{Entropic-GW} and symmetrization generates oscillations for small Sinkhorn iterations.
However, our method remain convergent for sufficiently large iteration numbers, and adaptive Sinkhorn still succeeds to converge to the true optimum.
Therefore, our algorithm remains competitive even in this scenario.

\begin{figure*}[h!]
\centering
\begin{subfigure}{\linewidth}
\includegraphics[width=\textwidth]{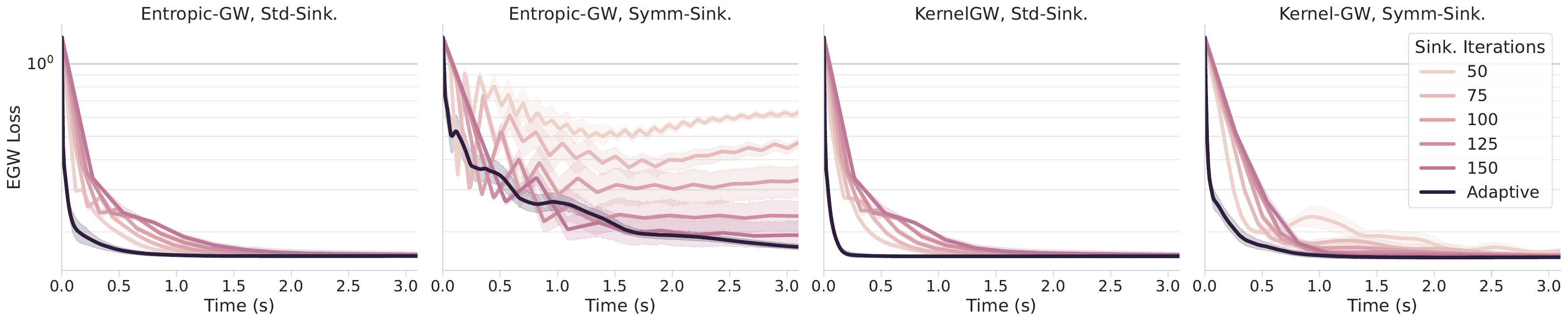} 
\caption{Tetrahedron and Rectangular Cuboid.}
\label{subfig:cvg_tetracubes}
\end{subfigure}
\begin{subfigure}{\linewidth}
\includegraphics[width=\textwidth]{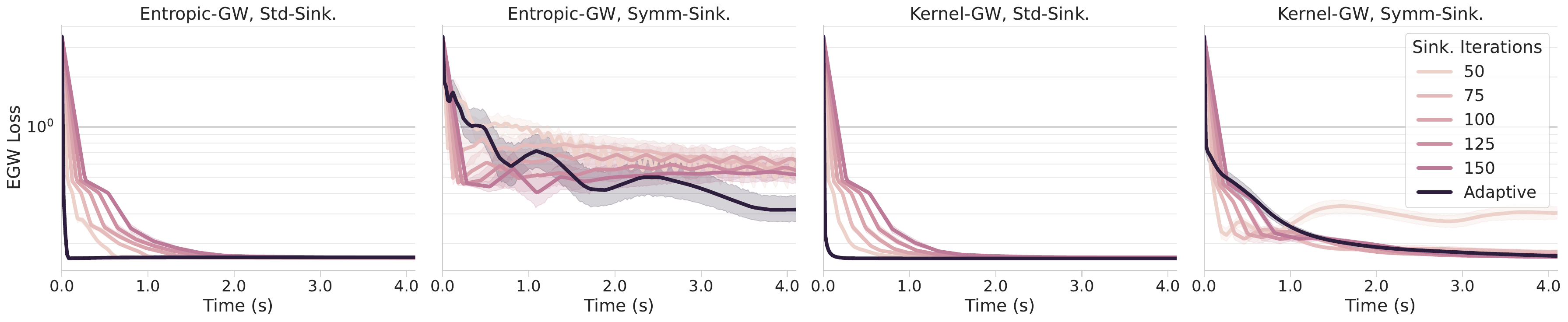} 
\caption{Tetrahedron and Sphere.}
\label{subfig:cvg_tetraspheres}
\end{subfigure}
\caption{Convergence curves on distinct surface distributions.}
\label{fig:convergence_asymmetric}
\end{figure*}

\subsection{Complementary Figures}

\paragraph{Benchmark of EGW Solvers.} \cref{fig:experiments_runtimebyn} displays the convergence times of our solvers and the main baselines as a function of the input size $\NN=\MM$.
Results were obtained by sampling point clouds uniformly from the horse shape of the SMAL dataset, and highlight the significant performance improvement of our kernel methods over \textsc{Entropic-GW} and \textsc{Quadratic-LowRank-GW}: \textsc{Kernel-GW} and \textsc{CNT-GW} achieve a $2-4 \times$ speed-up over their direct competitors, while \textsc{Multiscale-GW} is an order of magnitude faster.
\cref{fig:benchmark_onlyerr} shows that all methods output transport plans with equivalent losses, which confirms that the differences of convergence speeds are solely due to solver performances and not to a difference of solution quality \footnote{For memory reasons, the true EGW losses could not be computed for $\NN > 10^4$, and we relied on solver-specific approximations: this explains the gap between \textsc{CNT-GW}/\textsc{Multiscale-GW} and \textsc{Quadratic-LowRank-GW} in this regime.}.

\paragraph{Impact of the Embedding Dimension on the Quality of the Approximation.} 
\cref{fig:experiments_approxdim} displays the relative loss differences between true optimal plans and approximate ones obtained with \textsc{CNT-GW} and \textsc{Quadratic-LowRank-GW}, with $\NN = 2,000$.
These results show that \textsc{CNT-GW} maintains an approximation quality equivalent to that of \textsc{LowRank-GW}, demonstrating that our computational speed-ups do not compromise the quality of the output.
They also imply that approximation errors drop quickly as the embedding dimension increases, $\DD = 20$ being sufficient to reach a $1 \%$ error on the exact EGW loss.

\begin{figure}[ht]
\centering
    \begin{subfigure}[b]{0.33\linewidth}
        \centering
        \includegraphics[width=\linewidth]{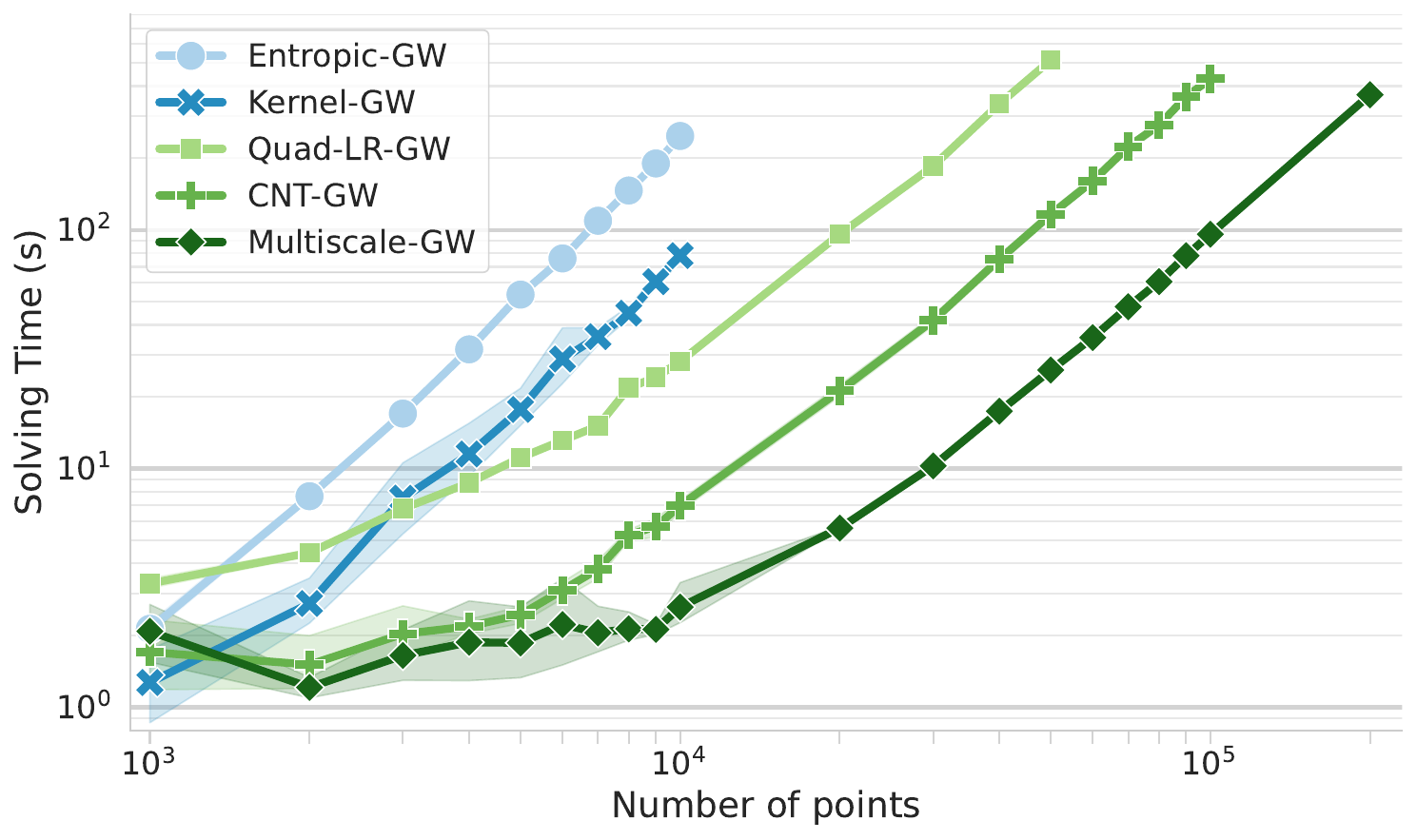}
        \caption{Convergence times.}
        \label{fig:experiments_runtimebyn}
    \end{subfigure}
    \hfill
    \begin{subfigure}[b]{0.33\linewidth}
        \centering
        \includegraphics[width=\linewidth]{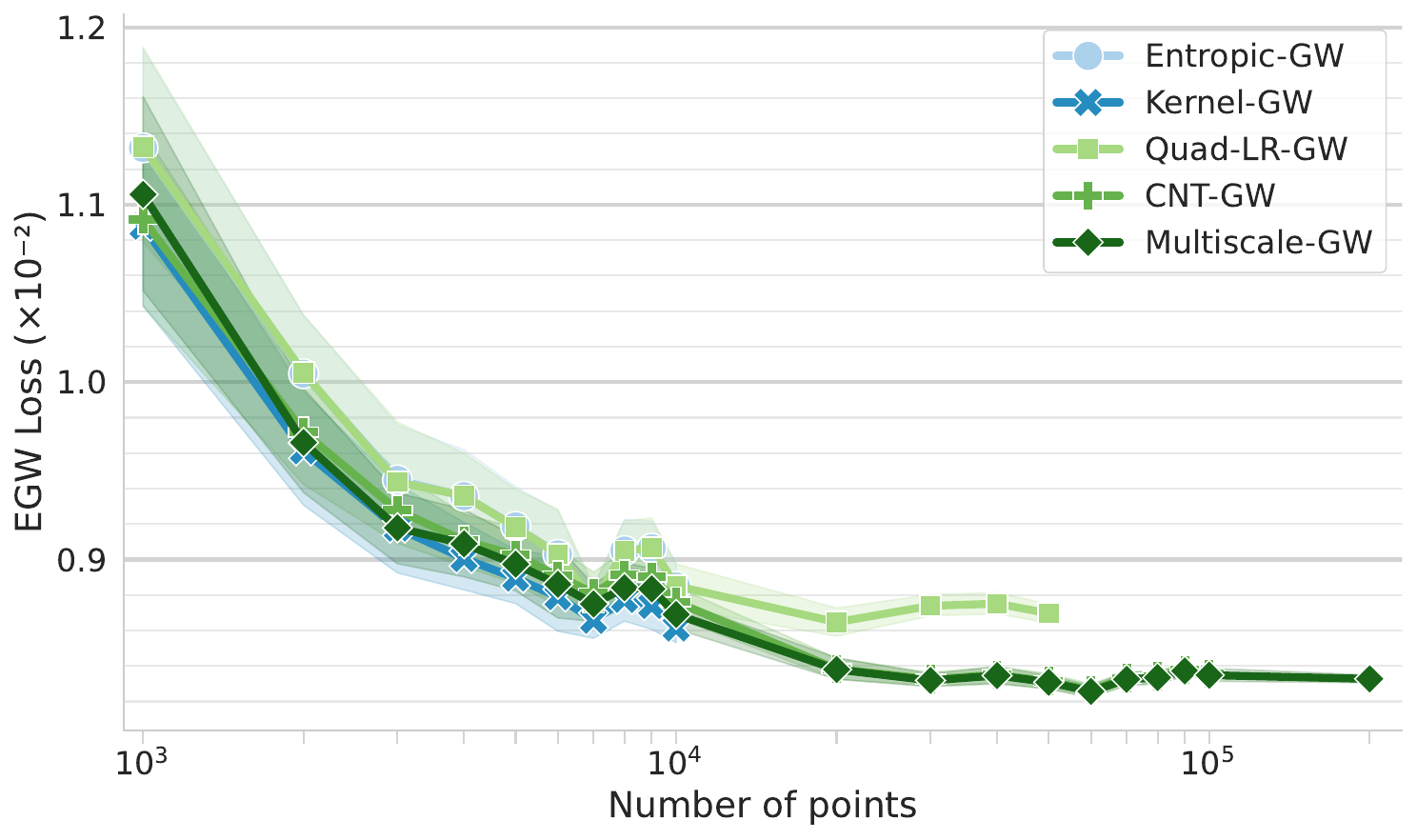}
        \caption{EGW loss of output plans.}
        \label{fig:benchmark_onlyerr}
    \end{subfigure}
    \hfill
    \begin{subfigure}[b]{0.33\linewidth}
            \centering
            \includegraphics[width=\linewidth]{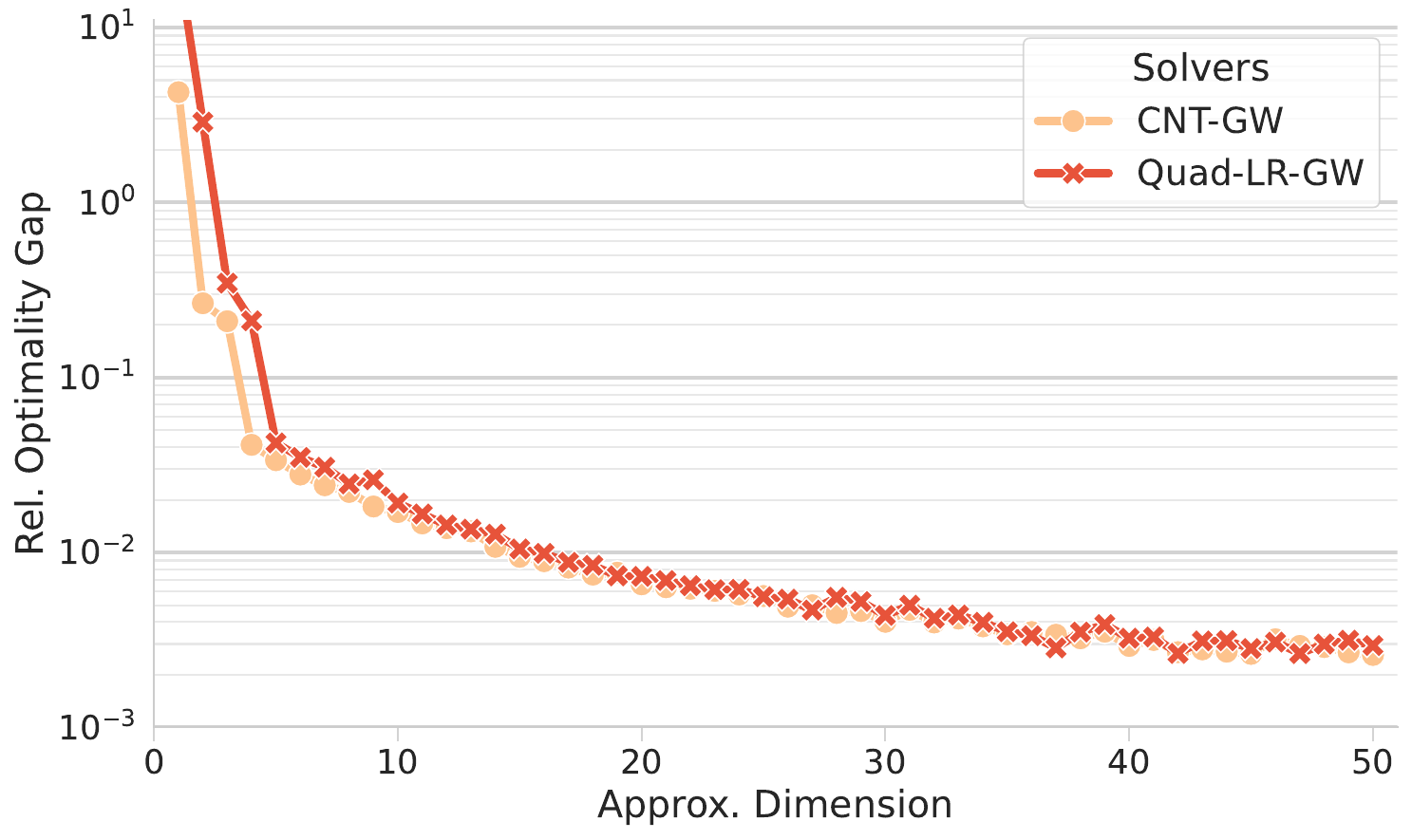}
            \caption{Approximation error in function of $D$.}
            \label{fig:experiments_approxdim}
        \end{subfigure}
    \caption{Benchmark of EGW solvers on inputs uniformly sampled on horse surfaces from the SMAL dataset (with $c_X,c_Y = \norm{\cdot}_{\R^3}$).}
    \label{fig:benchmark}
\end{figure}

%\paragraph{Shapes Used in our Benchmarks.} \cref{fig:shapes_article} represents the pairs of shapes used in the benchmark of \cref{table:benchmark}: \textsc{Horses}, \textsc{Hands}, \textsc{Dogs} and \textsc{Kids}. 
%The pair of \textsc{Faust} is represented in \cref{fig:introduction}.
%Ambient Euclidean costs were used for \textsc{Horses}, \textsc{Hands} and \textsc{Dogs}, whereas we used geodesic costs for \textsc{Kids} and \textsc{Faust}.  
%
%\input{resources/figures/shapes_article}

\paragraph{Visualization of transport plans.} In \cref{fig:hands_texture_comparisons}, we represent the optimal plans produced by the different solvers on the hands shapes (since \textsc{Multiscale-GW} is an acceleration of \textsc{CNT-GW}, we did not represent its output as it would be identical to \textsc{CNT-GW}).
We also visualize in \cref{fig:multiple_texture_comparisons} the results of \textsc{Quadratic-LowRank-GW} and \textsc{CNT-GW} on the other shapes of our benchmark, as they are the only methods able to scale to these inputs. 
We see that all results are qualitatively similar, confirming the quantitative trend observed in \cref{table:benchmark}.
Overall, the texture transfers are smooth and provide meaningful correspondences between the sources and targets.
Some artifacts can be noticed on the Kids and Vessels examples, on the elongate parts of the shapes: these are well-known in shape matching and require more advanced techniques in order to be solved.

\begin{figure*}[h]
\centering
\includegraphics[width=\linewidth]{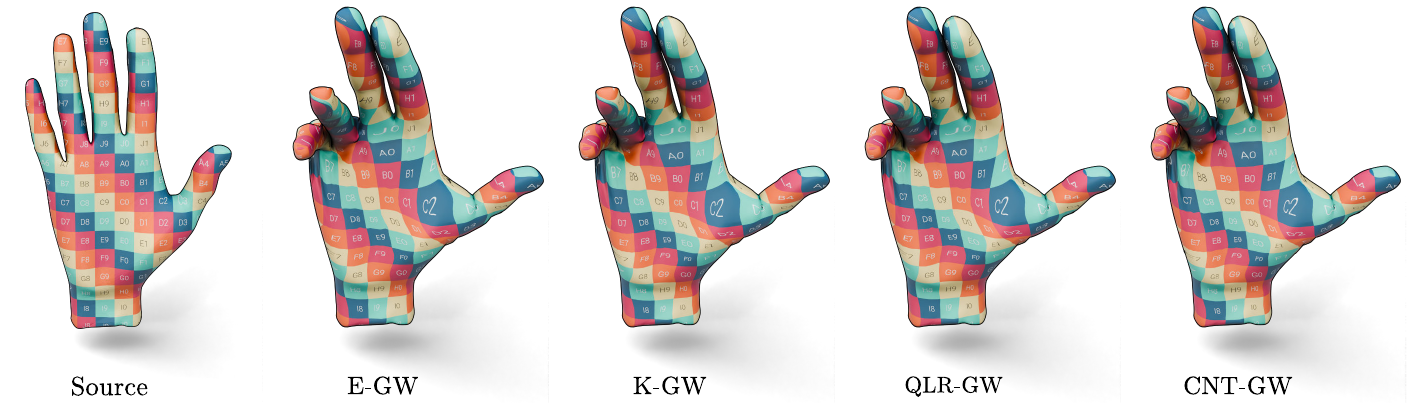} 
    \caption{GW-based texture transfer between hand shapes.}
    \label{fig:hands_texture_comparisons}
\end{figure*}

\begin{figure*}[ht]
\centering
\includegraphics[width=\linewidth]{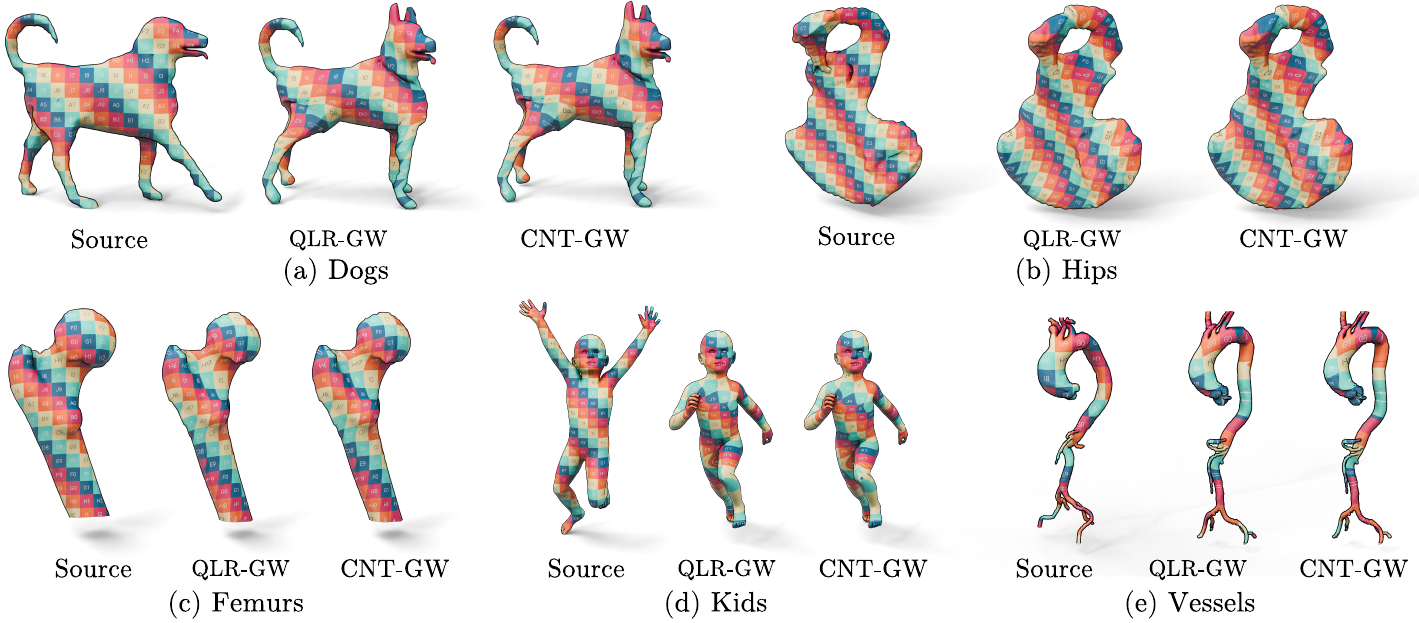} 
    \caption{GW-based texture transfer between the other shapes used in the benchmarks.}
    \label{fig:multiple_texture_comparisons}
\end{figure*}

\paragraph{Impact of Base Costs on EGW Geometry.} We explore the EGW geometries induced by different base costs by displaying a gradient flow between two simple shapes, as we use squared norms, Euclidean norms, root norms and exponential kernels with different radii.
The results are given in \cref{fig:flows_S_C}: more global costs tend to preserve the global ordering of the points during the deformation and create discontinuities to reverse the upper bar of the ``5'' shape.
On the other hand, local costs keep the continuity of the shape, and follow non-trivial paths to push the source shape onto the target one.
The CNT framework is sufficiently expressive to adjust the balance between these local and global-oriented behaviors.

\begin{figure*}[ht]
\centering
\includegraphics[width=\textwidth]{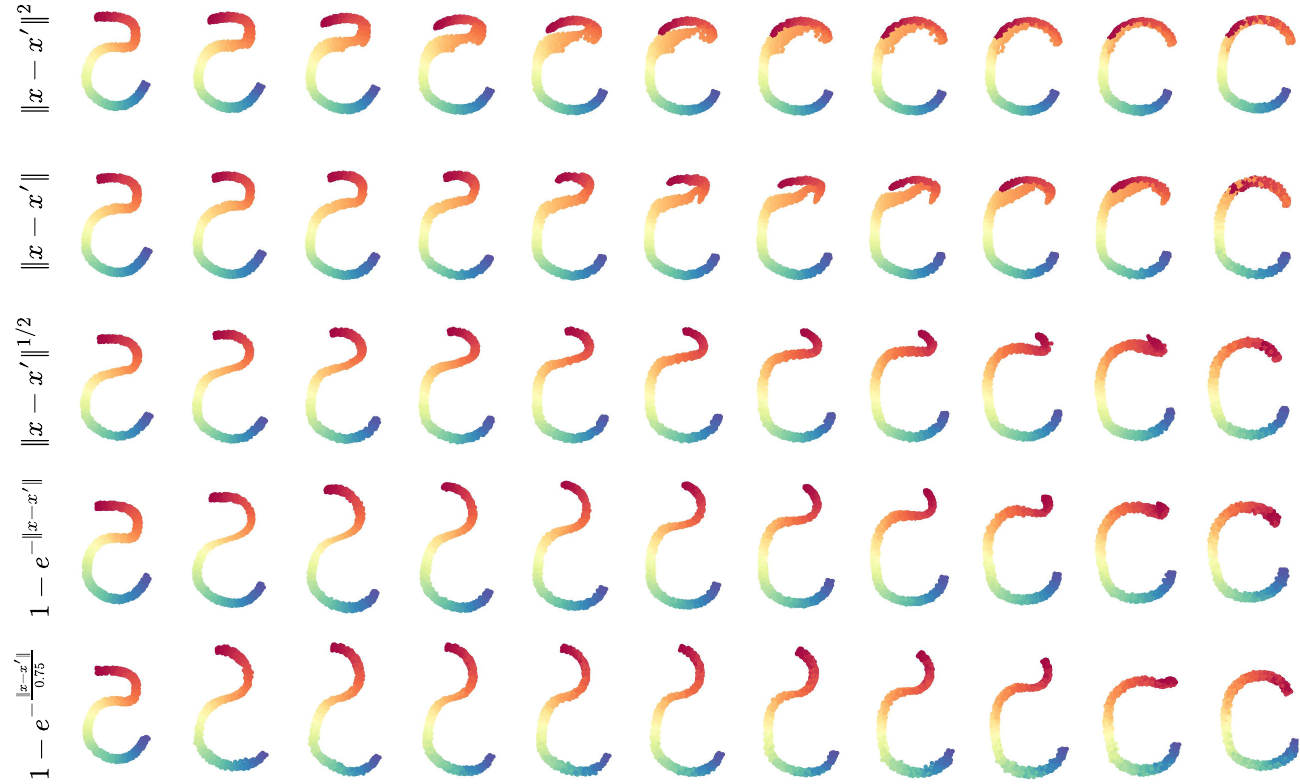} 
    \caption{Gradient flows $\alpha_{t+\delta t} = \alpha_{t} - \delta t \nabla \SGW_\varepsilon(\alpha_t, \beta)$ from a reversed ``5'' shape $\alpha_0$ to a ``C'' shape $\beta$, using different base costs.}
    \label{fig:flows_S_C}
\end{figure*}

In \cref{fig:flows_S_C_approx}, we also plot the flows obtained with approximate gradients (\cref{alg:egw_grad_kernelpca}) using a Kernel PCA dimension of $D = 50$. 
The results are similar to the exact case; however, we also observe several artifacts that highlight the limitations of dimension reduction for precise gradient computations. 
Note that these artifacts do not modify the asymptotic distribution of the point cloud; despite the approximation, all gradient flows eventually converge to the correct shape.

\begin{figure*}[t!]
\centering
\includegraphics[width=\textwidth]{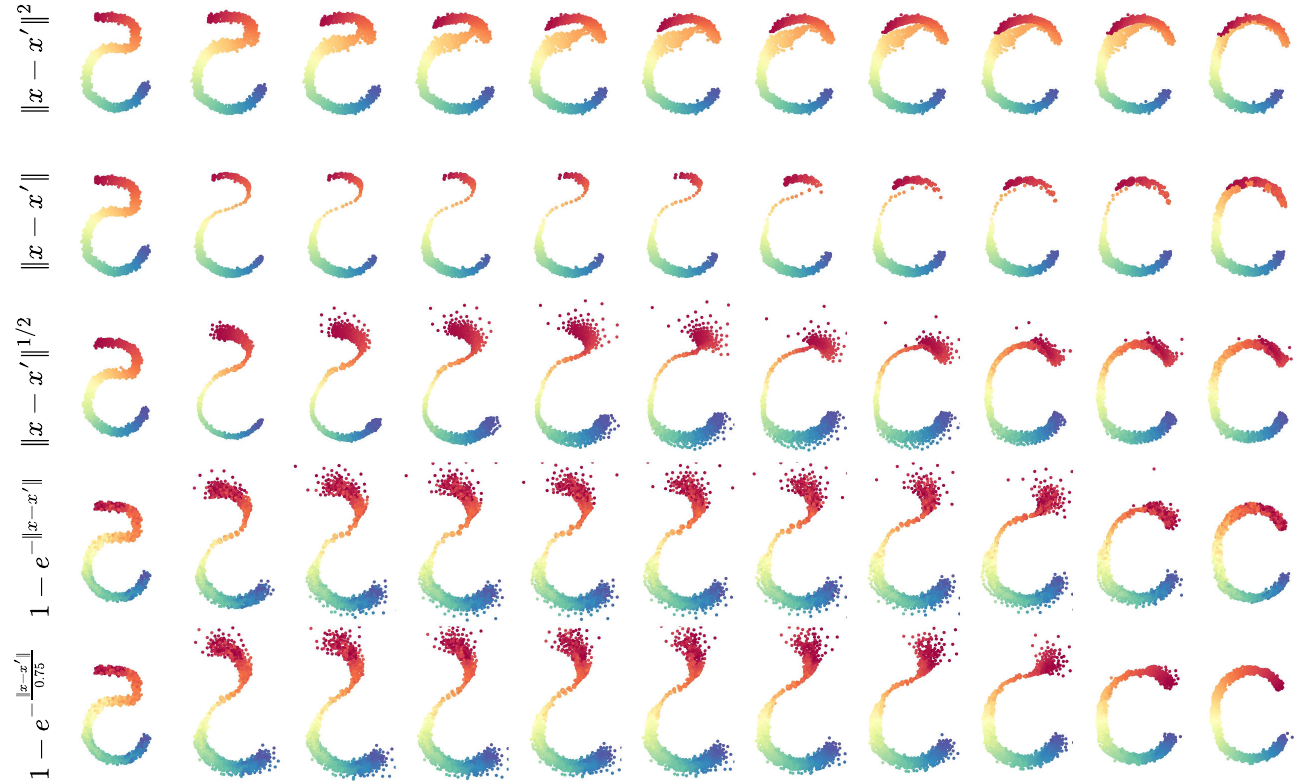} 
    \caption{Gradient flows with approximate gradients obtained via Kernel PCA ($D=50$).}
    \label{fig:flows_S_C_approx}
\end{figure*}

\section{Experiment Details}
\label{appendix:experiment_details}
\paragraph{Numerical Backend.} All solvers were implemented in the \texttt{PyTorch} framework using single precision float numbers (float32) and run on GPU \cite{paszke2017automatic}.
We used the \texttt{KeOps} package to reduce the memory footprint of computations whenever possible.
This includes the implementation of \textsc{CNT-GW} as well as the baselines \textsc{Quadratic-LowRank-GW}, \textsc{Dual-GW} and \textsc{Sampled-GW}.
We used truncated PCA decomposition to compute the low-rank approximation of cost matrices in \textsc{Quadratic-LowRank-GW}.
We systematically use ranks of $\DD = \EE = 20$ for \textsc{Quadratic-LowRank-GW}, matching the choices of dimensions made for \textsc{CNT-GW}. 
We used \texttt{KeOps} to compute this truncated PCA, as well as for Kernel PCA in \textsc{CNT-GW}.
In our main benchmarks, we implemented our algorithms \textsc{Kernel-GW}, \textsc{CNT-GW} and \textsc{Multiscale-GW} using Symmetrized Sinkhorn (\cref{alg:sinkhorn_symm}).
We implemented \textsc{Entropic-GW} and \textsc{Quadratic-LowRank-GW} using standard Sinkhorn (\cref{alg:sinkhorn}), which corresponds to the original implementation as introduced by their authors and avoids all numerical instabilities introduced by symmetrized Sinkhorn to these algorithms  (\cref{subsection:appendix:solver_cvg}).

\paragraph{Libraries.} We also used the \texttt{NumPy} \cite{numpy} and \texttt{SciPy} \cite{2020SciPy-NMeth} libraries for scientific computing.
We relied on \texttt{Matplotlib} \cite{matplotlib}, \texttt{Seaborn} \cite{Waskom2021} and Blender for visualizations.

\paragraph{Computation of Geodesic Embeddings.} To obtain Euclidean approximations of geodesic distances, we implemented the method of  \cite{panozzo2013weighted}.
We decimated input meshes, reducing the number of vertices to $\NN=2,000$.
We computed the exact pairwise geodesic distances on the decimated meshes, and embedded the coarse vertices in $\R^{8}$ using Multi Dimensional Scaling.
We finally interpolated the $8$-dimensional Euclidean coordinates of all vertices by solving a quadratic interpolation problem on the original mesh.
The Euclidean norm $\norm{\cdot}_{\text{geodesic}}$ on this Euclidean embedding approximates true geodesic distances between all pairs of vertices.

\paragraph{Other Implementation Choices.} For \textsc{Multiscale-GW}, we systematically choose a coarsening ratio of $\rho = 0.1$.
Unless specified, we initialize all solvers with the trival transport plan $\pi^0 = \alpha \otimes \beta$ (or $\Gamma^0 = 0$ for dual-based solvers).
We always take $\alpha$ and $\beta$ as uniformly weighted point clouds (i.e. $a_i = 1/\NN$ and $b_j = 1/\MM$ for all $i,j$).

\paragraph{Transport plan visualizations.} To visualize transport plans $\pi$ between 2D or 3D shapes, we either transfer color or texture from the source to the target.
For color transfer, we assign an RGB value $C_\X(x) \in \mathbb{R}^3$ to each point $x$ in the source shape. The color at each target point is then defined as the weighted average:
\begin{equation*}
    \text{For all } y \in \Y, \quad C_\Y(y) = \int C_\X(x) \frac{\diff \pi(x,y)}{\diff \alpha(x) \diff \beta(y)} \diff \alpha(x).
\end{equation*}
Texture transfer follows a similar idea and rely on the notion of \emph{UV mapping} in 3D graphics. Each source point $x$ is assigned UV coordinates $UV_\X(x) \in \mathbb{R}^2$, which are then transported to the target via:
\begin{equation*}
    \text{For all } y \in \Y, \quad UV_\Y(y) = \int UV_\X(x) \frac{\diff \pi(x,y)}{\diff \alpha(x) \diff \beta(y)} \diff \alpha(x).
\end{equation*}
Textures are represented as 2D images, with UV coordinates mapping each vertex of a 3D shape to a location in the image. 
During rendering, these coordinates are interpolated across each triangle of the mesh to produce the final textured surface.

\end{document}